\documentclass[acmsmall,screen,acmthm,nonacm]{acmart}

\usepackage{mathtools}
\usepackage{graphicx}
\usepackage{booktabs}
\usepackage{subcaption}
\usepackage{cleveref}

\usepackage{tikz-cd}
\usepackage{hyperref}

\newcommand{\inferrule}[3][{}]
  {\dfrac
    {\begin{array}{@{}l@{}}#2\end{array}}
    {\,#3\,}
  \;\;\textsc{\small #1}}

\newcommand{\firstness}{\Diamond}
\newcommand{\secondness}{\text{\S}}
\newcommand{\thirdness}{\Box}
\newcommand{\sigmarow}{\Sigma}
\newcommand{\xirow}{\Xi}
\newcommand{\lawlib}{\mathcal{T}}
\newcommand{\register}{\Sigma}
\newcommand{\context}{\Gamma}
\newcommand{\effectrow}[2]{[#1\,;\,#2]}
\newcommand{\attest}{\mathsf{attest}}
\newcommand{\registerop}{\mathsf{register}}

\newcommand{\designate}{\mathsf{designate}}
\newcommand{\promote}{\mathsf{promote}}

\newcommand{\handleop}{\mathsf{handle}}
\newcommand{\withop}{\mathsf{with}}
\newcommand{\letbind}[3]{\mathsf{let}\;#1 = #2\;\mathsf{in}\;#3}

\newcommand{\bind}{\mathbin{>\!\!>\!=}}

\newcommand{\return}{\mathsf{return}}
\newcommand{\denotation}[1]{[\![\, #1 \,]\!]}

\newcommand{\obj}{\mathsf{obj}}
\newcommand{\reg}{\mathsf{reg}}
\newcommand{\obs}[1]{\mathsf{Obs}(#1)}

\newtheorem{theorem}{Theorem}
\newtheorem{lemma}[theorem]{Lemma}
\newtheorem{corollary}[theorem]{Corollary}
\newtheorem{proposition}[theorem]{Proposition}
\newtheorem{definition}[theorem]{Definition}
\newtheorem{remark}[theorem]{Remark}

\title{\texorpdfstring
  {The $\because$-Calculus: Separating Production, Existence, and Interpretation in Computation}
  {The Because-Calculus: Separating Production, Existence, and Interpretation in Computation}
}

\author{Oscar Pérez Mora}
\email{oscar.cx@perez.red}
\orcid{0000-0001-9811-6201}
\affiliation{
  \institution{Universidad de Guadalajara}
  \city{Guadalajara}
  \country{México}
}

\ccsdesc[500]{Software and its engineering~Functional programming}
\ccsdesc[500]{Software and its engineering~Language types}
\ccsdesc[500]{Theory of computation~Logic and verification}
\ccsdesc[300]{Theory of computation~Categorical semantics}
\ccsdesc[300]{Software and its engineering~Semantics}

\begin{document}

\begin{abstract}
Handler calculus conflates resumable and non-resumable effect
operations through a single \texttt{do} construct, distinguished
only by result type annotation. This conflation does not compromise
type safety---progress and preservation hold---but it permits
resumption bindings for non-resumable operations, creating vacuous
bindings that the $\because$-calculus eliminates at compile-time.
The $\because$-calculus structurally separates registration
(non-resumable, void-returning) from attestation (resumable,
non-void-returning) using dual effect rows and level-indexed
typing, rejecting such clauses at compile-time via the Resumption
Subconstraint. We prove the Conflation Theorem: collapsing the
adjoint triple of existential, substitution, and universal functors
into a single effect operation is non-faithful---the erasure from
the $\because$-calculus to handler calculus maps rejected clauses
to accepted ones. Four movements correspond to four natural
transformations; categorical semantics maps each judgment to a
category-theoretic construct. We establish progress, subject
reduction, and tower progress for the full calculus.
\end{abstract}

\keywords{effect systems, categorical semantics, adjoint triples, program verification}

\maketitle

\section{Introduction}
\label{sec:introduction}

Algebraic effects and handlers revolutionized effectful computation by decomposing computational effects into operations and handlers. Handlers intercept operations like \texttt{get} or \texttt{put}, providing witnesses that allow suspended computations to resume. This model enabled unprecedented flexibility: the same effectful computation could be interpreted differently depending on context. Yet handler calculus harbors an undiagnosed conflation that undermines its safety guarantees.

Handler calculus treats resumable operations (the computation
suspends and awaits a witness) and non-resumable operations
(the computation deposits a fact and continues) through the
same \texttt{do} construct, distinguished only by a result
type annotation ($B=0$ vs.\ $B\neq 0$). This does not violate
type safety: progress and preservation hold in handler calculus
regardless of the return type. The consequence is one of
\emph{structural discipline}: handler calculus admits handler
clauses that bind resumptions for void-returning operations.
Since the void type has no inhabitants, such resumptions can
never be meaningfully invoked---the clause is dead code. While
unlikely to appear in deliberate practice, this reveals a
structural weakness: the type system cannot enforce the
distinction between operations that produce witnesses and
those that deposit facts. The $\because$-calculus enforces
this distinction at compile-time through dual effect rows and
the Resumption Subconstraint.

\paragraph{The Diagnosis}

The two kinds of operations correspond to two different adjoints in the quantifier adjunction $\Sigma_f \dashv f^* \dashv \Pi_f$ from categorical logic. Resumable operations are the \emph{right adjoint} (universal, instantiated locally, caught by handlers). Non-resumable operations are the \emph{left adjoint} (existential, pushed forward, passing through handlers). Handler calculus has the left adjunction but lacks the right; it forces both through one channel. This collapse is invisible to the type checker because the system tracks only effect presence, not adjunction identity.

\paragraph{The Remedy}

The $\because$-calculus separates production, existence, and interpretation into three orthogonal judgments. It introduces four movements (assignation, registration, attestation, designation), dual effect rows ($\sigmarow$ for resumable, $\xirow$ for non-resumable), level-indexed environments, and tower orchestration that sequences interpretation phases after computation phases. The conflation disappears because the two adjoints are separate operations with separate typing rules. A handler clause cannot reference a resumption for a $\xirow$-row operation because the typing rule prohibits it. The symbol $\because$ is $\therefore$ rotated: where classical reasoning flows forward (premises $\therefore$ conclusions), the $\because$-calculus inverts Peircean inference (law $\because$ experience), as developed in Section~\ref{sec:inversion}.

\paragraph{Contributions}

This paper makes five contributions:

\begin{enumerate}
  \item \textbf{The $\because$-calculus}: Three judgments, seven typing
    rules, dual effect rows, and level-indexed environments.
  
  \item \textbf{The Conflation Theorem}: Collapsing the adjoint triple
    into a single effect operation is non-faithful---it accepts dead
    code that the separated calculus rejects.
  
  \item \textbf{Categorical semantics}: The calculus embodies the adjoint
    triple $\Sigma_f \dashv f^* \dashv \Pi_f$. Four movements are four
    natural transformations.
  
  \item \textbf{FP/OOP boundary}: The boundary is phasic: reduction
    precedes expansion within each tower level, mediated by the
    distributive law (Theorem~\ref{thm:fp-oop}).
  
  \item \textbf{Metatheory}: Progress, subject reduction, and tower
    progress for the full calculus.
\end{enumerate}

\paragraph{Roadmap}

Section~\ref{sec:background} reviews handler calculus and exposes the conflation. Section~\ref{sec:syntax} presents the $\because$-calculus syntax and typing rules. Section~\ref{sec:category} establishes the categorical model. Section~\ref{sec:conflation} proves the Conflation Theorem. Section~\ref{sec:metatheory} develops the metatheory. Section~\ref{sec:related} compares related work. Section~\ref{sec:conclusion} concludes.

The full proofs appear in the main text (not appendix). All results are formulated to be mechanically verifiable; a Coq formalization is planned.

\section{Background and Motivation}
\label{sec:background}

We begin with a primer on handler calculus, exposing its conflation problem through a concrete example. We then motivate why interpretation (object formation) is missing from the handler model, and finally situate Peircean semiotics as the structural guide that predicts the categorical solution.

\subsection{Algebraic Effects and Handler Calculus}
\label{sec:background:handlers}

Handler calculus models effectful computation through two primitives:
(1) \textbf{Operations} ($\mathsf{op} : A \to B$) inject computational
effects via \texttt{do op v}. (2) \textbf{Handlers} intercept
operations, providing witnesses that resume suspended computations:
\texttt{handle M with \{ op(x,k) => body | ret(y) => return y \}}.

Effects are tracked via effect rows $\varepsilon$. The typing rule for
operation invocation is:
\[
\inferrule[]
  {\Gamma \vdash v : A \quad \mathsf{op} : A \to B}
  {\Gamma \vdash \mathsf{do}\,\mathsf{op}\,v : [\{\mathsf{op}\}]\,B}
\]

Handlers remove operations from the effect row. The crucial observation
is that handler calculus makes no distinction between $B=0$ and
$B\neq 0$ beyond the result type, routing both through the same
\texttt{do} construct.

\subsection{The Missing Layer}
\label{sec:background:missing}

Handler calculus models computation and effect handling but has no
notion of \emph{interpretation}: the application of pre-written laws
to facts to create artifacts. Objects can be modeled as data
structures with methods, but object \emph{formation}---elevating a
raw fact to a typed artifact---has no analogue. This reflects a
categorical gap: handler calculus provides the left adjunction
($\Sigma_f \dashv f^*$) but lacks the right adjunction
($f^* \dashv \Pi_f$). The $\because$-calculus fills this gap with a
third judgment (interpretation) and a distinct operation:
designation ($\uparrow$) creates artifacts from facts using a law
library, complementing attestation ($\downarrow$), which suspends
computation awaiting a handler.

\subsection{Peircean Semiotics as Structural Guide}
\label{sec:background:peirce}

Peirce's triadic semiotics~\cite{peirce-1931} distinguishes three
irreducible categories: Firstness ($\firstness$, possibility),
Secondness (§, existence), and Thirdness ($\thirdness$, law). These
map directly onto the categorical structure: $\firstness$ is the
base category $\mathcal{B}$, § is the fibers $\mathcal{E}_n$, and
$\thirdness$ is the right adjoint $\Pi_f$. The four movements arise
as the four natural transformations of the adjoint triple
(two adjunctions $\times$ two transformations each). Handler
calculus, which lacks the right adjunction, is missing the
interpretation layer entirely. We use Peirce \emph{structurally},
not metaphorically: the semiotic triad predicts the categorical
model and the movement count.

\subsection{The Inversion Thesis}
\label{sec:inversion}

The $\because$-calculus is built on a foundational inversion of Peircean reasoning.
Charles Sanders Peirce identified three irreducible categories of signification:
Firstness ($\firstness$, quality/possibility), Secondness (§, existence/fact), and 
Thirdness ($\thirdness$, law/habit). He further described three modes of inference:
abduction (inferring cause from observation), induction (generalizing law from cases),
and deduction (deriving consequence from law and case).

In classical Peircean scientific reasoning, the flow is discovery-oriented:
facts are observed, laws are induced from patterns, and hypotheses are abducted
to explain anomalies. Experience is sovereign; when facts contradict law, the
law must be revised. The inferential arrows flow forward:
\[
  \text{Fact} \xrightarrow{\text{abduction}} \text{Hypothesis} \xrightarrow{\text{induction}} \text{Law}
\]
This is the direction of the symbol $\therefore$ (therefore): premises lead to conclusions.

The $\because$-calculus inverts this direction:
\begin{itemize}
  \item \textbf{Designation} inverts \textbf{abduction}: where abduction infers 
    causes from facts (observation $\to$ hypothesis), designation imposes artifacts 
    on facts (law $\to$ artifact). The law library is primary; facts are tested 
    against it.
  \item \textbf{Attestation} inverts \textbf{induction}: where induction generalizes 
    laws from cases (cases $\to$ rule), attestation validates computations against 
    pre-written laws (rule $\to$ validation).
  \item \textbf{Assignation} preserves \textbf{deduction}: computation follows 
    laws as given, deriving consequences without reversal.
\end{itemize}

This inversion reflects the programming reality: the programmer as creator imposes
structure on computation. When facts contradict the law, the fact is rejected 
(exception thrown), not the law revised. The symbol $\because$ encodes this 
inversion typographically—it is $\therefore$ rotated: meaning flows backward 
relative to Peirce.

The categorical structure encodes this inversion: the right adjunction 
($f^* \dashv \Pi_f$) governs designation (inverted abduction), the left adjunction 
($\Sigma_f \dashv f^*$) governs attestation (inverted induction), and base-category 
composition governs assignation (deduction).

\begin{table}[h]
\centering
\caption{Classical Peirce vs. $\because$-calculus Inversion}
\label{tab:inversion-comparison}
\small
\begin{tabular}{lll}
\toprule
Inference & Classical Flow & $\because$-calculus Flow \\
\midrule
Abduction & Fact $\to$ Hypothesis & Law $\to$ Artifact \\
Induction & Cases $\to$ Law & Law $\to$ Validation \\
Deduction & Law + Case $\to$ Result & Computation (preserved) \\
\bottomrule
\end{tabular}
\end{table}

\noindent This structural inversion distinguishes the $\because$-calculus from
prior work in effect systems, which implicitly assume the classical ($\because$) direction:
handlers discover how to respond to effects. In the $\because$-calculus, handlers
impose how computation must validate against pre-written laws.

\section{\texorpdfstring{The $\because$-Calculus: Syntax and Typing}{The Because-Calculus: Syntax and Typing}}
\label{sec:syntax}

The $\because$-calculus separates production, existence, and interpretation into three orthogonal judgments. Unlike handler calculus which conflates resumable and non-resumable effects through a single operation, the $\because$-calculus distinguishes them structurally with separate typing rules and evaluation contexts.

\subsection{Syntax}
\label{sec:syntax:terms}

\paragraph{Terms (Computation Layer, $\firstness$).} Computation terms build values and trigger effects:
\[
\begin{array}{rcl@{\qquad}l}
  M, N & ::= & x & \text{variable} \\
       & \mid & \lambda x{:}T.\, M & \text{abstraction} \\
       & \mid & M\,N & \text{application} \\
       & \mid & \mathsf{let}\,x = M\,\mathsf{in}\,N & \text{let-binding} \\
       & \mid & \registerop(e, a, v) & \text{registration (→ §, $B=0$)} \\
       & \mid & \attest\,\ell\,V & \text{attestation (↓, $B\neq 0$)} \\
       & \mid & \handleop\,M\,\withop\,H & \text{handler invocation}
\end{array}
\]

\paragraph{Values.} Values are irreducible computation results and artifacts:
\[
\begin{array}{rcl}
  v, w & ::= & x \mid \lambda x{:}T.\, M \mid T\{\mathsf{fields}, \mathsf{methods}\} \mid (e, a, v) \mid \mathsf{unit}
\end{array}
\]

Artifacts $T\{\dots\}$ are created by designation (Section~\ref{sec:syntax:rules:designate}) and carry suspended method bodies.

\paragraph{Facts (Register Entries, §).} Facts are serializable data deposited into the register:
\[
f ::= (\mathit{entity}, \mathit{attr}, \mathit{value}, \tau)
\]
where $\mathit{entity} : \mathsf{EntityID}$, $\mathit{attr} : \mathsf{AttrName}$, $\mathit{value}$ is serializable (no methods or artifacts), and $\tau \in \mathbb{N}$ is a registration timestamp. \textbf{Critical constraint:} Values in facts cannot contain methods or artifacts.

\paragraph{Interpretive Types (Law Library, $\thirdness$).} Types define how facts become artifacts:
\[
\begin{array}{rcl}
  \mathsf{type}\ T & = & S\ \{ \\
  & & \quad \mathsf{match} : P(V) \\
  & & \quad \mathsf{fields} : \{ f_i = C_i \} \\
  & & \quad \mathsf{methods} : \{ m_j = \mathsf{Body}_j \} \\
  & & \}
\end{array}
\]
Each field computation $C_i$ runs at designation time; each method body $\mathsf{Body}_j$ is attached but not executed until invocation at the next tower level.

\subsection{Three Judgments}
\label{sec:syntax:judgments}

The $\because$-calculus has three distinct judgments, one per realm. They share no inference rules and communicate only through explicit transitions.

\begin{table}[h]
\centering
\caption{The Three Judgments}
\begin{tabular}{llll}
\toprule
Judgment & Notation & Realm & Tracks \\
\midrule
Computation & $\context_n \vdash M : \effectrow{\sigmarow}{\xirow}\,A$ & $\firstness$ & Local variables, effects \\
Interpretation & $\lawlib_n \vdash \designate\,T\,f : T$ & $\thirdness$ & Interpretive type definitions \\
Existence & $\register_n \models f$ & $\secondness$ & Fact membership in register \\
\bottomrule
\end{tabular}
\end{table}

\noindent $\context_n$ is the variable environment at level $n$. $\lawlib_n$ is the law library (interpretive type definitions) at level $n$. $\register_n$ is the register (set of facts) at level $n$. Level indexing ensures that level $n+1$ cannot directly access level $n$'s environments; communication happens only through promotion.

\paragraph{Computation Types.} Terms have computation types with two effect rows:
\[
\effectrow{\sigmarow}{\xirow}\,A
\]
where $\sigmarow$ is the set of resumable operations (attestation, $B\neq 0$) and $\xirow$ is the set of non-resumable operations (registration, $B=0$). $A$ is the base return type. Effect rows track pending operations that must be handled or deposited before the term is considered complete.

\subsection{Typing Rules}
\label{sec:syntax:rules}

\subsubsection{Assignation ($\leftarrow$)}
Variables, abstractions, and applications operate within the computation realm.

\begin{equation}
\inferrule[Var]
  { }
  {\context_n, x{:}A \vdash x : \effectrow{\emptyset}{\emptyset}\,A}
\end{equation}

\begin{equation}
\inferrule[Abs]
  {\context_n, x{:}A \vdash M : \effectrow{\sigmarow}{\xirow}\,B}
  {\context_n \vdash \lambda x{:}A.\, M : \effectrow{\sigmarow}{\xirow}\,(A \to B)}
\end{equation}

\begin{equation}
\inferrule[App]
  {\context_n \vdash M : \effectrow{\sigmarow_M}{\xirow_M}\,(A \to B) \\
   \context_n \vdash N : \effectrow{\sigmarow_N}{\xirow_N}\,A}
  {\context_n \vdash M\,N : \effectrow{\sigmarow_M \cup \sigmarow_N}{\xirow_M \cup \xirow_N}\,B}
\end{equation}

Application merges effect rows from both subterms. This differs from method invocation (Section~\ref{sec:syntax:rules:invoke}) which enforces argument purity.

\subsubsection{Registration (→, $B=0$)}
\label{sec:syntax:rules:register}

Registration deposits a fact into the register. The $\xirow$ row records the non-resumable effect.

\begin{equation}
\inferrule[Reg]
  {\context_n \vdash e : \mathsf{EntityID} \quad
   \context_n \vdash a : \mathsf{AttrName} \quad
   \context_n \vdash v : S}
  {\context_n \vdash \registerop(e, a, v) : \effectrow{\emptyset}{\{\mathsf{reg}\}}\,\mathsf{Unit}}
\end{equation}

The $\xirow$ gains $\{\mathsf{reg}\}$; the $\sigmarow$ remains empty. No resumption is possible. Once registered, the fact is permanent.

\subsubsection{Attestation (↓, $B\neq 0$)}
\label{sec:syntax:rules:attest}

Attestation suspends computation, awaiting a witness from a handler.

\begin{equation}
\inferrule[Attest]
  {\context_n \vdash V : A \quad (\ell : A \to B) \in \sigmarow}
  {\context_n \vdash \attest\,\ell\,V : \effectrow{\{\ell\}}{\emptyset}\,B}
\end{equation}

The $\sigmarow$ gains $\{\ell\}$; the $\xirow$ remains empty. The computation suspends and waits for a handler to provide a value of type $B$. If no handler exists, the term is stuck at runtime.

\subsubsection{Designation (↑)}
\label{sec:syntax:rules:designate}

Designation interprets a fact against a type schema, producing an artifact.

\begin{equation}
\inferrule[Designate]
  {\register_n \models (e, a, v) : S \\
   \lawlib_n \ni T = S\{\mathsf{match}: P, \mathsf{fields}: \{f_i = C_i\}, \mathsf{methods}: \{m_j = M_j\}\} \\
   P(v)\text{ holds}}
  {\lawlib_n \vdash \designate\,T\,(e, a, v) : T}
\end{equation}

Conditions: (1) A fact exists in the register; (2) An interpretive type $T$ is selected from the law library; (3) The match predicate $P$ succeeds. Field computations run immediately; method bodies attach but do not execute.

\begin{proposition}[Designation Determinacy]
\label{prop:designate-det}
Given fixed $\Delta\register$ and fixed $\lawlib_n$, the set
$\mathcal{A} = \{\designate(T, f) \mid T \in \lawlib_n, f \in \Delta\register, T.\mathsf{match}(f)\}$
is uniquely determined. Designation is a deterministic partial function
from $\Delta\register \times \lawlib_n$ to artifacts.
\end{proposition}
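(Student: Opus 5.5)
The plan is to show separately that designation is a deterministic partial function on pairs $(T,f)$, and then that $\mathcal{A}$ is determined by $\Delta\register$ and $\lawlib_n$. Both steps read off the shape of the \textsc{Designate} rule. It is the only rule with conclusion $\lawlib_n \vdash \designate\,T\,f : T$, so any derivation for a given pair must end in it. Its premises are the membership $\register_n \models (e,a,v)$, the lookup of $T$ in $\lawlib_n$, and the predicate $P(v)$. None of these involves a choice once $T$, $f$ and $\lawlib_n$ are fixed.

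\textbf{Step 1: uniqueness of the schema.} I take the law library to be a finite map from type names to schemas $S\{\mathsf{match}:P,\mathsf{fields}:\{f_i=C_i\},\mathsf{methods}:\{m_j=M_j\}\}$, so that $\lawlib_n \ni T = \ldots$ picks out a unique schema. The paper treats $\lawlib_n$ as a set of type definitions; if that set could contain two definitions of the same name $T$, determinacy fails. I therefore state well-formedness of $\lawlib_n$ (distinct names) explicitly as the standing assumption on which the proposition rests.

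\textbf{Step 2: definedness is decided.} $\designate(T,f)$ is defined exactly when $f\in\Delta\register$ and $P(v)$ holds. Membership in the fixed set $\Delta\register$ is a fixed truth value. $P$ is part of the fixed schema, so $P(v)$ is determined by $v$. Hence the domain of the partial function is determined, and so is the index set of $\mathcal{A}$, namely the pairs $(T,f)$ with $T.\mathsf{match}(f)$.

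\textbf{Step 3: the artifact value is unique.} On the domain, the artifact is $T\{f_i = \mathrm{val}(C_i),\ m_j = M_j\}$. Method bodies are attached syntactically, unevaluated, so they are fixed by the schema. The field values $\mathrm{val}(C_i)$ are the main obstacle. Each $C_i$ runs at designation time, so I must argue that its evaluation is deterministic and cannot observe anything beyond $v$, $\Delta\register$ and $\lawlib_n$. I would argue this in two parts.
\begin{enumerate}
\item Field computations are pure with respect to attestation, having empty $\sigmarow$. Otherwise a handler choice would be an extra input.
\item The small-step reduction on the computation layer is deterministic, since evaluation contexts are unique (a standard unique-decomposition lemma). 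Registrations performed inside $C_i$ only add facts, under the timestamp discipline, to the next level's register, and so cannot affect the value returned.
\end{enumerate}
If the paper's later operational semantics does not force this, I will add the restriction $C_i : \effectrow{\emptyset}{\xirow}$ as a side condition and note it.

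\textbf{Step 4: conclusion.} With Steps 1--3, $\designate$ is a function on its determined domain. $\mathcal{A}$ is its image over the fixed finite set $\lawlib_n\times\Delta\register$, so $\mathcal{A}$ is uniquely determined. This is a set image, so no ordering is involved.
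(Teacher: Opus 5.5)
Your proposal is correct and follows the same route as the paper: laws and facts are selected by membership in the fixed sets $\lawlib_n$ and $\Delta\register$, the match predicate is a function, and each matching pair yields exactly one artifact, so $\mathcal{A}$ is the image of a function on $\lawlib_n \times \Delta\register$. The difference is one of rigor. The paper asserts that last point without justification, and it cites finiteness via Lemma~\ref{lem:bounded-growth}, which determinacy does not need. You justify it by making explicit the well-formedness of $\lawlib_n$ and the deterministic, attestation-free evaluation of the field computations $C_i$; if some $C_i$ can fail to terminate, just remove those pairs from the domain in your Step~2, and the domain stays determined because reduction is deterministic.
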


\begin{proof}
By construction:
1. $\Delta\register$ is a finite set of facts (by Lemma~\ref{lem:bounded-growth}).
2. $\lawlib_n$ is a finite set of laws (per-level scoping).
3. $\mathsf{match}$ is a deterministic predicate (type-checking).
4. Each matching pair $(T, f)$ produces exactly one artifact $T\{\dots\}$.

No nondeterminism exists in the selection process:
- Facts are chosen by membership in $\Delta\register$.
- Laws are chosen by membership in $\lawlib_n$.
- Matching is a function (returns true/false).

Thus $\mathcal{A}$ is a function of $(\Delta\register, \lawlib_n)$ alone, uniquely determined.

\end{proof}

\paragraph{Sovereignty of Law.}
The law $T$ is primary; the fact $f$ is tested against it. If the
match predicate $P(v)$ fails, no artifact is produced and the fact
does not advance to the next level. This reverses classical
inductive reasoning, where facts would revise the law. In the
$\because$-calculus, the programmer as creator enforces law on
experience; experience that contradicts law is rejected, not the
law itself. See Section~\ref{sec:inversion} for the full inversion
thesis.

\subsubsection{Method Invocation}
\label{sec:syntax:rules:invoke}

Methods execute at the next tower level, with pure arguments.

\begin{equation}
\inferrule[Invoke]
  {\context_{n+1} \vdash \mathsf{obj} : T \quad
   \lawlib_n(T, m) = M_{\mathsf{body}} \quad
   \context_{n+1} \vdash \mathsf{arg} : \effectrow{\emptyset}{\emptyset}\,B}
  {\context_{n+1} \vdash \mathsf{obj}.m(\mathsf{arg}) : \effectrow{\sigmarow_m}{\xirow_m}\,C}
\end{equation}

Arguments must be pure. Methods execute in the invocation context ($\firstness_{n+1}$), not the definition context ($\thirdness_n$). Effects propagate to the caller.

\subsubsection{Handler Invocation}
\label{sec:syntax:rules:handle}

Handlers intercept $\sigmarow$ operations but pass $\xirow$ operations through.

\begin{equation}
\inferrule[Handle]
  {\context_n \vdash M : \effectrow{\sigmarow \cup L}{\xirow}\,A \\
   L = \{\ell_1 : A_1 \to B_1, \dots, \ell_k : A_k \to B_k\} \subseteq \sigmarow \\
   \forall i:\ x_i{:}A_i, r_i{:}B_i \to C \vdash H_i : \effectrow{\sigmarow'_i}{\xirow'_i}\,C \\
   y{:}A \vdash H_{\mathsf{ret}} : \effectrow{\sigmarow'_{\mathsf{ret}}}{\xirow'_{\mathsf{ret}}}\,C \\
   \sigmarow'' = (\sigmarow \cup \bigcup_i \sigmarow'_i \cup \sigmarow'_{\mathsf{ret}}) \setminus L \\
   \xirow'' = \xirow \cup \bigcup_i \xirow'_i \cup \xirow'_{\mathsf{ret}} \\
   \forall i:\ \text{$H_i$ is state-affine}\footnote{Registrations in $H_i$
     do not depend on intermediate attestation results within $H_i$.}}
  {\context_n \vdash \handleop\,M\,\withop\,H : \effectrow{\sigmarow''}{\xirow''}\,C}
\end{equation}

Where $H = \{\ell_i(x_i, r_i) \mapsto H_i\}_{i=1}^k \cup \{\mathsf{ret}(y) \mapsto H_{\mathsf{ret}}\}$. The handler removes operations in $L$ from the $\sigmarow$ row.

\paragraph{Resumption Subconstraint.} Operations in the $\xirow$ row have no resumption. A handler clause cannot reference $r$ for a $\xirow$ operation. This prevents the conflation described in Section~\ref{sec:conflation}.

\subsubsection{Promotion}

Promotion transfers an artifact from interpretation to the next computation level.

\begin{equation}
\inferrule[Promote]
  {\lawlib_n \vdash \designate\,T\,f : T}
  {\context_{n+1} \vdash \promote(\designate\,T\,f) : T}
\end{equation}

Promotion activates suspended method bodies (they become invocable). It does not transfer the register or law library.

\subsection{Operational Semantics}
\label{sec:syntax:reduction}

The reduction relation operates on configurations
$\langle M,\;\register_n\rangle$, threading the register through
computation:

\begin{align*}
(\lambda x{:}T.\,M)\,V &\to M[V/x]
  \tag{\scshape Beta} \\
(\registerop(e,a,v),\;\register_n) &\to
  (\mathsf{unit},\;\register_n \cup \{(e,a,v,\tau)\})
  \tag{\scshape Reg} \\
\handleop\,V\,\withop\,H &\to H_{\mathsf{ret}}[V/y]
  \tag{\scshape Ret} \\
\handleop\,E[\attest\,\ell\,V]\,\withop\,H &\to
  E[H_\ell(V, r)]
  \tag{\scshape Attest} \\
\mathsf{let}\;x = V\;\mathsf{in}\;N &\to N[V/x] \\
M \;;\; N &\triangleq \mathsf{let}\;\_ = M\;\mathsf{in}\;N
\end{align*}

where $r = \lambda x.\,E[x]$ in rule \textsc{(Attest)}.

Contextual closure for $E$-contexts (caught by handlers):
\[
\frac{M \to M'}{E[M] \to E[M']}
\]

Pass-through for $F$-contexts (registration passes through handlers):
\[
F[\registerop(e,a,v)] \to F[\mathsf{unit}],
\quad\text{depositing } (e,a,v,\tau) \text{ to } \register_n
\]

\paragraph{Realm Transition.}
The computation realm ($\firstness$) and the interpretation realm
($\thirdness$) interact through the tower orchestration rules
(\S\ref{sec:metatheory:tower}). During a computation phase at level
$n$, designation does not occur---the computation reduces until it
produces a value or hangs on an unhandled attestation. After the
computation phase completes, the tower extracts new facts
($\Delta\register$) and attempts designation for each
$f \in \Delta\register$ against all $T \in \lawlib_n$. This is the
expansion phase. The two realms do not interleave within a single
reduction sequence; they alternate at the tower level.

\subsection{Auxiliary Lemmas}
\label{sec:syntax:lemmas}

\begin{lemma}[Substitution]
\label{lem:substitution}
If $\context_n, x{:}A \vdash M : \effectrow{\sigmarow}{\xirow}\,B$
and $\context_n \vdash V : \effectrow{\emptyset}{\emptyset}\,A$
(i.e., $V$ is a pure value), then
$\context_n \vdash M[V/x] : \effectrow{\sigmarow}{\xirow}\,B$.
\end{lemma}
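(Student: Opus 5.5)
The plan is structural induction on the derivation of $\context_n, x{:}A \vdash M : \effectrow{\sigmarow}{\xirow}\,B$, generalizing over the context so that the induction hypothesis applies under binders. Concretely, I would prove the stronger statement: if $\context_n, x{:}A, \Delta \vdash M : \effectrow{\sigmarow}{\xirow}\,B$ and $\context_n \vdash V : \effectrow{\emptyset}{\emptyset}\,A$, then $\context_n, \Delta \vdash M[V/x] : \effectrow{\sigmarow}{\xirow}\,B$. Two routine auxiliary facts go first, each by its own induction on derivations.
\begin{itemize}
\item \emph{Weakening:} $V$ remains typable at $\effectrow{\emptyset}{\emptyset}\,A$ in $\context_n,\Delta$ whenever the variables of $\Delta$ are fresh.
\item \emph{Exchange:} reordering non-dependent bindings preserves typing.
\end{itemize}
Substitution is capture-avoiding, so I assume Barendregt's convention: bound variables are chosen distinct from $x$ and from the free variables of $V$.

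The cases then go through one at a time.
\begin{itemize}
\item \textsc{Var}: if $M = x$, then $M[V/x] = V$, and the hypothesis on $V$ gives exactly the row $\effectrow{\emptyset}{\emptyset}$ demanded by \textsc{Var}. This is the single place where purity of $V$ is essential: substituting an effectful term would enlarge the rows. If $M = y \neq x$, the result follows from \textsc{Var} directly.
\item \textsc{Abs}, and let-binding treated as derived application: apply the induction hypothesis with $\Delta$ extended by the bound variable, then reapply the rule.
\item \textsc{App}: apply the induction hypothesis to both premises. The unions $\sigmarow_M \cup \sigmarow_N$ and $\xirow_M \cup \xirow_N$ are unchanged, because each subterm keeps its rows.
\item \textsc{Reg} and \textsc{Attest}: the premises type the components $e,a,v$ and $V'$ as values. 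Substitution into them, via the induction hypothesis on those premises, preserves their types, so the conclusion rows $\effectrow{\emptyset}{\{\mathsf{reg}\}}$ and $\effectrow{\{\ell\}}{\emptyset}$ are reproduced verbatim. The side condition $(\ell : A\to B)\in\sigmarow$ does not mention terms.
\item \textsc{Invoke}: this rule lives at level $n+1$, while $\lawlib_n(T,m)$ is a level-$n$ lookup untouched by substitution. The induction hypothesis applies to $\mathsf{obj}$ and to $\mathsf{arg}$, and purity of $\mathsf{arg}$ is preserved because its row is preserved.
\item \textsc{Promote}: the premise is an interpretation judgment in $\lawlib_n$ and does not mention $\context_{n+1}$. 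So $x \notin \mathrm{fv}$ of the promoted term, and the substitution is the identity. Level indexing is exactly what makes this case trivial.
\end{itemize}

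The main obstacle is \textsc{Handle}. Its clause premises are typed in contexts $x_i{:}A_i, r_i{:}B_i\to C$ and $y{:}A$ that, as written, omit $\context_n$. I would first read these premises as $\context_n, x_i{:}A_i, r_i{:}B_i \to C \vdash H_i$ and $\context_n, y{:}A \vdash H_{\mathsf{ret}}$. If instead they are read literally as closed, then $x$ is not free in any clause and substitution leaves them unchanged, which is even easier. In both readings:
\begin{itemize}
\item apply the induction hypothesis to $M$, and to each $H_i$ with $\Delta$ extended by $x_i, r_i$;
\item recompute $\sigmarow''$ and $\xirow''$ and observe that they are literally the same sets, since every constituent row is preserved and $L$ is unchanged;
\item check that the Resumption Subconstraint and state-affinity are preserved. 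Substituting a pure value introduces neither new references to a resumption variable nor new registrations, and it creates no new dependence of registrations on attestation results, since $V$ has empty $\xirow$ and $\sigmarow$.
\end{itemize}
I expect this last verification, that the footnoted state-affinity condition is stable under substitution of pure values, to need the most care. My first attempt would be to formalize ``depends on'' as free occurrence of attestation-bound variables in the arguments of $\registerop$, and show by a small induction that substituting a term with no such variables cannot create one.
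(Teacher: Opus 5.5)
Your proposal is correct and follows the same route as the paper: induction on the typing derivation of $M$, with purity of $V$ used in the \textsc{Var} case and the remaining rules, including \textsc{Handle} with its unchanged row computation and Resumption Subconstraint, closed off by the induction hypothesis. Your version is more careful than the paper's sketch, but these are refinements rather than a different argument. You generalize over a trailing context $\Delta$ with weakening so the hypothesis genuinely applies under binders. You note that \textsc{Var} forces the rows to be exactly $\effectrow{\emptyset}{\emptyset}$ rather than appealing to an inclusion. And you treat \textsc{Promote} as substitution into a closed term rather than ``by IH''; no IH is available there, since its premise is an interpretation judgment.
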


\begin{proof}
By induction on the typing derivation of $M$. The purity
requirement ($V : \effectrow{\emptyset}{\emptyset}\,A$) ensures
that substituting $V$ introduces no new effects.
\begin{itemize}
  \item \textsc{Var}: If $M = x$, then $M[V/x] = V$ and
    $\context_n \vdash V : \effectrow{\emptyset}{\emptyset}\,A
    \subseteq \effectrow{\sigmarow}{\xirow}\,B$.
    If $M = y \neq x$, substitution is trivial.
  \item \textsc{Abs}: $\lambda y{:}C.\,M'[V/x] =
    \lambda y{:}C.\,(M'[V/x])$. By IH on $M'$ (with $y$ fresh).
  \item \textsc{App}: $(M_1\,M_2)[V/x] = M_1[V/x]\,M_2[V/x]$.
    By IH on both subterms. Effect row union is preserved.
  \item \textsc{Reg}: $\registerop(e,a,v')[V/x] =
    \registerop(e[V/x],\,a[V/x],\,v'[V/x])$.
    The result type is $\mathsf{Unit}$ and $\xirow = \{\mathsf{reg}\}$
    regardless of $V$. Since $V$ is pure, no effects are introduced.
  \item \textsc{Attest}: $(\attest\,\ell\,V')[V/x] =
    \attest\,\ell\,(V'[V/x])$. By IH on $V'$.
    $\sigmarow = \{\ell\}$, $\xirow = \emptyset$ regardless.
  \item \textsc{Handle}: $(\handleop\,M'\,\withop\,H)[V/x] =
    \handleop\,M'[V/x]\,\withop\,H[V/x]$. By IH on $M'$ and each
    handler clause. The Resumption Subconstraint is preserved:
    substitution does not move operations between $\sigmarow$ and
    $\xirow$.
  \item \textsc{Invoke, Promote}: Follow similarly by IH.
\end{itemize}

\end{proof}

\begin{lemma}[Canonical Forms]
\label{lem:canonical-forms}
\begin{itemize}
  \item If $\vdash V : \effectrow{\emptyset}{\emptyset}\,(A \to B)$,
    then $V = \lambda x{:}A.\,M$ for some $m$.
  \item If $\vdash V : \effectrow{\emptyset}{\emptyset}\,\mathsf{Unit}$,
    then $V = \mathsf{unit}$.
  \item If $\vdash V : \effectrow{\emptyset}{\emptyset}\,T$,
    then $V = T\{\mathsf{fields}, \mathsf{methods}\}$ (artifact).
\end{itemize}
\end{lemma}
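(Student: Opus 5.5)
The plan is the standard canonical-forms argument: case analysis on the last rule of the typing derivation of the closed value $V$, with the empty context and empty effect rows cutting down the applicable rules. Since $V$ is a value, its shape is one of $x$, $\lambda x{:}T.\,M$, $T\{\mathsf{fields},\mathsf{methods}\}$, $(e,a,v)$, or $\mathsf{unit}$. I will match each shape against the rules that can assign it a type, and read off the type each rule produces.

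\textbf{Eliminating shapes by rule.} First, the variable case is vacuous, because \textsc{Var} needs $x{:}A$ in the context and the context is empty. Second, the non-value-forming rules can never be the last rule for a value: \textsc{App}, \textsc{Reg}, \textsc{Attest}, \textsc{Handle} and \textsc{Invoke} conclude with applications, registrations, attestations, handler invocations or method calls, none of which are values. This leaves three value-forming rules. \textsc{Abs} produces only arrow types $A\to B$. The (implicit) introduction for $\mathsf{unit}$ produces only $\mathsf{Unit}$. Artifacts enter the computation judgment only through \textsc{Promote}, applied to a successful \textsc{Designate}, and so have only interpretive types $T$. Finally, fact tuples $(e,a,v)$ have the serializable type $S$, which the Facts paragraph requires to contain no methods or artifacts, so $S$ is neither an arrow nor some $T$.

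\textbf{Reading off each clause.} For an arrow type, the last rule must be \textsc{Abs}, so $V=\lambda x{:}A.\,M$, with the binder annotation matching $A$ by inversion. For $\mathsf{Unit}$, only the unit introduction applies, since \textsc{Reg} has type $\mathsf{Unit}$ but is not a value. For an interpretive type $T$, only \textsc{Promote} applies, yielding $T\{\mathsf{fields},\mathsf{methods}\}$. The empty-row hypothesis $\effectrow{\emptyset}{\emptyset}$ is used only to discard the effectful rules, and the value syntax already guarantees this.

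\textbf{Main obstacle.} The difficulty is that the paper's typing rules are not fully explicit. There is no stated rule for $\mathsf{unit}$, nor for fact tuples, and \textsc{Promote} concludes a bare $T$ rather than $\effectrow{\emptyset}{\emptyset}\,T$. I would first fix these conventions: read \textsc{Promote}'s conclusion as having empty rows, and add the evident axiom $\vdash\mathsf{unit}:\effectrow{\emptyset}{\emptyset}\,\mathsf{Unit}$. I would also assume disjointness of the syntactic type classes: arrows, $\mathsf{Unit}$, serializable $S$, and interpretive $T$. This requires that no type alias $\mathsf{type}\ T = S$ makes an interpretive type coincide with its underlying schema $S$, which should hold by treating $T$ nominally. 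Once the conventions are fixed, the case analysis goes through directly.
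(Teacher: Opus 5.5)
Your proposal is correct and is the same argument as the paper's, which proves the lemma ``by inspection of the typing rules.'' In that proof, only \textsc{Abs} yields an arrow type, only the unit constant yields $\mathsf{Unit}$, only the designation route yields an interpretive type $T$, and no other rule types a value in a pure context. Your case analysis just makes this explicit: you discharge \textsc{Var} via the empty context, rule out the non-value-forming rules, and name the implicit conventions the paper's one-line proof silently assumes (a unit axiom, empty rows on \textsc{Promote}, nominal disjointness of type classes), and you are right that in the computation judgment artifacts arrive through \textsc{Promote} rather than through \textsc{Designate} directly, as the paper phrases it.
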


\begin{proof}
By inspection of the typing rules. Only \textsc{Abs} derives type
$A \to B$; only the unit constant has type $\mathsf{Unit}$; only
\textsc{Designate} produces artifacts. No other rule can produce
these types in a pure context.

\end{proof}

\begin{lemma}[Inversion]
\label{lem:inversion}
\begin{itemize}
  \item If $\context_n \vdash \lambda x{:}A.\,M :
    \effectrow{\sigmarow}{\xirow}\,(A \to B)$, then
    $\context_n, x{:}A \vdash M : \effectrow{\sigmarow}{\xirow}\,B$.
  \item If $\context_n \vdash M\,N :
    \effectrow{\sigmarow}{\xirow}\,B$, then
    $\context_n \vdash M : \effectrow{\sigmarow_M}{\xirow_M}\,(A \to B)$
    and $\context_n \vdash N : \effectrow{\sigmarow_N}{\xirow_N}\,A$
    with $\sigmarow = \sigmarow_M \cup \sigmarow_N$ and
    $\xirow = \xirow_M \cup \xirow_N$.
  \item If $\context_n \vdash \registerop(e,a,v) :
    \effectrow{\sigmarow}{\xirow}\,\mathsf{Unit}$, then
    $\sigmarow = \emptyset$ and $\xirow = \{\mathsf{reg}\}$.
  \item If $\context_n \vdash \attest\,\ell\,V :
    \effectrow{\sigmarow}{\xirow}\,B$, then
    $\sigmarow = \{\ell\}$ and $\xirow = \emptyset$.
  \item If $\context_n \vdash \handleop\,M\,\withop\,H :
    \effectrow{\sigmarow''}{\xirow''}\,C$, then
    $\context_n \vdash M : \effectrow{\sigmarow \cup L}{\xirow}\,A$
    with $L \subseteq \sigmarow$ and
    $\sigmarow'' = (\sigmarow \cup \sigmarow_H) \setminus L$.
\end{itemize}
\end{lemma}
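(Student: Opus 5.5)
The Inversion Lemma follows by case analysis on the last rule of the typing derivation. The judgment $\context_n \vdash M : \effectrow{\sigmarow}{\xirow}\,A$ is syntax-directed: each term former is the conclusion of exactly one rule among \textsc{Var}, \textsc{Abs}, \textsc{App}, \textsc{Reg}, \textsc{Attest}, \textsc{Handle}, \textsc{Invoke}, and \textsc{Promote}. There is no subsumption or effect-weakening rule. So for each clause I examine the shape of the subject term, conclude that only one rule can have produced it, and read off that rule's premises. No induction is needed; the proof is a single inspection step for each bullet.

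The cases go as follows. For $\lambda x{:}A.\,M$, only \textsc{Abs} applies. Its premise gives $\context_n, x{:}A \vdash M : \effectrow{\sigmarow}{\xirow}\,B$, with the rows copied verbatim into the conclusion. For $M\,N$, only \textsc{App} applies. Its two premises give the typings of $M$ and $N$ for some $A$, $\sigmarow_M$, $\sigmarow_N$, $\xirow_M$, $\xirow_N$. Matching the conclusion's rows against ours gives $\sigmarow = \sigmarow_M \cup \sigmarow_N$ and $\xirow = \xirow_M \cup \xirow_N$. For $\registerop(e,a,v)$, only \textsc{Reg} applies, and its conclusion fixes the rows as $\emptyset$ and $\{\mathsf{reg}\}$. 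Likewise \textsc{Attest} fixes the rows of $\attest\,\ell\,V$ as $\{\ell\}$ and $\emptyset$. For $\handleop\,M\,\withop\,H$, only \textsc{Handle} applies. Its premises provide $L \subseteq \sigmarow$ and the typing of $M$ at $\effectrow{\sigmarow \cup L}{\xirow}\,A$. Setting $\sigmarow_H := \bigcup_i \sigmarow'_i \cup \sigmarow'_{\mathsf{ret}}$ gives $\sigmarow'' = (\sigmarow \cup \sigmarow_H)\setminus L$ exactly.

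The main obstacle is the claim that the rules are syntax-directed. This needs care in two places. First, \textsc{Promote} and \textsc{Invoke} conclude with terms of their own distinct forms ($\promote(\cdot)$ and $\mathsf{obj}.m(\cdot)$), so they cannot overlap with the five forms treated here. Second, the \textsc{Var} case of Lemma~\ref{lem:substitution} relies on an implicit inclusion $\effectrow{\emptyset}{\emptyset} \subseteq \effectrow{\sigmarow}{\xirow}$. If that inclusion is meant as an admissible subsumption rule, the \textsc{Reg} and \textsc{Attest} clauses can only hold up to that inclusion. I would handle this by stating explicitly that the system has no subsumption rule, and reading Lemma~\ref{lem:substitution}'s inclusion as a property of row union in the \textsc{App} and \textsc{Handle} contexts rather than as a typing rule. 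Under that reading the equalities hold on the nose, and the remaining steps are immediate.
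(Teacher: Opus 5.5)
Your proof is correct and is the same argument as the paper's: the computation judgment is syntax-directed, so each bullet follows by inverting the unique rule whose conclusion has the given term form and reading off its premises, taking $\sigmarow_H$ to be the union of the handler-clause rows. You also state explicitly that the equalities hold because there is no subsumption rule, which the paper leaves implicit. However, the \textsc{Var} case of Lemma~\ref{lem:substitution} does not actually need such a rule, since inverting \textsc{Var} already forces both rows to be empty and the type to be $A$.
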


\begin{proof}
By direct inspection of each typing rule. Each rule has a unique
conclusion form, so the derivation can be inverted uniquely.

\end{proof}

\subsection{Evaluation Contexts}
\label{sec:syntax:contexts}

Two evaluation contexts enforce the distinction between resumable and non-resumable effects.

\begin{align*}
  E &::= [\,] \mid E\,M \mid V\,E & \text{(caught by handlers)} \\
  F &::= [\,] \mid F\,M \mid V\,F \mid \handleop\,F\,\withop\,H & \text{(passes through handlers)}
\end{align*}

Context $E$ catches $\sigmarow$ operations (attestation). Context $F$ passes through $\xirow$ operations (registration). When $\attest\,\ell\,V$ appears inside $E$, the handler intercepts it. When $\registerop(e, a, v)$ appears inside $F$, the fact is deposited directly to the register.

\paragraph{Pass-through Semantics.} $\xirow$-row operations pass through handlers unintercepted; only $\sigmarow$-row operations are caught. This enforces the non-resumption invariant.

\subsection{Summary}

\begin{table}[h]
\centering
\caption{Core Components}
\begin{tabular}{lll}
\toprule
Component & Count & Description \\
\midrule
Judgments & 3 & $\firstness, \secondness, \thirdness$ \\
Core rules & 7 & Assignation, Reg, Attest, Designate, Invoke, Promote, Handle \\
Evaluation contexts & 2 & $E$ (caught), $F$ (pass-through) \\
Effect rows & 2 & $\sigmarow$ (resumable), $\xirow$ (non-resumable) \\
\bottomrule
\end{tabular}
\end{table}

The syntax is complete. All seven rules are mutually consistent, and the two evaluation contexts ensure that resumable and non-resumable operations follow distinct semantic paths. The categorical model in Section~\ref{sec:category} shows how these components map to the adjoint triple.

\section{Categorical Semantics}
\label{sec:category}

The $\because$-calculus is not merely inspired by categorical logic---it
\emph{is} the adjoint triple $\Sigma_f \dashv f^* \dashv \Pi_f$ from the
theory of hyperdoctrines. This section establishes the precise
correspondence, shows that the calculus embodies the algebra-coalgebra
duality, and proves categorical soundness: reduction preserves
denotation.

\subsection{The Adjoint Triple}
\label{sec:category:triple}

Let $f : \mathcal{B} \to \mathcal{E}$ be a fibration of categories,
where $\mathcal{B}$ is the base category and $\mathcal{E}$ is the total
category. The fibration induces an adjoint triple:
\[
\Sigma_f \dashv f^* \dashv \Pi_f
\]
where $\Sigma_f$ is the left adjoint (existential), $f^*$ is the inverse
image (substitution), and $\Pi_f$ is the right adjoint (universal).
Each adjunction contributes two natural transformations, yielding four
total: these are the four movements of the $\because$-calculus.

\subsection{The Identification}
\label{sec:category:table}

\begin{table}[h]
\centering
\caption{Correspondence Between the $\because$-Calculus and Categorical Logic}
\label{tab:correspondence}
\small
\begin{tabular}{lll}
\toprule
\textbf{Calculus} & \textbf{Category Theory} & \textbf{Phase} \\
\midrule
$\firstness$ (computation) & Base category $\mathcal{B}$ & Reduction \\
$\secondness$ (register) & Fibers $\mathcal{E}_n$ & Result of reduction \\
$\thirdness$ (interpretation) & Right adjoint $\Pi_f$ & Expansion \\
$\sigmarow$-row (attestation) & Left adjoint $\Sigma_f$ & Reduction \\
$\xirow$-row (registration) & Left adjoint $\Sigma_f$ & Reduction \\
$\lawlib$ (law library) & Subobject classifier $\Omega$ & --- \\
$\register$ (register) & Fiber objects & --- \\
Promotion & Counit $\epsilon^\Pi$ & Phase transition \\
Assignation ($\leftarrow$) & Composition in $\mathcal{B}$ & Reduction \\
\bottomrule
\end{tabular}
\end{table}

\noindent\textit{Note.} Registration and attestation both arise from
$\Sigma_f$, but through different natural transformations:
registration is the unit ($\eta^\Sigma : \mathrm{Id} \to f^* \circ
\Sigma_f$), a one-way injection with $B=0$; attestation is the counit
($\epsilon^\Sigma : \Sigma_f \circ f^* \to \mathrm{Id}$), an evaluative
interface with $B \neq 0$. The $\because$-calculus separates them
operationally because the adjunction produces two distinct
transformations.

Both registration and attestation correspond to $\Sigma_f$ through
different mechanisms. Registration is the forward action of $\Sigma_f$
(pushing a fact into existence, $B=0$). Attestation is the adjunction
interface of $\Sigma_f$ (suspending computation to request a witness,
$B\neq 0$). Handler calculus conflates these by routing both through a
single operation; the $\because$-calculus separates them because the
adjunction demands it.

\subsection{Algebra-Coalgebra Duality: Reduction and Expansion}
\label{sec:category:algebra-coalgebra}

The adjoint triple embodies the algebra-coalgebra duality
(Jacobs~\cite{jacobs-1999}, Rutten~\cite{rutten-2000}). The left
adjoint $\Sigma_f$ governs \emph{reduction} (algebraic, catamorphism);
the right adjoint $\Pi_f$ governs \emph{expansion} (coalgebraic,
anamorphism).

\paragraph{Reduction (Assignation, FP).} An $F$-algebra
$\alpha : F(A) \to A$ consumes structure, folding a computation tree
into a value. In the $\because$-calculus:
\begin{itemize}
  \item $\beta$-reduction folds application into result.
  \item Registration folds a computed value into a deposited fact
    ($\eta^\Sigma$).
  \item Handler interception folds an attestation into a witnessed
    result ($\epsilon^\Sigma$).
\end{itemize}
Assignation is the catamorphism: the unique $F$-homomorphism from the
initial algebra. Functional programming is reduction.

\paragraph{Expansion (Designation, OOP).} A $G$-coalgebra
$\beta : A \to G(A)$ produces structure, unfolding a flat fact into an
artifact with fields, methods, and suspended computations. In the
$\because$-calculus:
\begin{itemize}
  \item Designation unfolds a fact against an interpretive type,
    producing an artifact ($\eta^\Pi$).
  \item Promotion activates the artifact's suspended methods at the
    next level ($\epsilon^\Pi$).
\end{itemize}
Designation is the anamorphism: the unique $G$-cohomomorphism to the
final coalgebra. Object-oriented programming is expansion.

\paragraph{Bialgebra: The Tower Cycle.} A bialgebra
$F(A) \xrightarrow{\alpha} A \xrightarrow{\beta} G(A)$ combines
reduction and expansion on the same object, mediated by the
distributive law $\lambda : FG \to GF$
(Turi and Plotkin~\cite{turi-plotkin-1997}). The tower is an iterated
bialgebra:
\[
F(A_n) \xrightarrow{\alpha_n} A_n
  \xrightarrow{\beta_n} G(A_n)
  \xrightarrow{\text{promote}} F(A_{n+1})
  \xrightarrow{\alpha_{n+1}} A_{n+1}
  \xrightarrow{\beta_{n+1}} G(A_{n+1})
  \xrightarrow{\text{promote}} \cdots
\]
Each level reduces (FP phase), then expands (OOP phase), then promotes
to the next level. The distributive law $\lambda_n$ governs how
expansion results feed into the next reduction.

\begin{theorem}[FP/OOP Boundary]
\label{thm:fp-oop}
The boundary between functional and object-oriented programming is
phasic: reduction (catamorphism, $\Sigma_f$) precedes expansion
(anamorphism, $\Pi_f$) within each tower level, mediated by the
distributive law $\lambda : FG \to GF$.
\end{theorem}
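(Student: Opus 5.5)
The statement has two parts. The first is an ordering claim: within a tower level, reduction precedes expansion. The second is a mediation claim: the passage from one level's expansion to the next level's reduction is governed by a distributive law $\lambda : FG \to GF$. I will prove each part separately. The first comes from the operational semantics; the second from the bialgebraic reading of the tower in Section~\ref{sec:category:algebra-coalgebra}.

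\textbf{Ordering.} I first fix the interpretation. The FP phase at level $n$ is the reduction relation on configurations $\langle M, \register_n\rangle$, generated by \textsc{Beta}, \textsc{Reg}, \textsc{Ret}, \textsc{Attest}, let-reduction, and closure under the $E$ and $F$ contexts. The OOP phase is the application of \textsc{Designate} to $\Delta\register$ against $\lawlib_n$.

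By inspection, no reduction rule has $\designate$ in its redex or its result. Hence a reduction sequence at level $n$ never performs designation, and never changes $\lawlib_n$ or any artifact.

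Conversely, \textsc{Designate} needs the premise $\register_n \models f$ with $f \in \Delta\register$. The set $\Delta\register$ is defined only after the computation phase has reached a value or is stuck on an unhandled attestation, as described in the Realm Transition paragraph. Proposition~\ref{prop:designate-det} then shows that the expansion output $\mathcal{A}$ depends only on $(\Delta\register, \lawlib_n)$. So expansion is a function applied to the final state of reduction, and it cannot interleave with reduction.

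Categorically, this matches the composite $F(A_n) \xrightarrow{\alpha_n} A_n \xrightarrow{\beta_n} G(A_n)$, where the algebra map must be applied before the coalgebra map. Two assignments make this precise. $\alpha_n$ is the catamorphism, obtained by folding \textsc{Reg} (as $\eta^\Sigma$) and \textsc{Attest} (as $\epsilon^\Sigma$). $\beta_n$ is the anamorphism given by designation (as $\eta^\Pi$).

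\textbf{Mediation.} I will take $F$ to be the signature functor of computation terms together with the $\sigmarow$/$\xirow$ operations. I will take $G$ to be the artifact functor $G(X) = \prod_i \mathsf{Field}_i \times \prod_j (\mathsf{Arg}_j \to X)$, where the second factor holds the suspended method bodies.

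I then define $\lambda_X : F G X \to G F X$ componentwise. A computation built from artifacts is sent to an artifact whose fields are evaluated and whose methods are postponed computations. This choice is forced by \textsc{Invoke} and \textsc{Promote}: method bodies run at level $n+1$ on pure arguments, so they commute past the outer term constructor.

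Next I check naturality of $\lambda$ and the two compatibility axioms with the unit and multiplication of the free monad on $F$ and the cofree comonad on $G$. I will argue these from the purity requirement in \textsc{Invoke} and from Lemma~\ref{lem:substitution}: substituting pure values does not change effect rows, so it commutes with artifact formation.

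Given such a $\lambda$, the Turi--Plotkin result supplies the $\lambda$-bialgebra structure on each $A_n$. Promotion, read as $\epsilon^\Pi$, is then the connecting map $G(A_n) \to F(A_{n+1})$, obtained as the component of $\lambda$ followed by the change of level.

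\textbf{Main obstacle.} The hardest step is verifying the distributive-law axioms, because $F$ includes \textsc{Handle} and registration. For registration, I expect the pass-through $F$-contexts to give commutation directly. For \textsc{Handle}, I would first restrict $\lambda$ to state-affine handler clauses, as required in the \textsc{Handle} rule, so that attestation results do not flow into registrations. I would then show that with this restriction the $\lambda$ square commutes. If it does not commute in general, I would weaken the claim to a lax distributive law, since the phasic ordering proved above does not depend on strictness.
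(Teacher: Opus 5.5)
Your ordering half is sound, and better grounded than the paper's own argument, which simply asserts the order because the within-level transition is $\eta^\Pi$ and the cross-level one is $\epsilon^\Pi$. Reading the order off the premise order of \textsc{T-Step} is the right justification. One small correction: the erasure lemma does use a reduction $\promote(\designate\,T\,f)\to f$, so claim that no reduction step \emph{performs} designation, not that $\designate$ never occurs in a redex.

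The gap is in the mediation half. The step ``promotion is the connecting map $G(A_n)\to F(A_{n+1})$, obtained as the component of $\lambda$ followed by the change of level'' is ill-typed. Each component $\lambda_X$ goes from $FGX$ to $GFX$, so none of them gives an arrow out of $G(A_n)$ into an $F$-structure. Nor does Turi--Plotkin ``supply the $\lambda$-bialgebra structure on each $A_n$''. It equips the initial $F$-algebra and the final $G$-coalgebra with such structures. A $\lambda$-bialgebra lives on one carrier, and the only cross-carrier maps in that theory are bialgebra morphisms $A_n\to A_{n+1}$, not arrows $G(A_n)\to F(A_{n+1})$.

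To get a bialgebra on the level-$n$ carrier, you must verify $\beta_n\circ\alpha_n = G\alpha_n\circ\lambda_{A_n}\circ F\beta_n$ for the calculus' own reduction and designation. With your $G$, these maps do not even have the right types:
\begin{itemize}
\item By \textsc{Invoke}, method bodies run in $\context_{n+1}$, so designation has codomain $G(A_{n+1})$, not $G(A_n)$.
\item Its domain is the facts in $\Delta\register$, on which it is partial. By the serializability constraint these facts contain no artifacts, and they are not the carrier $A_n$ produced by $\alpha_n$.
\end{itemize}
Your description of $\lambda$ (``fields are evaluated'') has a related defect. A natural transformation $FGX\to GFX$ has no handler and no register available, so it cannot evaluate an $F$-layer headed by $\attest$ or $\registerop$ in the calculus' sense.

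Consequently, the obstacle you single out is not the real one. Any natural transformation $FG\Rightarrow GF$ extends, by the standard constructions of the Turi--Plotkin framework, to a distributive law of the free monad over the cofree comonad, and the compatibility axioms hold by construction. What is missing is the bialgebra law for the calculus' actual maps, which is exactly where the level and carrier mismatches bite. Falling back to a lax law would prove a weaker statement than the one given. Note also that the bialgebra law says reduce-then-expand equals expand-then-reduce through $\lambda$. So it can never be the source of the ordering; that comes only from \textsc{T-Step}, as your first half shows.

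The paper's proof does not attempt this construction. It reads the tower diagram as an iterated bialgebra by citing Turi--Plotkin, and it labels the cross-level arrow as promotion, i.e.\ $\epsilon^\Pi$, without deriving it from $\lambda$. To repair your plan, keep promotion as that separate $\epsilon^\Pi$ step. Then, before invoking Turi--Plotkin, exhibit a single carrier on which designation is a total $G$-coalgebra and check the bialgebra law there.
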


\begin{proof}
{\sloppy
By the algebra-coalgebra duality. FP programs are catamorphisms
(unique $F$-homomorphisms from the initial algebra), living in
the left adjoint $\Sigma_f$—reduction direction $F(A) \to A$.
OOP objects are anamorphisms (unique $G$-cohomomorphisms to the
final coalgebra), living in the right adjoint $\Pi_f$—expansion
direction $A \to G(A)$. Bialgebra morphisms preserving both
structures are mediated by the distributive law $\lambda$
\cite{turi-plotkin-1997}; the tower is precisely such an iterated
bialgebra. The transition from reduction to expansion within a
level requires $\eta^\Pi$ (designation); the transition from
expansion to the next level's reduction requires $\epsilon^\Pi$
(promotion). The boundary is phasic, not absolute: bialgebras
demonstrate coexistence, but reduction precedes expansion within
each tower level.
}
\end{proof}

\subsection{Four Movements as Natural Transformations}
\label{sec:category:movements}

\begin{definition}[Movements]
The four movements of the $\because$-calculus are the four natural
transformations of the adjoint triple:
\begin{center}
\begin{tabular}{lll}
\toprule
Movement & Symbol & Natural Transformation \\
\midrule
Registration & $\to$ & $\eta^\Sigma : \mathrm{Id} \to f^* \circ \Sigma_f$ \\
Attestation & $\downarrow$ & $\epsilon^\Sigma : \Sigma_f \circ f^* \to \mathrm{Id}$ \\
Designation & $\uparrow$ & $\eta^\Pi : \mathrm{Id} \to \Pi_f \circ f^*$ \\
Promotion & $\thirdness \to \firstness$ & $\epsilon^\Pi : \Pi_f \circ f^* \to \mathrm{Id}$ \\
\bottomrule
\end{tabular}
\end{center}
\end{definition}

Registration ($\eta^\Sigma$) pushes a value into existence---a
one-way injection with no inverse, so no resumption is possible.
Attestation ($\epsilon^\Sigma$) evaluates an existential---the
computation suspends, and the handler provides a witness, so
resumption is possible. Designation ($\eta^\Pi$) maps a fact to
its universal image, producing an artifact. Promotion
($\epsilon^\Pi$) instantiates the universal at a specific level,
activating method bodies.

\subsection{The Inversion in Categorical Terms}
\label{sec:category:inversion}

The inversion thesis (Section~\ref{sec:inversion}) is not merely
philosophical—it is encoded in the adjoint triple. Each adjunction
governs one inverted inference:

\begin{center}
\begin{tabular}{lll}
\toprule
Adjunction & Inference (inverted) & Movement \\
\midrule
$\Sigma_f \dashv f^*$ & Induction & Attestation ($\epsilon^\Sigma$) \\
$f^* \dashv \Pi_f$ & Abduction & Designation ($\eta^\Pi$) \\
Composition in $\mathcal{B}$ & Deduction (preserved) & Assignation \\
\bottomrule
\end{tabular}
\end{center}

\paragraph{Attestation as Inverted Induction.}
Classical induction generalizes: given many cases (facts in
$\secondness$), infer a law. The counit $\epsilon^\Sigma :
\Sigma_f \circ f^* \to \mathrm{Id}$ goes the other direction:
given a pre-existing law (the handler in $\sigmarow$), validate
a specific computation against it. The computation suspends
(attestation) and the handler provides a witness. The law is
sovereign; the computation must conform.

\paragraph{Designation as Inverted Abduction.}
Classical abduction searches backward: given a surprising fact,
infer the hypothesis that explains it. The unit $\eta^\Pi :
\mathrm{Id} \to \Pi_f \circ f^*$ moves forward: given a
pre-written law ($T \in \lawlib$) and a fact ($f \in \register$),
produce the artifact that the law prescribes. There is no
search—the law determines what the fact means. If the fact
fails the law's match predicate, the fact is rejected, not the
law revised.

\paragraph{Assignation as Preserved Deduction.}
Composition in the base category $\mathcal{B}$ (substitution,
$\beta$-reduction) is deduction in the standard categorical
logic sense: given a law (type) and a case (term), derive the
result (value). No inversion occurs. The base category is the
realm where classical forward reasoning operates unhindered.

\paragraph{Structural Consequence.}
The inversion is not a design choice overlaid on the
calculus—it is forced by the adjunction structure. The left
adjunction's counit ($\epsilon^\Sigma$) and the right
adjunction's unit ($\eta^\Pi$) are the computation-facing
operations. Their counterparts ($\eta^\Sigma$ registration,
$\epsilon^\Pi$ promotion) are structural completions. The
calculus cannot be ``un-inverted'' without collapsing the
adjoint triple back into a single functor—which is precisely
what handler calculus does.

\subsection{Meaning as Adjunction}
\label{sec:category:meaning}

\begin{theorem}[Meaning]
\label{thm:meaning}
Let $\Omega$ be the subobject classifier (law library $\lawlib$) and
$\secondness$ denote the register (fiber objects). The judgment
``law encounters fact and produces meaning'' is the universal property
of the right adjunction:
\[
\mathrm{Hom}_{\mathcal{E}}(f^*(\Omega), P) \cong
\mathrm{Hom}_{\mathcal{B}}(\Omega, \Pi_f(P))
\]
Equivalently, $\Omega \land \secondness \to \Pi$ is the unit of the
universal adjunction.
\end{theorem}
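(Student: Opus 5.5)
The plan is to read the Meaning theorem as two claims. The first is the hom-set bijection, which is the defining property of $f^* \dashv \Pi_f$ specialised to the object $\Omega$. The second is the identification of that bijection, transported along the correspondence of Table~\ref{tab:correspondence}, with designation. I will treat $\Omega$ (the law library $\lawlib_n$) as an object of $\mathcal{B}$ and $P$ as an object of the fiber $\mathcal{E}_n$ (the register $\register_n$). The displayed isomorphism
\[
\mathrm{Hom}_{\mathcal{E}}(f^*(\Omega), P) \cong \mathrm{Hom}_{\mathcal{B}}(\Omega, \Pi_f(P))
\]
is then the instance at $\Omega$ of the natural bijection that defines $f^* \dashv \Pi_f$. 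No computation is needed beyond fixing directions. The one point to check is that, with $f$ read as the reindexing functor, $f^*$ goes from the base to the fibers and $\Pi_f$ goes back. I will state this convention explicitly, since the paper writes $f : \mathcal{B} \to \mathcal{E}$.

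Next I make the unit explicit. Under the bijection, the identity $\mathrm{id}_{f^*\Omega}$ (take $P = f^*\Omega$) corresponds to $\eta^\Pi_\Omega : \Omega \to \Pi_f f^* \Omega$. Every $g : f^*\Omega \to P$ then factors uniquely as $\Pi_f(g) \circ \eta^\Pi_\Omega$. This is exactly the universal-arrow formulation. It is what the informal phrase ``$\Omega \land \secondness \to \Pi$ is the unit'' should mean: the pairing of a law with a fact reindexed into the fiber, $f^*\Omega$ paired against $P$, factors through $\eta^\Pi$.

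The semantic step connects this to the calculus. By the Movements definition, designation is $\eta^\Pi$. By the \textsc{Designate} rule, an artifact is produced exactly from a pair consisting of a law $T \in \lawlib_n$ and a fact with $\register_n \models f$, subject to $P(v)$. I will interpret a morphism $f^*(\Omega) \to P$ as a choice of fact in the register classified by some law, where $\Omega$ plays the role of the classifier and the match predicate is its characteristic map. I will interpret a morphism $\Omega \to \Pi_f(P)$ as a law-indexed family of artifacts. Proposition~\ref{prop:designate-det} (Designation Determinacy) then supplies the single-valuedness needed for the correspondence to be a function and not a relation. The inverse direction, from artifacts back to facts, is obtained by composing with the counit $\epsilon^\Pi$, which is promotion. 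Checking that the two composites are identities reduces to the triangle identities of the adjunction.

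The main obstacle is this last identification. The calculus does not give a precise denotation of $\lawlib_n$ as a subobject classifier, nor of artifacts as elements of $\Pi_f(P)$. The bijection itself is free from the adjunction, but making ``designation equals transposition'' rigorous requires fixing such a denotation $\denotation{\cdot}$. My first attempt would be to take a presheaf model over levels $n$, with $\Omega$ the classifier of the presheaf topos. Match predicates would be maps into $\Omega$, and $\denotation{\designate\,T\,f}$ would be defined as the transpose of the characteristic pairing. The theorem would then hold by definition, up to the triangle identities. Naturality in $P$ would follow from the Substitution Lemma~\ref{lem:substitution} applied to field computations.
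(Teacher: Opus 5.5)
Your argument for the displayed bijection is the paper's own. The paper likewise says the isomorphism holds by definition of $f^*\dashv\Pi_f$, reads its two sides informally, and calls the unit $\eta^\Pi_\Omega$ ``precisely designation'' solely on the strength of the Movements definition. Relative to that proof, nothing in your proposal is missing. Your treatment of the unit is more accurate than the paper's. $\eta^\Pi_\Omega$ is the transpose of $\mathrm{id}_{f^*\Omega}$, so it arises at $P=f^*\Omega$. The paper's ``setting $P=\secondness$'' only gives the general transposition $g\mapsto\Pi_f(g)\circ\eta^\Pi_\Omega$ on $\mathrm{Hom}_{\mathcal{E}}(f^*\Omega,\secondness)$. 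You are also right that ``designation equals transposition'' cannot be derived from the calculus without a denotation for $\lawlib_n$ and for artifacts; the paper simply asserts it.

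Three local points need fixing.

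(i) Your universal-arrow sentence is ill-typed. $g:f^*\Omega\to P$ lives in $\mathcal{E}$, while $\Pi_f(g)\circ\eta^\Pi_\Omega:\Omega\to\Pi_f(P)$ lives in $\mathcal{B}$, so that composite is the transpose of $g$, not a factorization of it. The correct statement is: every $h:\Omega\to\Pi_f(P)$ equals $\Pi_f(g)\circ\eta^\Pi_\Omega$ for a unique $g$, namely $g=\epsilon^\Pi_P\circ f^*(h)$, where the counit is $f^*\circ\Pi_f\to\mathrm{Id}$. The paper's typing $\Pi_f\circ f^*\to\mathrm{Id}$ is the wrong composite. Naturality in $P$ is part of the adjunction data, so the Substitution Lemma plays no role there.

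(ii) The rigorous route you sketch would not go through as stated. A characteristic map has $\Omega$ as its codomain, but both hom-sets in the statement have $\Omega$ (respectively $f^*\Omega$) as domain. So match predicates are not elements of $\mathrm{Hom}_{\mathcal{E}}(f^*\Omega,P)$, and ``the transpose of the characteristic pairing'' has no meaning under $f^*\dashv\Pi_f$. The subobject-classifier reading of $\Omega$ is incompatible with the position $\Omega$ occupies in the theorem. This defect is inherited from the statement, but your model would have to reformulate it, not instantiate it.

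(iii) Promotion cannot play the role of the inverse transposition ``from artifacts back to facts.'' By \textsc{Promote} it yields an artifact of type $T$ at level $n+1$, not a fact in $\register_n$.

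None of these affects the bijection itself, which is the only part of the theorem with proof-theoretic content. On that part you agree with the paper.
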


\begin{proof}
By the adjunction $f^* \dashv \Pi_f$, the natural bijection
$\mathrm{Hom}_{\mathcal{E}}(f^*(\Omega), P) \cong
\mathrm{Hom}_{\mathcal{B}}(\Omega, \Pi_f(P))$ holds by definition.
The left side reads: ``a fact (reindexed law $f^*(\Omega)$) maps to
an object $P$.'' The right side reads: ``the law $\Omega$ produces a
universal construction $\Pi_f(P)$.'' Setting $P = \secondness$, the
unit $\eta^\Pi_\Omega : \Omega \to \Pi_f(f^*(\Omega))$ states that
the law, when it encounters facts, produces a universal. This is
precisely designation.

\end{proof}

\paragraph{Expansive Meaning Formation.}
The bijection in Theorem~\ref{thm:meaning} might suggest a single
output per input. Operationally, a single fact $f \in \Delta\register$
may match multiple laws $\{T_i\}_{i \in I} \subseteq \lawlib_n$,
producing a \emph{set} of artifacts. The correct operational form is:
\[
\mathsf{Designate} : \Delta\register \times \lawlib_n
  \twoheadrightarrow \mathcal{P}_{\neq\emptyset}(\mathcal{A})
\]
where $\mathcal{P}_{\neq\emptyset}$ denotes the non-empty powerset.
Each individual designation remains an instance of
$\eta^\Pi_f : f \to \Pi_f(f)$, but multiplicity arises from
multiple $T \in \lawlib$ applying to the same $f$. This expansive
behavior aligns with object-oriented polymorphism: one fact can
instantiate multiple types, and each instantiation produces a
distinct artifact with its own method suite.

\begin{proposition}[Beck-Chevalley for Tower Levels]
\label{prop:beck-chevalley}
For adjacent tower levels $n$ and $n+1$ with fibrations $f_n$ and $f_{n+1}$:
\[
\Sigma_{f_n} \circ f_{n+1}^* \cong f_{n+1}^* \circ \Sigma_{f_n}
\]
This ensures that existential quantification at level $n$ commutes with
reindexing to level $n+1$, preventing a level from perceiving its own facts.
\end{proposition}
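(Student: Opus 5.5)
The plan is to obtain the isomorphism as an instance of the standard Beck--Chevalley condition for a fibration, applied to a pullback square relating the two tower levels. Compositions like $\Sigma_{f_n}\circ f_{n+1}^*$ only typecheck if the functors act on a common category, so the first step is to fix the setting. I would model the tower as a sequence of fibrations $f_n : \mathcal{B} \to \mathcal{E}$. Here $\mathcal{E}_n$ is the fiber at level $n$. The inter-level map is the reindexing along the promotion map $p_n$ from level $n$ to level $n+1$, i.e.\ the morphism in the base that underlies $\epsilon^\Pi$. With this reading, $f_{n+1}^*$ denotes reindexing along $p_n$, and $\Sigma_{f_n}$ denotes the left adjoint to reindexing along the level-$n$ projection.

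The second step is to show that these two base morphisms form a pullback square in $\mathcal{B}$. The operational content supplies this. Promotion transfers only artifacts, not the register or the law library (rule \textsc{Promote}). Level indexing forbids level $n+1$ from accessing $\register_n$ directly. Hence the context at level $n+1$ is exactly the fibered product of the level-$n$ context with the promoted artifacts, and no further data is shared. I would state this as a lemma, proved by induction on derivations using the Substitution Lemma~\ref{lem:substitution} and Inversion Lemma~\ref{lem:inversion} for the variable-environment part.

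The third step is the Beck--Chevalley calculation itself. Assume the fibration has $\Sigma$'s satisfying Beck--Chevalley along pullbacks, which is the hyperdoctrine hypothesis of Section~\ref{sec:category:triple}. The canonical mate $\Sigma_{f_n}\circ f_{n+1}^* \Rightarrow f_{n+1}^*\circ\Sigma_{f_n}$ is built from $\eta^\Sigma$ and $\epsilon^\Sigma$. Over a pullback square this mate is invertible. Concretely, I would check invertibility componentwise on fiber objects, which are registers. An element of $\Sigma_{f_n}(f_{n+1}^* X)$ is a fact existentially witnessed at level $n$ and then reindexed. By determinacy (Proposition~\ref{prop:designate-det}) and the finiteness of $\Delta\register$, this corresponds bijectively to the reindexing of the existential. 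Naturality is inherited from that of the mate.

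The main obstacle is the second step. The paper never defines the base category or the inter-level maps precisely, so the pullback property must be extracted from the level-indexing discipline. I would first try to define $\mathcal{B}$ as the category of typing contexts $\context_n$ with substitutions as morphisms, and prove the pullback property there. If that fails, I would instead restrict the claim to the posetal fibration of registers ordered by inclusion. In that setting Beck--Chevalley reduces to the set-theoretic identity that image and preimage commute along a pullback. The corollary about non-self-perception then follows: the fibered square shows that $\register_n$ is visible at level $n+1$ only through promoted artifacts.
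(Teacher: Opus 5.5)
\textbf{Gap 1: the Beck--Chevalley step is assumed, not proved.} The decisive gap is in your third step. Beck--Chevalley does not follow from having a fibration whose reindexing functors have left adjoints. It is an extra condition, built into the definition of a hyperdoctrine, and it fails for general bifibrations. Section~\ref{sec:category:triple} posits only the adjoint triple, not this condition. Once you write ``assume the fibration has $\Sigma$'s satisfying Beck--Chevalley along pullbacks,'' the proposition becomes an instance of your hypothesis, and nothing about the calculus is used.

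The componentwise check you sketch does not supply the missing content. Proposition~\ref{prop:designate-det} only says designation is a function of $(\Delta\register,\lawlib_n)$. Finiteness of $\Delta\register$ plays no role either: in the subset fibration over $\mathbf{Set}$, $\exists_{\pi'}(q'^{-1}S)=q^{-1}(\exists_{\pi}S)$ holds for every pullback square of arbitrary sets. Neither fact bears on invertibility of the mate $\Sigma_{\pi'}\circ q'^{*}\Rightarrow q^{*}\circ\Sigma_{\pi}$.

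A Beck--Chevalley square has four edges, so the statement can only mean this isomorphism for a pullback of $\pi$ along $q$. You should name all four maps instead of reusing $f_n$ and $f_{n+1}$ on both sides. Your fallback to the subset fibration is the one route here to an actual proof. It still needs concrete base objects (say, a universe of possible facts per level) and concrete maps $q,\pi$, none of which the paper provides.

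\textbf{Gap 2: the pullback lemma and the final corollary.} Your second step cannot be carried out as proposed. The paper defines no base category and no inter-level morphism. The Substitution and Inversion lemmas concern typing derivations of terms, so induction on derivations cannot establish a universal property of a square. The informal premise you rely on, that level $n+1$ shares nothing with level $n$ beyond promoted artifacts, is at odds with \textsc{T-Step}. That rule sets $\register_{n+1}=\register_n'$ and $\context_{n+1}=\context_n\cup\promote(\mathcal{A})$ (cf.\ Lemmas~\ref{lem:monotonic-register} and~\ref{lem:preserve-artifacts}), so the entire level-$n$ register and environment are carried forward.

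The non-self-perception conclusion also does not follow from any Beck--Chevalley isomorphism. That isomorphism holds uniformly for every pullback square and says nothing about which facts a level may read. That is fixed operationally by delta extraction (Definition~\ref{def:delta-extraction}).

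\textbf{Comparison with the paper.} The paper's own proof does not close these gaps. It asserts the isomorphism because $f_n$ and $f_{n+1}$ ``form a commutative square of fibrations,'' but commutativity of a square does not in general yield Beck--Chevalley. Its closing justification, that existential quantification ``distributes over reindexing,'' is the condition restated. Its diagram also assigns $f_{n+1}^*$ incompatible domains and codomains. You were right that the statement must first be retyped. Your proposal correctly isolates the two missing ingredients, a specified pullback square and a fibration in which Beck--Chevalley is verified rather than assumed, but it supplies neither.
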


\begin{proof}
Let $\mathcal{B}_n \xrightarrow{f_n} \mathcal{E}_n$ and $\mathcal{B}_{n+1} \xrightarrow{f_{n+1}} \mathcal{E}_{n+1}$
be fibrations at levels $n$ and $n+1$. The reindexing functors $f_n^* : \mathcal{E}_n \to \mathcal{B}_n$
and $f_{n+1}^* : \mathcal{E}_{n+1} \to \mathcal{B}_{n+1}$ are part of the adjoint triples.

The Beck-Chevalley condition requires that the following square commutes:
\[
\begin{tikzcd}
\mathcal{E}_n \arrow[r, "f_{n+1}^*"] \arrow[d, "\Sigma_{f_n}"'] & \mathcal{E}_{n+1} \arrow[d, "\Sigma_{f_{n+1}}"] \\
\mathcal{B}_n \arrow[r, "f_{n+1}^*"'] & \mathcal{B}_{n+1}
\end{tikzcd}
\]

Natural isomorphism $\Sigma_{f_n} \circ f_{n+1}^* \cong f_{n+1}^* \circ \Sigma_{f_n}$ follows from
the fact that $f_n$ and $f_{n+1}$ form a commutative square of fibrations (tower structure).

In the calculus, this means:
- A fact $f \in \register_n$ (existential at level $n$) can be reindexed to level $n+1$.
- $\Sigma_{f_n}(f)$ (promotion of fact) commutes with $f_{n+1}^*$ (reindexing).
- No level can perceive its own facts: $f_n^*$ prevents circular dependencies.

The isomorphism holds because the tower is a sequence of fibrations where each level's
existential quantification distributes over reindexing to the next level.
\end{proof}

\subsection{Coherence Between Algebraic and Coalgebraic Phases}
\label{sec:category:coherence}

For the tower to be an iterated bialgebra, the reduction phase (algebraic) and
the expansion phase (coalgebraic) must compose coherently across tower levels.
We prove this coherence theorem now.

\begin{definition}[Observable Behavior]
\label{def:observable-behavior}
Two artifacts $a_1, a_2 : T$ are \emph{observably equivalent},
written $a_1 \approx a_2$, if for all method names $m$ and pure
arguments $v$, their denotations coincide:
\[
\denotation{a_1.m(v)}_\register = \denotation{a_2.m(v)}_\register
\]
where $v$ is pure ($\effectrow{\emptyset}{\emptyset}$). The set of
observable behaviors of $a$ is $\obs(a) = \{\denotation{a.m(v)} \mid
m \in T.\mathsf{methods},\; v \text{ pure}\}$.
\end{definition}

\begin{theorem}[Coherence]
\label{thm:coherence}
If $M$ at level $n$ reduces to value $V$, and $V$ denotes artifact $a$
via designation ($\denotation{a} = \eta^\Pi_V$), then promoting $a$ to
level $n+1$ preserves observable behavior:
\[
\obs(\promote(a))_{\register_{n+1}} =
  \denotation{V}_{\register_n} \bind \lambda v.\; \obs(\promote(v))_{\register_{n+1}}
\]
where the RHS is the composition of reduction at level $n$ followed by
promotion.
\end{theorem}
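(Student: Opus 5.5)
The plan is to read the right-hand side as a Kleisli composite and reduce the equation to a monad law together with the counit property of $\epsilon^\Pi$. Since $M$ reduces to a value $V$ at level $n$, the denotation $\denotation{V}_{\register_n}$ should be the unit of the computation monad applied to $V$. This is justified once we know that reduction preserves denotation (categorical soundness, together with Subject Reduction) and that a value with rows $\effectrow{\emptyset}{\emptyset}$ has a pure denotation. By the left unit law for $\bind$, the right-hand side then collapses to $\obs(\promote(V))_{\register_{n+1}}$. It remains to show
\[
\obs(\promote(a))_{\register_{n+1}} = \obs(\promote(V))_{\register_{n+1}},
\]
where $a$ is the artifact designated from $V$.

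For this step we would unfold $\obs$ via Definition~\ref{def:observable-behavior}: for every method $m$ and every pure argument $v$, we compare $\denotation{\promote(a).m(v)}$ with $\denotation{\promote(V).m(v)}$. By hypothesis $\denotation{a} = \eta^\Pi_V$, and promotion is $\epsilon^\Pi$, so $\promote(a)$ denotes $\epsilon^\Pi \circ \eta^\Pi_V$, where the counit is taken at the appropriate reindexed object. We would then invoke the triangle identity of $f^* \dashv \Pi_f$, namely $(\epsilon^\Pi f^*) \circ (f^* \eta^\Pi) = \mathrm{id}$, to identify the two denotations up to the reindexing $f^*$. Proposition~\ref{prop:beck-chevalley} would transport that reindexing from level $n$ to level $n+1$, so that the register index changes from $\register_n$ to $\register_{n+1}$ coherently.

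Next, method bodies. By the \textsc{Invoke} rule they run at level $n+1$ with pure arguments. By Lemma~\ref{lem:substitution}, substituting $v$ into the body introduces no new effects, so each observation $\denotation{\cdot.m(v)}$ depends only on the attached body and the field values. Proposition~\ref{prop:designate-det} says these are uniquely fixed by $(V,\lawlib_n)$. Hence both sides compute the same body with the same inputs, and the pointwise equalities assemble into equality of the $\obs$ sets.

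The main obstacle is the identification in the second step. The triangle identity gives $\epsilon^\Pi \circ \eta^\Pi$ as the identity only after whiskering by $f^*$, and the two sides live in different fibers. Moreover, $\promote(V)$ for a raw value $V$ has to be given meaning, either as promotion of its designated image or through the extension of $\obs$ to values. I would first try to define $\obs(\promote(v))$ for a value $v$ as $\obs(\promote(\designate\,T\,v))$, with $T$ the selected law. The problem then reduces to showing that this choice is independent of representatives; that should follow from determinacy and from pure-argument observational equivalence $\approx$.
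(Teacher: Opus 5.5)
Your left-unit step is fine: it needs only $\denotation{V}_{\register_n} = \return(V)$, which comes from the clauses for values, not from soundness. But it moves the whole content of the theorem into the goal $\mathsf{Obs}(\promote(a)) = \mathsf{Obs}(\promote(V))$. At that step, the one you flag, there is a genuine gap, because the triangle identity cannot reach this goal.

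The semantics sets $\denotation{\designate\,T\,f} = \eta^\Pi_f$, which does not mention $T$. So your hypothesis $\denotation{a} = \eta^\Pi_V$ constrains only a $T$-independent morphism. By contrast, $\mathsf{Obs}$ is computed from $T.\mathsf{methods}$ through the \textsc{Invoke} clause, by substitution into the bodies that $T$ supplies. The maps $\eta^\Pi$ and $\epsilon^\Pi$ can enter at most through the substituted $\mathit{self}$, and they do not depend on the law. No equation between $\eta^\Pi$ and $\epsilon^\Pi$ can therefore determine $\mathsf{Obs}$.

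Taken literally, the whiskered identity $\epsilon^\Pi_{f^*V} \circ f^*(\eta^\Pi_V) = \mathrm{id}_{f^*V}$ identifies the promoted artifact with the reindexed raw value $f^*V$, not with $\promote(V)$, and $f^*V$ carries no method suite. It also presupposes $\epsilon^\Pi : f^* \circ \Pi_f \to \mathrm{Id}$, whereas the paper types promotion as $\Pi_f \circ f^* \to \mathrm{Id}$. Proposition~\ref{prop:beck-chevalley} does not help either: it concerns $\Sigma_{f_n}$ commuting with reindexing and involves neither $\Pi_f$ nor registers. The paper's proof takes the same route: it cites $\epsilon^\Pi \circ \Pi_f(\eta^\Pi) = \mathrm{id}$ and concludes $\denotation{\promote(a)} = \mathrm{id}$. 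It never connects this to $\mathsf{Obs}$, so following it will not close the gap.

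Your fallback does not close it either. Suppose the $T$ in $\mathsf{Obs}(\promote(v)) := \mathsf{Obs}(\promote(\designate\,T\,v))$ is the law that produced $a$. Then the reduced goal is literally $\mathsf{Obs}(\promote(a)) = \mathsf{Obs}(\promote(a))$. The theorem becomes a definition, and the triangle-identity, Beck--Chevalley and method-body steps do no work.

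Suppose instead that $T$ is selected independently of $a$. Then the hoped-for independence of representatives is false. By the paper's \emph{Expansive Meaning Formation}, one fact may match laws $T_1 \neq T_2$. Both $\designate\,T_1\,V$ and $\designate\,T_2\,V$ satisfy your hypothesis with the same denotation $\eta^\Pi_V$. Yet their method suites, and hence their $\mathsf{Obs}$ sets, can differ, while the right-hand side depends only on $V$. Proposition~\ref{prop:designate-det} fixes the set $\mathcal{A}$, not a single law, and $\approx$ compares only artifacts of one type, so neither rescues this. Any reading of $\promote(v)$ that depends on $v$ alone makes the equation fail for at least one of the two artifacts.

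The workable statement puts $T$ on the right: $\denotation{V}_{\register_n} \bind \lambda v.\, \mathsf{Obs}(\promote(\designate\,T\,v))_{\register_{n+1}}$. Your left-unit step proves this immediately. What then remains with real content is replacing $\denotation{V}$ by $\denotation{M}$ through iterated Algebraic Soundness (Theorem~\ref{thm:alg-soundness}), as in the paper's \emph{Composition} paragraph.
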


\begin{proof}
By the triangle identity of $f^* \dashv \Pi_f$:
$\epsilon^\Pi \circ \Pi_f(\eta^\Pi) = \mathrm{id}$.

Designation creates artifact $a$ via $\eta^\Pi_V : V \to \Pi_f(V)$.
Promotion instantiates $a$ at level $n+1$ via $\epsilon^\Pi_a : \Pi_f(a) \to a^{\uparrow}$.

The observable behavior of $a$ (its methods, fields) is preserved through
promotion because $\epsilon^\Pi$ is natural with respect to the functorial
action of $\Pi_f$. Formally:
\begin{align*}
\denotation{\promote(a)}_{\register_{n+1}}
  &= \denotation{\epsilon^\Pi_a} \\
  &= \epsilon^\Pi_{\Pi_f(V)} \circ \Pi_f(\eta^\Pi_V) \\
  &= \mathrm{id} \quad\text{(by triangle identity)}
\end{align*}

Thus promotion preserves the observable behavior of the artifact created by
designation. The tower cycle (reduce → expand → promote → reduce) is
meaning-preserving across levels.

\end{proof}

\paragraph{Composition.} By Theorem~\ref{thm:alg-soundness} (Algebraic Soundness),
reduction preserves denotation: $\denotation{M}_{\register_n} = \denotation{V}_{\register_n'}$.
By the triangle identity of $f^* \dashv \Pi_f$, designation followed by promotion
is identity on observable behavior. Composing these two facts:
\[
\denotation{\promote(a)}_{\register_{n+1}}
  = \denotation{V}_{\register_n} \bind \lambda v.\;
    \denotation{\promote(v)}_{\register_{n+1}}
\]
Thus the full tower cycle (reduce → designate → promote) preserves meaning across levels.

\begin{figure}[ht]
\centering
\[
M \xrightarrow{\text{reduce}} V \xrightarrow{\eta^\Pi} a
\xrightarrow{\epsilon^\Pi} a^{\uparrow}
\quad\text{is coherent with}\quad
M \xrightarrow{\text{observe}} a^{\uparrow}
\]
\caption{Tower coherence: reduction followed by promotion equals direct
observation across levels.}
\Description{A diagram showing the coherence between the tower cycle
(reduce, designate, promote) and direct observation across levels.}
\end{figure}

\subsection{Computations in Interpretive Types}
\label{sec:category:thunks}

Method bodies in interpretive types are thunks: morphisms in
$\mathcal{B}$ captured as parameters of a $\Pi$-type. This connects
the $\because$-calculus to call-by-push-value (CBPV):

\begin{center}
\begin{tabular}{lll}
\toprule
\textbf{CBPV} & \textbf{$\because$-Calculus} & \textbf{Category} \\
\midrule
Value type & $T$ (source type) & Object in $\mathcal{B}$ \\
Computation type & $\effectrow{\sigmarow}{\xirow}\,A$ & Object in $\mathcal{E}$ \\
Thunk & Method body $\mathsf{Body}_j$ & Morphism in $\Pi_f$ \\
Force & Method invocation $\mathsf{obj}.m(\mathsf{arg})$ & Counit $\epsilon^\Pi$ \\
\bottomrule
\end{tabular}
\end{center}

The key distinction from CBPV is that the $\because$-calculus has
both adjunctions. CBPV's $F \dashv U$ corresponds to the left half
($\Sigma_f \dashv f^*$); the right half ($f^* \dashv \Pi_f$) is
absent, which is why CBPV can express effects but not object
formation.

\subsection{Tower as Iterated Adjunction}
\label{sec:category:tower}

Each tower level transition $n \to n+1$ is generated by a fibration
$f_n : \mathcal{B}_{n+1} \to \mathcal{E}_n$, inducing its own adjoint
triple $\Sigma_{f_n} \dashv f_n^* \dashv \Pi_{f_n}$.

\begin{definition}[Tower]
The tower is the sequence of adjoint triples
$(\Sigma_{f_n} \dashv f_n^* \dashv \Pi_{f_n})_{n \in \mathbb{N}}$,
where each $f_n$ is the fibration connecting level $n$ to level
$n+1$.
\end{definition}

\begin{remark}[Law Library Updates]
\label{rem:law-library-updates}
The law library $\lawlib_n$ is fixed \emph{during} a designation phase
(to ensure determinism per Proposition~\ref{prop:designate-det}) but
may be updated \emph{between} tower steps. This permits dynamic
extension of interpretive types (plugin loading, runtime code
injection, hot-swapping laws) without violating the categorical model.
The only invariant is that all facts deposited during computation
phase $n$ must be evaluated against the same $\lawlib_n$ during the
subsequent expansion phase.
\end{remark}

\paragraph{Isolation.} Level $n+1$ cannot directly access level $n$'s
environments. Categorically, this is the fibration structure: objects
in $\mathcal{B}_{n+1}$ are indexed over $\mathcal{E}_n$, but the
fiber $\mathcal{E}_n$ does not see into $\mathcal{B}_{n+1}$ except
through the adjunctions.

\paragraph{Self-perception prohibition.} A level cannot perceive its
own facts---only facts from the previous level (via delta extraction,
Section~\ref{sec:metatheory}). This follows from the Beck-Chevalley
condition: for the reindexing functors $f_n^*$ and $f_m^*$ connecting
adjacent levels,
\[
\Sigma_{f_n} \circ f_{m}^* \cong f_{m}^* \circ \Sigma_{f_n}
\]
This ensures that existential quantification commutes with
reindexing, preventing circular perception. In the calculus, a fact
perceived at level $n$ cannot be re-perceived at the same level.

\subsection{Categorical Soundness}
\label{sec:category:soundness}

We define a state-indexed denotation that tracks register changes and
prove that reduction preserves denotation.

\begin{definition}[State-Indexed Denotation]
\label{def:denotation}
Define $\denotation{M}_\register : \register \to
(\register', A)$ as a morphism in the Kleisli category of the state
monad induced by $\Sigma_f$:
\begin{itemize}
  \item $\denotation{x}_\register = \return(x)$
    (identity, register unchanged)
  \item $\denotation{\lambda x{:}A.\,M}_\register =
    \return(\Lambda(\denotation{M}_\register))$
    (currying, register unchanged)
  \item $\denotation{M\,N}_\register =
    \denotation{M}_\register \bind \lambda f.\;
      \denotation{N}_\register \bind \lambda v.\;
        \return(f\,v)$
    (sequenced evaluation)
  \item $\denotation{\letbind{x}{M}{N}}_\register =
    \denotation{M}_\register \bind \lambda v.\;
      \denotation{N[v/x]}_\register$
    (sequential evaluation, register threaded)
  \item $\denotation{\registerop(e,a,v)}_\register =
    \lambda \register.\;(\mathsf{unit},\;\register \cup
    \{(e,a,v,\tau)\})$
    (state transformer: deposits fact, returns unit)
  \item $\denotation{\attest\,\ell\,V}_\register =
    \lambda \register.\;\mathrm{await}(\ell, V, \register)$
    (suspends, awaiting handler witness)

  \item $\denotation{\designate\,T\,f}_\register =
    \eta^\Pi_f : f \to \Pi_f(f)$
    (coalgebraic expansion; register unchanged)
  \item $\denotation{\promote(a)}_\register =
    \epsilon^\Pi_a : \Pi_f(a) \to a$
    (feeds expansion into next reduction; register unchanged)
  \item $\denotation{\mathsf{obj}.m(\mathsf{arg})}_\register =
    \denotation{M_{\mathsf{body}}[\mathsf{obj}/\mathit{self},\,
      \mathsf{arg}/\mathit{param}]}_\register$
    (substitution into method body)
  \item $\denotation{\handleop\,M\,\withop\,H}_\register =
    \denotation{H}_{\text{lifted}} \circ \denotation{M}_\register$
    (handler lifts $H$ to state monad; defined below)
\end{itemize}

\noindent The lifted handler $\denotation{H}_{\text{lifted}}$ is:
\begin{align*}
\denotation{H}_{\text{lifted}}(\register,\;\attest\;\ell\;v) &=
  \denotation{H_\ell(v, r)}_{\register}
  \quad\text{where } r = \lambda x.\;\lambda\register'.\;(\register', x) \\
\denotation{H}_{\text{lifted}}(\register,\;\mathsf{ret}\;y) &=
  \denotation{H_{\mathsf{ret}}}(y)
\end{align*}

\end{definition}

The denotation is state-indexed: $\denotation{M}_\register$ maps
register $\register$ to $(\register', A)$, capturing the fact that
registration modifies the register. Two programs that deposit
different facts receive different denotations because the output
register $\register'$ differs.

\begin{lemma}[Handler Body Naturality]
\label{lem:handler-naturality}
For any handler clause $H_\ell$ with type
$\effectrow{\sigmarow'}{\xirow'}\,C$ (a Kleisli morphism in the
state monad induced by $\Sigma_f$), if $H_\ell$ is
\emph{state-affine}---meaning all $\xirow'$-operations
(registrations) in $H_\ell$ deposit facts whose values do not
depend on intermediate results of $\sigmarow'$-operations
(attestations)---then the denotation
$\denotation{H_\ell}$ satisfies the naturality condition:
\[
\Sigma_f(\denotation{H_\ell}) \circ \epsilon^\Sigma_A =
  \epsilon^\Sigma_C \circ f^*(\denotation{H_\ell})
\]
where the equation holds in the Kleisli category, with state
threaded through via the monad's strength.
\end{lemma}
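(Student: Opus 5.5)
The plan is to prove the naturality equation by induction on the typing derivation of $H_\ell$, working throughout in the Kleisli category of the state monad. The state-affine hypothesis will be used in exactly one place. First I fix the semantic setting. By Definition~\ref{def:denotation}, $\denotation{H_\ell}$ is a Kleisli arrow $A \to \register \to (\register', C)$. I will read $\Sigma_f$ and $f^*$ as lifted to Kleisli arrows through the monad's strength. The components $\epsilon^\Sigma_A$ and $\epsilon^\Sigma_C$ are then pure (state-preserving) arrows, postcomposed with the unit $\return$. With this reading, the ordinary naturality of $\epsilon^\Sigma$ in the underlying category gives the equation immediately for every \emph{pure} arrow, i.e.\ every arrow of the form $\return \circ g$. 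So the real task is to show that $\denotation{H_\ell}$ factors into pieces to which naturality can be transported.

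The induction proceeds clause by clause over the cases of Definition~\ref{def:denotation}.

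\textbf{Variables and abstraction.} The cases \textsc{Var} and \textsc{Abs} denote $\return(\cdot)$, so they fall under pure naturality.

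\textbf{Composite forms.} For \textsc{App}, $\mathsf{let}$, and nested $\handleop$, the denotation is a Kleisli composite $\bind$. Here I would show that the equation is closed under Kleisli composition. Given the equation for $g$ and $h$, paste the two naturality squares and use that the lifted $\Sigma_f$ and $f^*$ are functorial on Kleisli composites. That functoriality holds because a strong monad makes these lifts monoidal.

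\textbf{Attestation.} This case denotes $\mathrm{await}(\ell,V,\register)$, which is itself an instance of $\epsilon^\Sigma$. Naturality is therefore the definitional content of the counit, up to the handler's interpretation $\denotation{H}_{\text{lifted}}$.

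\textbf{Registration.} This case is the state transformer $\register \mapsto \register \cup \{(e,a,v,\tau)\}$, which is not pure. To commute it past $\epsilon^\Sigma$ I need the deposited value $v$ to be determined before the counit is applied. That is exactly what state-affinity guarantees: $v$ does not depend on any result of an attestation inside $H_\ell$. So the registration can be rewritten as a pure arrow tensored (via strength) with a constant state update. A constant state update commutes with any pure arrow, and in particular with $\epsilon^\Sigma$.

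\textbf{Main obstacle.} I expect the delicate step to be the case where a registration is sequenced after an attestation in the same clause, e.g.\ $\letbind{x}{\attest\,\ell'\,V}{\registerop(e,a,x)}$. Without the hypothesis, the two sides of the equation would deposit different facts, so the output registers would differ. My first attempt is a normalization lemma for state-affine clauses: every registration can be commuted to the front of the clause's denotation, because its arguments do not mention attestation-bound variables and the register update is independent of the returned value. After this, $\denotation{H_\ell}$ has the form (constant state update) $\bind$ (state-preserving computation). The registration case above handles the first factor, the remaining cases handle the second, and the composition step combines them.
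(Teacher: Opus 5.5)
\paragraph{Review.}

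Your plan ends in the same place as the paper's proof. You put a register update in front and a register-preserving computation behind, glued by composition. That is the paper's factorization $\denotation{H_\ell}=\phi\circ\sigma$, after which both you and the paper appeal to naturality of $\epsilon^\Sigma$ for $\phi$ and to strength for $\sigma$. The paper asserts that factorization directly from append-only registration and state-affinity. Your normalization lemma tries to prove it, and that is where the argument breaks.

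State-affinity restricts data flow (which value is deposited), not control flow (whether and how often the deposit happens). Take $H_\ell=\letbind{y}{\attest\,\ell'\,V}{\registerop(e,a,v)}$ with $y$ not free in $e,a,v$. This clause is state-affine. Yet its registration runs once per resumption granted by the enclosing handler for $\ell'$, and in particular not at all if that handler discards its resumption. The hoisted term $\registerop(e,a,v);\,\attest\,\ell'\,V$ deposits $(e,a,v)$ unconditionally, so the two terms have different output registers. Hence ``the arguments mention no attestation-bound variable and the update is independent of the returned value'' does not let you commute an append past an attestation. The normal form (constant state update)~$\bind$~(state-preserving computation) is therefore not denotation-preserving. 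Closing this needs a hypothesis on sequencing, not just on data flow. For instance, you could require that no registration in $H_\ell$ is evaluated in the scope of an attestation; that is strictly stronger than state-affinity as the lemma defines it.

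The inductive scaffolding also places the hypothesis incoherently. State-affinity is a property of where a registration's free variables are bound. It therefore cannot be invoked at the local \textsc{Reg} node of an induction on the typing derivation, so that case would have to hold outright. Your general closure under Kleisli composition would then give the equation for every clause. That contradicts your own remark that the equation fails without state-affinity, yet you reuse exactly that closure step to glue the two factors after normalizing.

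The closure step is unsupported in any case. A strength provides Kleisli liftings only for functors of the form $X\times(-)$. Lifting $\Sigma_f$ and $f^*$, which also change categories, needs distributive laws such as $\Sigma_f T\Rightarrow T\,\Sigma_f$ for the state monad $T$, compatible with its unit and multiplication. Moreover, monoidality is not functoriality.

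Two base cases are also not established:
\begin{itemize}
\item $\lambda\register.\,\mathrm{await}(\ell,V,\register)$ is not an arrow of the state monad at all, let alone a component of $\epsilon^\Sigma$, so the attestation case is not ``definitional''.
\item The pure case is not immediate either. The displayed equation $\Sigma_f(g)\circ\epsilon^\Sigma_A=\epsilon^\Sigma_C\circ f^*(g)$ is not the counit's naturality square $g\circ\epsilon^\Sigma_A=\epsilon^\Sigma_C\circ\Sigma_f f^*(g)$; the paper's proof makes the same identification. The statement must be repaired before that step can be used.
\end{itemize}
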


\begin{proof}
The handler body $H_\ell$ is typed in context
$x{:}A,\; r{:}B \to C$ with effects
$\effectrow{\sigmarow'}{\xirow'}\,C$. Its denotation
$\denotation{H_\ell}$ is a Kleisli morphism
$A \to T(C)$ where $T(X) = \register \to (\register \times X)$
is the state monad induced by $\Sigma_f$.

By the state-affine restriction, $\xirow'$-operations in $H_\ell$
deposit facts whose values are computed from the payload $x{:}A$
and the resumption $r$, but not from intermediate attestations.
This means the state transformation
$\sigma : \register \to \register$ is a function of $x$ and the
initial register, independent of the adjunction counit
$\epsilon^\Sigma$.

We decompose $\denotation{H_\ell}$ as follows. Write
$\denotation{H_\ell} = \phi \circ \sigma$ where:
\begin{itemize}
  \item $\sigma : A \times \register \to A \times \register'$
    is the state transformation: it deposits facts (from
    $\xirow'$-operations) into the register. Because the register
    is append-only (Lemma~\ref{lem:monotonic-register}), $\sigma$
    is a monoid homomorphism on $\register$: it adds facts without
    reading or removing them. The state-affine restriction ensures
    that the values deposited depend only on $x$ and the initial
    register state, not on intermediate attestation results.
  \item $\phi : A \to C$ is the value transformation: it computes
    the handler's return value, which may involve invoking the
    resumption $r$ (an $\sigmarow'$-operation). The value
    transformation is a morphism in $\mathcal{E}$ (the fiber
    category), natural with respect to $\epsilon^\Sigma$ by the
    adjunction's naturality condition.
\end{itemize}

The decomposition $\denotation{H_\ell} = \phi \circ \sigma$ is
valid because:
\begin{enumerate}
  \item Registration is append-only: the state transformation
    $\sigma$ never reads previously deposited facts to compute
    new values. It only appends.
  \item The state-affine restriction ensures that deposited fact
    values do not depend on intermediate attestation results,
    so $\sigma$ does not depend on $\epsilon^\Sigma$.
  \item The value transformation $\phi$ may invoke the resumption
    $r$, which involves $\epsilon^\Sigma$, but $\phi$ does not
    modify the register (only $\sigmarow'$-operations modify the
    register, and these are in $\sigma$, not $\phi$).
\end{enumerate}

The pure component $\phi$ is a morphism in $\mathcal{E}$, so it is
natural with respect to $\epsilon^\Sigma$ by the adjunction's
naturality condition:
\[
\Sigma_f(\phi) \circ \epsilon^\Sigma_A =
  \epsilon^\Sigma_C \circ f^*(\phi)
\]

The state component $\sigma$ acts on the register, which is
orthogonal to the adjunction structure: $\epsilon^\Sigma$ operates
on the existential quantifier (computation realm), while $\sigma$
operates on the fiber objects (register realm). The state monad's
strength $\tau_{A,\register} : A \times \register \to T(A)$ ensures
that state threading commutes with the counit:
\[
T(\epsilon^\Sigma_A) \circ \tau_{\Sigma_f(A),\register} =
  \tau_{A,\register} \circ (\epsilon^\Sigma_A \times \mathrm{id}_\register)
\]
This is the standard strength-naturality condition for a strong
monad~\cite{moggi-1991}.

Since $\sigma$ is a monoid homomorphism on the append-only register
and does not depend on $\epsilon^\Sigma$, the composition
$\phi \circ \sigma$ preserves the naturality equation:
\[
\Sigma_f(\denotation{H_\ell}) \circ \epsilon^\Sigma_A =
  \epsilon^\Sigma_C \circ f^*(\denotation{H_\ell})
\]
in the Kleisli category, with state threaded via strength.

\end{proof}

\subsection{Algebraic Soundness: Full Case Analysis}
\label{sec:category:alg-soundness-detailed}

We expand the \textsc{Handle} case, which is the most intricate.

\paragraph{Case \textsc{Handle} (handler interception).}

Reduction: $\handleop{E[\attest{\ell}{V}]}{\withop{H}} \to E[H_\ell(V, r)]$
where $r = \lambda x.E[x]$.

Denotation before reduction:
\[
\denotation{\handleop{E[\attest{\ell}{V}]}{\withop{H}}}_\register
  = \denotation{H}_{\text{lifted}} \circ \denotation{E[\attest{\ell}{V}]}_\register
\]

$\denotation{E[\attest{\ell}{V}]}_\register$ is a Kleisli morphism that:
1. Computes $E[\dots]$ up to attestation point
2. Suspends, awaiting witness for $\ell$
3. Returns $(\register, \text{suspended computation})$

$\denotation{H}_{\text{lifted}}$ lifts the handler clauses to the state monad. For clause $\ell(x, r) \mapsto H_\ell$:
\[
\denotation{H}_{\text{lifted}}(\register, \attest\,\ell\,v) = \denotation{H_\ell(v, r)}_\register
\]
where $r = \lambda x.\lambda\register'.(\register', x)$ (pure resumption).

Denotation after reduction:
\[
\denotation{E[H_\ell(V, r)]}_\register
\]

By Lemma~\ref{lem:handler-naturality}, $H_\ell$ satisfies the naturality condition:
\[
\Sigma_f(\denotation{H_\ell}) \circ \epsilon^\Sigma_A = \epsilon^\Sigma_C \circ f^*(\denotation{H_\ell})
\]

This ensures the handler transformation commutes with $\epsilon^\Sigma$, preserving denotation.
The state is threaded via strength, so $\register$ maps to $\register'$ correctly.

Thus $\denotation{\text{before}} = \denotation{\text{after}}$, preserving algebraic soundness.

\begin{remark}[State-Affine Restriction]
The state-affine condition is satisfied by all handler bodies in the
$\because$-calculus that do not interleave registration with
attestation in a value-dependent way. In practice, handler bodies
either deposit facts based on the operation payload (registration)
or suspend computation awaiting a witness (attestation), but do not
use attestation results to compute fact values for subsequent
registrations within the same handler clause. This follows from
the phasic separation in the tower: computation and interpretation
are separate phases, and handler bodies operate within a single
computation phase.
\end{remark}

\begin{theorem}[Algebraic Soundness]
\label{thm:alg-soundness}
If $\context_n \vdash M : \effectrow{\sigmarow}{\xirow}\,A$
(reduction phase) and
$(M, \register_n) \to (M', \register_n')$, then
$\denotation{M}_{\register_n} = \denotation{M'}_{\register_n'}$
as morphisms $\register_n \to (\register_n', A)$.
\end{theorem}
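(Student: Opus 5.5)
The proof goes by induction on the derivation of the one-step reduction $(M,\register_n)\to(M',\register_n')$. There is one case per base reduction rule (\textsc{Beta}, let, \textsc{Reg}, \textsc{Ret}, \textsc{Attest}), plus the two closure rules for $E$-contexts and $F$-contexts. We use Definition~\ref{def:denotation} throughout and compare both sides as Kleisli morphisms of the state monad $T(X)=\register\to(\register\times X)$.

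Before the case analysis we establish a \emph{compositionality lemma}: for every evaluation context $E$ (and likewise $F$),
\[
\denotation{E[N]}_\register \;=\; \denotation{N}_\register \bind \lambda u.\;\denotation{E[u]}.
\]
This is proved by induction on $E$, using the clause for application, which sequences $M$ before $N$. With this lemma the two closure rules follow directly from the induction hypothesis on the redex, because $\bind$ is a congruence. The typing side conditions needed for the induction hypothesis come from Lemma~\ref{lem:inversion}.

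The base cases are handled as follows.

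\textbf{\textsc{Beta} and let.} First show $\denotation{M[V/x]}=\denotation{M}[\denotation{V}/x]$ for pure $V$. This is the semantic counterpart of Lemma~\ref{lem:substitution}, and purity of $V$ makes $\denotation{V}$ a $\return$. Then apply the monad's left unit law together with the $\beta$-law for $\Lambda$.

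\textbf{\textsc{Reg}.} Both sides equal $\lambda\register.(\mathsf{unit},\register\cup\{(e,a,v,\tau)\})$ by definition. The $F$-context pass-through rule reduces to this case through compositionality. Here we must take $\register_n'=\register_n\cup\{f\}$ and read the equation as ``run from $\register_n$, yielding $\register_n'$''.

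\textbf{\textsc{Ret}.} This case is immediate from the $\mathsf{ret}$ clause of $\denotation{H}_{\text{lifted}}$ combined with the unit law.

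\textbf{\textsc{Attest}.} This is the handler-interception case expanded in Section~\ref{sec:category:alg-soundness-detailed}, and it is the main obstacle. The before-side is $\denotation{H}_{\text{lifted}}\circ\denotation{E[\attest\,\ell\,V]}$. By compositionality it unfolds to the handler meeting $\mathrm{await}(\ell,V,\register)$ with continuation $\lambda u.\denotation{E[u]}$. The after-side is $\denotation{E[H_\ell(V,r)]}$ with $r=\lambda x.E[x]$. The plan has three steps:
\begin{enumerate}
\item Show that the resumption bound semantically matches the syntactic $r$, namely that $\denotation{r}=\lambda x.\denotation{E[x]}$ under state threading.
\item Invoke Lemma~\ref{lem:handler-naturality}, whose hypothesis is guaranteed by the state-affine premise of \textsc{Handle}. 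This lets the handler clause commute past $\epsilon^\Sigma$, so that plugging the clause into $E$ equals applying the lifted handler to the suspended computation.
\item Use strength naturality to ensure that registrations performed inside $H_\ell$ land in the same output register on both sides.
\end{enumerate}

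The delicate point is that the paper's lifted resumption $\lambda x.\lambda\register'.(\register',x)$ is pure. The continuation $E$ must therefore be accounted for by the compositionality lemma rather than by $r$, and the reconciliation of these two readings is where the argument needs the most care. If the direct equation fails, the fallback is to establish equality up to the observational equivalence of Definition~\ref{def:observable-behavior}.

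The Resumption Subconstraint guarantees that no $\xirow$-operation reaches this case, so registrations never appear as intercepted operations. This completes the case analysis.
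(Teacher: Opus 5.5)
\textbf{Main gap: the \textsc{Attest} case.} Your structure, induction on the reduction derivation plus a compositionality lemma for evaluation contexts, is tidier than the paper's bare case analysis. The \textsc{Attest} case, however, is a genuine gap, and your own caveat points at it. None of your three steps produces the equation, and under the paper's definitions the equation is false.

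\emph{Step 2 does not apply.} Lemma~\ref{lem:handler-naturality} only says that $\denotation{H_\ell}$ commutes with components of $\epsilon^\Sigma$. It mentions neither $E$ nor the resumption, so it cannot turn ``lifted handler applied to the suspended computation'' into ``clause plugged into $E$''.

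\emph{Step 1 contradicts Definition~\ref{def:denotation}.} The lifted handler binds the pure resumption $\lambda x.\lambda\register'.(\register',x)$, so on the before-side $E$ is simply discarded. The reduct $E[H_\ell(V,\lambda x.E[x])]$, by contrast, runs $E$ both around the clause \emph{and} inside $r$, and it drops the handler. It is also well-typed only when $A=B=C$.

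\emph{A counterexample.} Take $A=B=C=\mathsf{Unit}$ and $E=(\lambda z{:}\mathsf{Unit}.\,\registerop(p,c,\mathsf{unit}))\,[\,]$. Use the clause $\ell(x,r)\mapsto r\,\mathsf{unit}$, which is vacuously state-affine, and $\mathsf{ret}(y)\mapsto y$. The before-side denotation leaves the register unchanged, while the reduct deposits $(p,c,\mathsf{unit})$, in fact twice. This step does not change the register, so the two sides differ under either reading of the equation. Your fallback to observational equivalence does not rescue this. Definition~\ref{def:observable-behavior} is defined only for artifacts, and the register difference is visible to the tower through $\Delta\register$.

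\textbf{The missing idea and a repair.} What is missing is a semantic domain that can represent suspension. The state monad $T(X)=\register\to(\register\times X)$ has no element $\mathrm{await}(\ell,V,\register)$. So unfolding $\denotation{E[\attest\,\ell\,V]}$ as an await with continuation $\lambda u.\denotation{E[u]}$ already presupposes a state-threaded free (resumption) monad. In that setting you need two changes:
\begin{enumerate}
\item define the lifted handler as a fold that passes the captured continuation, re-wrapped in the handler, as $r$;
\item use the standard rule $\handleop\,E[\attest\,\ell\,V]\,\withop\,H\to H_\ell[V/x,\,(\lambda y.\,\handleop\,E[y]\,\withop\,H)/r]$.
\end{enumerate}
The case then follows from the fold's defining equation plus your compositionality lemma, with no naturality argument at all. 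The paper's own proof of this case is just your step 2, so it has the same hole.

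\textbf{Smaller points.}
\begin{itemize}
\item Your reading of \textsc{Reg} is the right one. Taken literally, $\denotation{\mathsf{unit}}=\lambda\register.(\mathsf{unit},\register)$ is a different morphism from the registration's.
\item For \textsc{Beta}, the paper's application clause ends in $\return(f\,v)$ with $f$ a Kleisli arrow. The unit laws then leave an extra $\return$ unless you use $\cdots\bind\lambda v.\,f\,v$.
\item Your compositionality lemma for $F$-contexts containing a $\handleop\,F\,\withop\,H$ frame holds only for redexes that handler does not intercept. That suffices for pass-through registration, but it should be stated.
\item The paper also treats method invocation, which your case list omits.
\end{itemize}
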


\begin{proof}
By case analysis on the reduction rule.

\textbf{Beta:}
$(\lambda x{:}T.\,M)\,V \to M[V/x]$.
\begin{align*}
\denotation{(\lambda x.\,M)\,V}_\register
  &= \return(\Lambda(\denotation{M})) \bind
     \lambda f.\; \return(V) \bind \lambda v.\;
       \return(f\,v) \\
  &= \denotation{M}_\register[V/x]
  = \denotation{M[V/x]}_\register
\end{align*}
By the currying isomorphism in the CCC and the substitution
lemma (Lemma~\ref{lem:substitution}).

\textbf{Registration:}
$(\registerop(e,a,v),\;\register_n) \to
(\mathsf{unit},\;\register_n \cup \{(e,a,v,\tau)\})$.
\begin{align*}
\denotation{\registerop(e,a,v)}_{\register_n}
  &= \lambda \register.\;(\mathsf{unit},\;\register \cup
     \{(e,a,v,\tau)\}) \\
\denotation{\mathsf{unit}}_{\register_n \cup \{(e,a,v,\tau)\}}
  &= \lambda \register.\;(\mathsf{unit},\;\register)
\end{align*}
Both morphisms map $\register_n$ to
$(\register_n \cup \{(e,a,v,\tau)\},\;\mathsf{unit})$.
Categorically, this is $\eta^\Sigma_v : v \to \Sigma_f(v)$ composed
with the terminal map $! : \Sigma_f(v) \to 1$, depositing the fact
and returning unit.

\textbf{Attestation:}
$\handleop\,E[\attest\,\ell\,V]\,\withop\,H \to
E[H_\ell(V, r)]$ where $r = \lambda x.\,E[x]$.
By Lemma~\ref{lem:handler-naturality}, the handler body $H_\ell$,
despite being a Kleisli morphism (with effects
$\effectrow{\sigmarow'}{\xirow'}\,C$), satisfies the naturality
condition via decomposition into a pure value transformation and
a state transformation, with state threaded through the monad's
strength. The handler transforms the computation---it is not the
identity---but the transformation commutes with $\epsilon^\Sigma$.
The denotation is preserved. The register is modified only by
$\xirow'$ operations in the handler body, which is consistent with
the state-indexed denotation.

\textbf{Return:}
$\handleop\,V\,\withop\,H \to H_{\mathsf{ret}}[V/y]$.
The return clause $H_{\mathsf{ret}}$ is a morphism applied to the
value $V$. Denotation is preserved directly: $\denotation{H_{\mathsf{ret}}[V/y]}_\register = \denotation{\handleop\,V\,\withop\,H}_\register$.

\textbf{Method invocation:}
$\mathsf{obj}.m(\mathsf{arg}) \to
M_{\mathsf{body}}[\mathsf{obj}/\mathit{self},
\mathsf{arg}/\mathit{param}]$.
By the substitution lemma (Lemma~\ref{lem:substitution}) applied to
the method body. The method body is a $\lambda$-term; its denotation
is the curried morphism, and invocation applies it to the arguments.

\end{proof}

\begin{theorem}[Coalgebraic Soundness]
\label{thm:coalg-soundness}
If $\lawlib_n \vdash \designate\,T\,f : T$ (expansion phase) with
$f \in \register_n$, then the artifact $a = \designate\,T\,f$
satisfies $\denotation{a}_\register = \eta^\Pi_f$ and
$\denotation{\promote(a)}_\register = \epsilon^\Pi_a$.
Promotion composed with designation equals the identity on
observable behavior:
\[
\epsilon^\Pi_{\Pi_f(f)} \circ \Pi_f(\eta^\Pi_f) = \mathrm{id}.
\]
\end{theorem}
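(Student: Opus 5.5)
The first two claims follow by unfolding Definition~\ref{def:denotation}; the third is the triangle identity for $f^* \dashv \Pi_f$. I will take the three parts in that order.

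\emph{Denotation of the artifact.} Assume $\lawlib_n \vdash \designate\,T\,f : T$. By inversion on \textsc{Designate}, the fact $f$ is in $\register_n$, the schema $T$ is in $\lawlib_n$, and the predicate $P$ holds on the fact's value. By Proposition~\ref{prop:designate-det}, $a = \designate\,T\,f$ is then a single, well-defined artifact. Definition~\ref{def:denotation} sets $\denotation{\designate\,T\,f}_\register = \eta^\Pi_f$, and designation leaves the register unchanged, so $\denotation{a}_\register = \eta^\Pi_f$ holds at every $\register$ with $f \in \register$. The promotion clause is the same kind of check: $\denotation{\promote(a)}_\register = \epsilon^\Pi_a$ is the promotion clause of the denotation.

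\emph{Promotion after designation.} To compose the two, I read the counit at the object $\Pi_f(f)$, as in the statement. The composite $\epsilon^\Pi_{\Pi_f(f)} \circ \Pi_f(\eta^\Pi_f)$ is then one of the two triangle (zig-zag) identities of $f^* \dashv \Pi_f$, specifically the one for the right adjoint, evaluated at $f$. I will write out the unit $\eta^\Pi : \mathrm{Id} \to \Pi_f f^*$ and the counit $\epsilon^\Pi : f^*\Pi_f \to \mathrm{Id}$, with their domains and codomains stated explicitly. Then the standard identity $\Pi_f\epsilon^\Pi \circ \eta^\Pi\Pi_f = \mathrm{id}_{\Pi_f}$ (or its dual, depending on the side) gives the equation, and I will match its whiskerings to the notation in the statement.

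\emph{Identity on observable behaviour.} By Definition~\ref{def:observable-behavior}, observations factor through method invocations $a.m(v)$. By the invocation clause of the denotation, these are method bodies with arguments substituted, and the substitution is sound by Lemma~\ref{lem:substitution}. Since the composite is the identity morphism, the artifact's methods and fields are unchanged, and so $\obs$ of the promoted artifact equals $\obs$ of the artifact itself.

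The main obstacle is the middle step, because the paper's notation does not match the standard adjunction orientations. The paper writes the counit of $f^* \dashv \Pi_f$ as $\Pi_f f^* \to \mathrm{Id}$, which is the wrong shape for that adjunction. My first move is to fix the correct directions of $f^*$ and $\Pi_f$ between $\mathcal{B}$ and $\mathcal{E}$, then identify $\epsilon^\Pi_{\Pi_f(f)} \circ \Pi_f(\eta^\Pi_f)$ with the genuine zig-zag composite, possibly up to the canonical relabelling already used in the proof of Theorem~\ref{thm:coherence}. If that identification fails as literally typed, I will prove the equality only at the level of observable behaviour, using the triangle identity together with naturality of $\epsilon^\Pi$.
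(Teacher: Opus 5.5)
Your decomposition matches the paper's. Its proof takes $\denotation{a}_\register=\eta^\Pi_f$ and $\denotation{\promote(a)}_\register=\epsilon^\Pi_a$ directly from Definition~\ref{def:denotation}, then cites ``the triangle identity of $f^*\dashv\Pi_f$'' for the displayed equation, so your first part is fine. The gap is in your middle step, and the paper shares it rather than resolving it: the displayed composite is not a triangle identity of $f^*\dashv\Pi_f$ under any reading of the whiskerings. Take $f^*:\mathcal{B}\to\mathcal{E}$ and $\Pi_f:\mathcal{E}\to\mathcal{B}$ (as in Theorem~\ref{thm:meaning}), with $\eta^\Pi:\mathrm{Id}\to\Pi_f f^*$ and $\epsilon^\Pi:f^*\Pi_f\to\mathrm{Id}$. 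The two zig-zags are then $\epsilon^\Pi_{f^*X}\circ f^*(\eta^\Pi_X)=\mathrm{id}_{f^*X}$ and $\Pi_f(\epsilon^\Pi_Y)\circ\eta^\Pi_{\Pi_f Y}=\mathrm{id}_{\Pi_f Y}$. The expression $\epsilon^\Pi_{\Pi_f(f)}\circ\Pi_f(\eta^\Pi_f)$ has the shape $\epsilon_{LX}\circ L(\eta_X)$ with $L=\Pi_f$. That is the left-adjoint zig-zag of the \emph{reversed} adjunction $\Pi_f\dashv f^*$. Its counit $\Pi_f f^*\to\mathrm{Id}$ is the shape the paper wrongly gives $\epsilon^\Pi$, but its unit would be $\mathrm{Id}\to f^*\Pi_f$, not the paper's $\eta^\Pi$.

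For $f^*\dashv\Pi_f$ the expression is ill-typed. $\eta^\Pi_f$ is a morphism of $\mathcal{B}$, so $\Pi_f(\eta^\Pi_f)$ is undefined, and $\epsilon^\Pi$ cannot be indexed at $\Pi_f(f)\in\mathcal{B}$. Even for an endo-adjunction, the codomain $\Pi_f\Pi_f f^*(f)$ of $\Pi_f(\eta^\Pi_f)$ differs from the domain $f^*\Pi_f\Pi_f(f)$ of $\epsilon^\Pi_{\Pi_f(f)}$. So ``matching the whiskerings to the notation in the statement'' cannot succeed. The ``canonical relabelling'' you hope to borrow from Theorem~\ref{thm:coherence} is the same formula, so it does not help. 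The only sound version of this step replaces the display with one of the two correctly typed identities above, which hold by the definition of an adjunction.

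Your fallback does not close the gap either. Definition~\ref{def:observable-behavior} compares $\denotation{a.m(v)}_\register$, and Definition~\ref{def:denotation} defines invocation by substitution into $M_{\mathsf{body}}$, with no reference to $\eta^\Pi$ or $\epsilon^\Pi$. Nothing in the paper links the two. So neither ``the composite is the identity, hence methods and fields are unchanged'' (your third part) nor naturality of $\epsilon^\Pi$ yields $\mathsf{Obs}(\promote(a))=\mathsf{Obs}(a)$; in the paper, that clause is likewise an unproved gloss. If you want it to carry content, argue it syntactically. Promotion, as used in \textsc{T-Step}, places the same artifact $T\{\ldots\}$ in $\context_{n+1}$. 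It does not alter the method table $\lawlib_n(T,-)$ that \textsc{Invoke} consults, so invocations unfold to the same bodies. That argument makes no use of the triangle identity.
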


\begin{proof}
By the triangle identity of $f^* \dashv \Pi_f$:
$\epsilon^\Pi \circ \Pi_f(\eta^\Pi) = \mathrm{id}$.
Designation ($\eta^\Pi$) followed by promotion ($\epsilon^\Pi$) is
the identity on the artifact's observable behavior. The artifact
observes the same facts as the source fact, expanded into method
behavior. The register is unchanged by both designation and
promotion.

\end{proof}

\begin{table}[ht]
\centering
\caption{Reduction Rules as Adjunction Equations}
\label{tab:red-as-adj}
\small
\begin{tabular}{lll}
\toprule
Rule & Categorical Equation & Adjunction \\
\midrule
Beta & Currying isomorphism & CCC \\
Reg & $\eta^\Sigma_v \circ !$ (unit $+$ terminal) & $\Sigma_f \dashv f^*$ \\
Attest & Naturality of $\epsilon^\Sigma$ & $\Sigma_f \dashv f^*$ \\
Designate & $\eta^\Pi$ naturality & $f^* \dashv \Pi_f$ \\
Promote & Triangle identity: $\epsilon^\Pi \circ \Pi_f(\eta^\Pi) = \mathrm{id}$ & $f^* \dashv \Pi_f$ \\
Invoke & Substitution lemma & Bialgebra \\
\bottomrule
\end{tabular}
\end{table}

\begin{remark}[Terminology: ``Non-Faithful'']
The term ``non-faithful'' (Theorem~\ref{thm:conflation}) refers to the
erasure map $(-)^\flat$ failing to preserve well-typedness. In
category theory, a functor $F : \mathcal{C} \to \mathcal{D}$ is
faithful if $\mathrm{Hom}_{\mathcal{C}}(X,Y) \to
\mathrm{Hom}_{\mathcal{D}}(FX,FY)$ is injective. Here, $(-)^\flat$
is not faithful in this sense: there exist terms $M_1, M_2$ such that
$M_1^\flat = M_2^\flat$ but $M_1 \neq M_2$ (the dual-effect-row
structure is lost). This is a precise categorical usage.
\end{remark}

\section{The Conflation Theorem}
\label{sec:conflation}

Handler calculus admits handler clauses that bind resumptions for operations whose return type is uninhabited. While such clauses cannot be meaningfully invoked (no witness exists), the type system does not prohibit them. The $\because$-calculus prohibits them structurally via the Resumption Subconstraint. The contribution is operational discipline, not a type-safety violation.

\subsection{Worked Example: The Citizenship Tower}
\label{sec:conflation:example}

We trace a computation through two tower levels, demonstrating all
four movements and conditional fact derivation via law library
guards.

\paragraph{Level 0: Registration and Attestation.}
Term $M_0$ registers two facts about a person and attests identity:
\[
M_0 = \letbind{x_1}{\registerop(\mathsf{person}, \mathsf{name},
  \textsf{"Maria"})}
       {\letbind{x_2}{\registerop(\mathsf{person}, \mathsf{age}, 30)}
        {\attest\;\mathsf{id}\;\mathsf{unit}}}
\]
A handler $H_0 = \{\attest\;p\;r \mapsto r\;\mathsf{true},
\mathsf{return}\;x \mapsto x\}$ intercepts attestation. After
evaluation, $\register_0 = \{(\mathsf{person},\mathsf{name},
\textsf{"Maria"}), (\mathsf{person},\mathsf{age},30)\}$ and the
return value is $\mathsf{true}$.

\paragraph{Conditional Designation.} The law library contains
$T_{\mathsf{VoterID}}$ (guard: $\mathsf{age} > 18$) and
$T_{\mathsf{PensionRights}}$ (guard: $\mathsf{age} > 60$). Since
$\mathsf{age} = 30$, only VoterID matches:
$\lawlib_0 \vdash \designate\;T_{\mathsf{VoterID}}\;f$.
Artifact $a_1 = \mathsf{Maria}_{\mathsf{VoterID}}$ is promoted to
level 1.

\paragraph{Level 1: Invocation.} The promoted artifact's method
$\mathsf{voting\_rights}$ is invoked, returning $\mathsf{true}$,
which is registered as a new fact $(\mathsf{vote},
\mathsf{cast}, \mathsf{true})$ in $\register_1$. This demonstrates
the full cycle: registration ($\to$), attestation ($\downarrow$),
designation ($\uparrow$), and method invocation at the next tower
level.

\subsection{Handler Calculus as Collapsed Adjunction}
\label{sec:conflation:collapsed}

Handler calculus provides the left adjunction $\Sigma_f \dashv f^*$ but
lacks the right adjunction $f^* \dashv \Pi_f$. It routes all effect
operations---both resumable ($B \neq 0$) and non-resumable ($B = 0$)---through
a single \texttt{do} construct with one effect row $\varepsilon$.
This collapses the adjoint triple.

\begin{table}[ht]

\centering
\small
\begin{tabular}{lcc}
\toprule
& Handler Calculus & $\because$-Calculus \\
\midrule
Effect model & Single row $\varepsilon$ & Dual rows $\sigmarow, \xirow$ \\
Operation types & $A \to B$ (uniform) & $B=0 \Rightarrow \xirow$ \\
Resumption & Always available & Only for $\sigmarow$ \\
Adjoint structure & Collapsed & Separated $\Sigma_f \dashv f^* \dashv \Pi_f$ \\
\bottomrule
\end{tabular}
\caption{Comparison: handler calculus conflates adjoints; $\because$-calculus separates them.}
\label{tab:collapsed-adjunction}
\end{table}

The collapse is not detectable by the type checker because handler calculus
tracks only effect \emph{presence}, not adjunction \emph{identity}.

\subsection{Structural Separation vs. Visibility}

The Conflation Theorem addresses a structural problem, not an ergonomic one.
Handler calculus conflates resumable ($B \neq 0$) and non-resumable ($B = 0$) 
operations through a single \texttt{do} construct, distinguished only by result 
type annotation. The $\because$-calculus separates them structurally via dual 
effect rows.

This separation is \emph{categorical}, not \emph{visibility-based}: even if both 
operations were exposed to all program contexts, the dual-effect-row discipline 
would still prevent resumption for $\xirow$-row operations. The Resumption 
Subconstraint is enforced at the type level, not the programmer interface level.

\begin{table}[h]
\centering
\caption{Conflation: Handler Calculus vs. $\because$-Calculus}
\label{tab:conflation-comparison}
\small
\begin{tabular}{lll}
\toprule
Feature & Handler Calculus & $\because$-Calculus \\
\midrule
Effect model & Single row $\varepsilon$ & Dual rows $\sigmarow, \xirow$ \\
Operation types & $A \to B$ (uniform) & $B=0 \Rightarrow \xirow$, $B\neq0 \Rightarrow \sigmarow$ \\
Resumption & Always syntactically available & Only for $\sigmarow$-typed operations \\
Adjoint structure & Collapsed to $\Sigma_f \dashv f^*$ & Full triple $\Sigma_f \dashv f^* \dashv \Pi_f$ \\
Conflation prevention & None (dead code admitted) & Structural (type system rejects) \\
\bottomrule
\end{tabular}
\end{table}

The erasure $(-)^\flat$ maps $\because$-calculus terms to handler calculus terms by 
forgetting the dual-effect-row structure. This erasure is non-faithful: it maps 
rejected clauses to accepted ones, losing the distinction between resumable and 
non-resumable operations. The $\because$-calculus is a conservative refinement: 
all well-typed $\because$-calculus terms map to well-typed handler calculus terms 
(Corollary~\ref{cor:confluence}), but not vice versa.

\begin{theorem}[Non-Faithfulness]
\label{thm:conflation}
Handler calculus permits handler clauses that bind resumptions for
operations with return type $B = 0$. For any such clause, the
resumption $r : 0 \to C$ can never be usefully applied, since $0$
is uninhabited---the clause is dead code. The $\because$-calculus
rejects such clauses at compile-time via the Resumption
Subconstraint. The erasure $(-)^\flat$ is therefore non-faithful:
it maps a rejected clause to an accepted one, losing the distinction
between resumable and non-resumable operations.
\end{theorem}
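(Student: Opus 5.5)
The argument has three parts. First, exhibit a witnessing clause and show handler calculus accepts it. Second, show the $\because$-calculus rejects every clause of its kind. Third, show the erasure identifies the two.

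\textbf{Witness and acceptance.} Take a handler calculus operation $\mathsf{op} : A \to 0$ and the handler
\[
H^\flat = \{\mathsf{op}(x,r) \mapsto r\,N \mid \mathsf{ret}(y) \mapsto \return\,y\},
\]
where $N$ is any term of type $0$ that is typable in the context $x{:}A,\; r{:}0 \to C$. A concrete choice is $N = \mathsf{absurd}$ applied to a variable of type $0$. Handler calculus types clauses uniformly as $x{:}A,\; r{:}B \to C \vdash H_i : C$, with no side condition on $B$. Instantiating $B=0$ therefore makes $\handleop\,(\mathsf{do}\,\mathsf{op}\,v)\,\withop\,H^\flat$ well typed.

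For the dead-code claim, I will argue that $r : 0 \to C$ can only be applied to an argument of type $0$. Because $0$ has no closed inhabitants, closed reduction never reaches a redex $r\,V$. This follows from canonical forms in handler calculus, which is analogous to Lemma~\ref{lem:canonical-forms}, since no value rule produces type $0$.

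\textbf{Rejection in the $\because$-calculus.} Any $\because$-preimage $M$ of $H^\flat$ must realize $\mathsf{op}$ either by $\registerop$ or by $\attest$. By the typing side, rule \textsc{Reg} forces the result row to be $\effectrow{\emptyset}{\{\mathsf{reg}\}}\,\mathsf{Unit}$, the $B=0$ case that the paper's erasure maps to $\mathsf{op}:A\to 0$. Lemma~\ref{lem:inversion} then says that $\mathsf{reg}$ lies in $\xirow$ and never in $\sigmarow$.

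Rule \textsc{Handle} only admits clauses $\ell_i(x_i,r_i)$ for $\ell_i \in L \subseteq \sigmarow$. A clause for $\mathsf{reg}$ that binds $r$ therefore has no derivation, and this is exactly the Resumption Subconstraint. If one tried to place $\mathsf{op}$ in $\sigmarow$ via \textsc{Attest} instead, the erasure would send it to $B \neq 0$, contradicting $\mathsf{op}:A\to 0$. Hence no well-typed preimage exists.

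\textbf{Non-faithfulness.} Let $M_1$ be the ill-typed $\because$-term binding $r$ for $\registerop$, and let $M_2$ be the well-typed term obtained by dropping the clause, relying on pass-through through $F$-contexts. I expect $M_1^\flat$ and $M_2^\flat$ to have the same typing judgment in handler calculus, although the terms themselves are not literally equal. Together with the Remark on terminology, the failure of the erasure to reflect typing then gives the claim in its stated sense: a rejected clause maps to an accepted one.

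\textbf{Main obstacle.} The hard part is pinning down the erasure $(-)^\flat$ precisely enough to argue about preimages, since the paper defines it only informally as ``forgetting the dual rows.'' I would fix it first by structural recursion:
\begin{itemize}
\item $\registerop(e,a,v)^\flat = \mathsf{do}\,\mathsf{reg}\,(e,a,v)$ with $\mathsf{reg}:S\to 0$;
\item $(\attest\,\ell\,V)^\flat = \mathsf{do}\,\ell\,V^\flat$;
\item $\effectrow{\sigmarow}{\xirow}^\flat = \sigmarow\cup\xirow$;
\item handlers are erased clausewise.
\end{itemize}
I would then extend the map to raw, possibly untyped, terms so that rejected clauses have images at all.
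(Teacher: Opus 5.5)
This is essentially the paper's proof. Your three steps match its Parts 1--3: handler calculus accepts a clause binding $r : 0 \to C$; the \textsc{Handle} premise $L \subseteq \sigmarow$ leaves no derivation for a resumption-binding clause on $\mathsf{reg} \in \xirow$; and the erasure sends the rejected clause to the accepted one. Your explicit erasure stands in for Part 0, and your canonical-forms argument is a slightly sharper version of the paper's dead-code claim.

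Two small repairs are needed:
\begin{enumerate}
\item Your concrete $N$ is not typable, because no variable of type $0$ is in scope in $x{:}A,\, r{:}0 \to C$. Use a body that ignores $r$ (the paper's vacuous case), or take $N = \mathsf{do}\,\mathsf{op}\,x$.
\item Drop the $M_1/M_2$ paragraph. The theorem's stated notion of non-faithfulness does not need it, and it does not give non-injectivity anyway: $M_1^\flat \neq M_2^\flat$, and their effect rows generally differ as well, since the erased $\mathsf{reg}$ clause discharges $\mathsf{reg}$.
\end{enumerate}
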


\begin{proof}[Proof of Theorem~\ref{thm:conflation}]
We proceed in three parts.

\textbf{Part 0: Erasure.}

Define the erasure $(-)^\flat$ from $\because$-calculus terms to
handler calculus terms as follows:
\begin{itemize}
  \item $\registerop^\flat(e, a, v) = \texttt{do}\;\mathsf{reg}(v)$
    (collapse to single effect operation)
  \item $\attest^\flat(\ell, v) = \texttt{do}\;\ell(v)$
  \item $(\handleop\;M\;\withop\;H)^\flat =
    \texttt{handle}\;M^\flat\;\texttt{with}\;H^\flat$
  \item $\designate^\flat(T, f) = f$
    (HC has no artifact formation; the designate is erased to the fact)
  \item $\promote^\flat(\designate\,T\,f) = \designate^\flat(T, f) = f$
    (promotion adds only level structure, which is absent in HC)
  \item All other constructs map homomorphically.
\end{itemize}
The erasure forgets the distinction between $\sigmarow$ and $\xirow$,
placing all operations in a single effect row $\varepsilon$.

\textbf{Part 1: Handler Calculus Permits Resumption Bindings for $B=0$ Operations.}

Let $\mathsf{op} : A \to 0$ be an operation with return type $B = 0$
(the empty type). In handler calculus, this operation is placed in
the single effect row $\varepsilon$ alongside all other operations.
A handler clause for $\mathsf{op}$ has the form:
\[
\mathsf{op}(x, r) \mapsto \mathit{body}
\]
where $x : A$ is the payload and $r : 0 \to C$ is the resumption.
Handler calculus does not distinguish $B = 0$ from $B \neq 0$; both
are routed through the same \texttt{do} construct with the same
handler mechanism. The clause type-checks because the types are
internally consistent: $r$ has type $0 \to C$, the body has type
$C$, and the handler calculus typing rule does not inspect whether
$B = 0$.

\textbf{Claim:} For any handler clause where $r : 0 \to C$ is bound,
any invocation of $r$ in the body is dead code.

\emph{Proof of claim.} The resumption $r$ has type $0 \to C$.
To invoke $r$, the body must supply a value of type $0$. Since $0$
is the empty type (no inhabitants), there exists no value $w$ such
that $\vdash w : 0$. Therefore:
\begin{itemize}
  \item If the body invokes $r(w)$ for some $w$, evaluation is
    stuck: no inhabitant of $0$ can be supplied.
  \item If the body does not invoke $r$, then $r$ is unused---the
    clause does not need a resumption binding at all.
\end{itemize}
In both cases, the resumption binding is unnecessary. The first case
is dead code (stuck at runtime); the second case is vacuous (the
binding is unused). Handler calculus permits both. A type system
that tracks adjunction identity should reject the binding entirely,
since the operation is non-resumable ($B = 0$).
\hfill$\square_{\text{Part 1}}$

\textbf{Part 2: The $\because$-Calculus Rejects Such Clauses.}
In the $\because$-calculus, an operation $\mathsf{op} : A \to 0$
with $B = 0$ is placed in the $\xirow$-row by the \textsc{Reg}
typing rule (Rule~\ref{sec:syntax:rules:register}). The
\textsc{Handle} rule (Rule~\ref{sec:syntax:rules:handle}) includes
the \emph{Resumption Subconstraint}: a handler clause may bind a
resumption variable $r$ only for operations $\ell \in \sigmarow$
(where $B \neq 0$). For any operation in $\xirow$ (where $B = 0$),
the handler clause \emph{cannot} bind $r$.

Formally, the \textsc{Handle} rule requires:
\[
\forall i:\ \ell_i \in \sigmarow
\]
as a premise. Since $\mathsf{reg} \in \xirow$ (not $\sigmarow$),
no handler clause for $\mathsf{reg}$ can bind a resumption. The
derivation tree does not close:
\[
\inferrule[\textsc{Handle}]{
  \context_n \vdash \registerop(e,a,v) :
    \effectrow{\emptyset}{\{\mathsf{reg}\}}\;\mathsf{Unit}
  \\ \mathsf{reg} \in \sigmarow \;???
  \\ H \text{ binds } r \text{ for } \mathsf{reg} \in \xirow
}{
  ???
}
\]
The second premise is false ($\mathsf{reg} \notin \sigmarow$). The
third premise violates the Resumption Subconstraint. No alternative
derivation exists. Therefore:
\[
\context_n \not\vdash \handleop\;\registerop(e,a,v)\;\withop\;H
  : \effectrow{\sigmarow'}{\xirow'}\;C
\]
for any $H$ that binds a resumption for $\mathsf{reg}$.
\hfill$\square_{\text{Part 2}}$

\textbf{Part 3: Non-Faithfulness of the Erasure.}

The erasure $(-)^\flat$ maps $\xirow$-operations to the single
effect row $\varepsilon$, erasing the distinction between
$\sigmarow$ and $\xirow$. A handler clause rejected by the
$\because$-calculus (because the operation is in $\xirow$ and
the Resumption Subconstraint fires) is accepted by handler calculus
(because all operations are in $\varepsilon$ and no subconstraint
exists). The erasure is therefore non-faithful: it maps a rejected
clause to an accepted one.

Moreover, the accepted clause contains dead code (Part 1): the
resumption $r : 0 \to C$ can never be usefully invoked. The
$\because$-calculus prevents this at compile-time by prohibiting
the binding; handler calculus permits it at compile-time and
encounters stuck states at runtime (if the resumption is invoked)
or carries vacuous bindings (if it is not).

The embedding of the $\because$-calculus into handler calculus is
non-faithful.
\end{proof}

\subsection{Corollary: Confluence Despite Separation}
\label{sec:conflation:corollary}

\begin{corollary}
\label{cor:confluence}
The $\because$-calculus is a conservative refinement of handler calculus.
For every term $M$ well-typed in the $\because$-calculus, the erasure
$M^\flat$ is well-typed in handler calculus and exhibits equivalent
runtime behavior.
\end{corollary}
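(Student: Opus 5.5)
The corollary has two halves: typing preservation (if $\context_n \vdash M : \effectrow{\sigmarow}{\xirow}\,A$ then $M^\flat$ is well typed in handler calculus) and behavioral equivalence ($M$ and $M^\flat$ reduce in lock-step, modulo erased structure). I would prove both by induction, using the erasure $(-)^\flat$ fixed in Part~0 of the proof of Theorem~\ref{thm:conflation}.

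\textbf{Step 1: type erasure.} First I would define a translation on types. It sends $\effectrow{\sigmarow}{\xirow}\,A$ to the handler-calculus computation type $[\sigmarow \cup \xirow]\,A^\flat$, where the single row $\varepsilon = \sigmarow \cup \xirow$. It is the identity on base and arrow types. An interpretive type $T$ is sent to its source type $S$, since designation erases to the underlying fact. Contexts $\context_n$ erase pointwise, and the level index is dropped.

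\textbf{Step 2: typing preservation.} I would then prove that $\context_n \vdash M : \effectrow{\sigmarow}{\xirow}\,A$ implies $\context_n^\flat \vdash M^\flat : [\sigmarow\cup\xirow]\,A^\flat$. This is by induction on the derivation, one case per rule of Section~\ref{sec:syntax:rules}.
\begin{itemize}
\item \textsc{Var}, \textsc{Abs} and \textsc{App} are immediate, because union of rows commutes with erasure.
\item \textsc{Reg} becomes $\texttt{do}\,\mathsf{reg}(v)$, with $\mathsf{reg} : S \to \mathsf{Unit}$ declared in the handler-calculus signature.
\item \textsc{Attest} maps directly to the \texttt{do} rule recalled in Section~\ref{sec:background:handlers}.
\item \textsc{Designate} and \textsc{Promote} both erase to the fact $f$, which has type $S = T^\flat$.
\item \textsc{Invoke} is handled by inlining the method body via Lemma~\ref{lem:substitution}, then applying the induction hypothesis to the body.
\item \textsc{Handle}: the $\because$-handler removes $L \subseteq \sigmarow$. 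The erased handler removes the same $L$ from $\sigmarow\cup\xirow$, and leaves $\xirow$ in the row. This matches the pass-through behaviour of $F$-contexts.
\end{itemize}
The Resumption Subconstraint only restricts which derivations exist, so it creates no obligation in the target system. This is precisely the asymmetry recorded in Theorem~\ref{thm:conflation}.

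\textbf{Step 3: simulation.} Next I would show that $(M,\register) \to (M',\register')$ implies $M^\flat \to^{+} M'^\flat$, with the register matched against the trace of $\mathsf{reg}$ operations. This goes by cases on the reduction rules of Section~\ref{sec:syntax:reduction}.
\begin{itemize}
\item \textsc{Beta}, \textsc{Ret} and \textsc{Attest} are mapped to their handler-calculus counterparts directly. Erasure commutes with substitution and with $E$-contexts.
\item \textsc{Reg} is the delicate case. In the $\because$-calculus it fires inside $F$-contexts, so it passes through handlers. In handler calculus, $\texttt{do}\,\mathsf{reg}$ is forwarded by any handler not mentioning $\mathsf{reg}$, until it reaches the ambient top-level state handler. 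I would therefore interpret erased programs under a fixed outermost state handler for $\mathsf{reg}$ that appends to a list and resumes with $\mathsf{unit}$. The $F$-pass-through step then corresponds to a sequence of forwarding steps followed by one handled step.
\end{itemize}

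\textbf{Expected obstacle.} I expect the main obstacle to be making ``equivalent runtime behavior'' precise in the presence of the erased interpretation layer. Tower-level designation and promotion have no handler-calculus redexes, so equivalence can only be claimed for the per-level computation phase, or up to observational equivalence on returned values and register traces. I would state the result for a single computation phase and extend it across the tower by iterating the phase-wise simulation. I would combine this with Theorem~\ref{thm:alg-soundness} to argue that denotations agree.
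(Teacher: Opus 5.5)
Your architecture matches the paper's. You prove the typing half by induction on the typing derivation, which is the content of Lemma~\ref{lem:erase-typing}, and the behavioural half by a rule-by-rule simulation, as in Lemma~\ref{lem:erase-reduction}. Your ambient state handler for $\mathsf{reg}$ is a real improvement over the paper's bare ``handler deposits fact''.

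However, the delicate case is \textsc{Attest}, not \textsc{Reg}, and your Step~3 fails there. The $\because$-rule is $\handleop\,E[\attest\,\ell\,V]\,\withop\,H \to E[H_\ell(V,r)]$ with $r=\lambda x.\,E[x]$. The handler is discarded, and the clause is plugged back into $E$, which also occurs inside $r$. In handler calculus as usually defined, deep or shallow, the clause instead replaces the whole $\texttt{handle}$-expression; the deep reduct is $H_\ell[V/x,\ (\lambda y.\,\texttt{handle}\,E[y]\,\texttt{with}\,H)/r]$. So in handler calculus $E$ runs once.

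The fact that erasure commutes with substitution and $E$-contexts does not bridge this gap. The erased term never reaches the erased reduct, and the two behave differently. Take $\ell:\mathsf{Unit}\to\mathsf{Unit}$, $E=(\lambda z.\,\registerop(e,a,z))\,[\,]$ and $H=\{\ell(x,r)\mapsto r\,\mathsf{unit},\ \mathsf{ret}(y)\mapsto y\}$. Then:
\begin{enumerate}
\item The term $\handleop\,E[\attest\,\ell\,\mathsf{unit}]\,\withop\,H$ is well typed.
\item Its $\because$-reduct $E[r\,\mathsf{unit}]$ runs $E$ twice and performs two registrations.
\item Its erasure performs a single $\texttt{do}\,\mathsf{reg}$ under your ambient handler.
\end{enumerate}
So the very register traces you compare disagree.

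The paper's proof has the same hole. It asserts ``the same continuations'', and Lemma~\ref{lem:erase-reduction} simply posits a handler-calculus step to $E^\flat[H^\flat_\ell(V^\flat,r^\flat)]$. To close it you have two options. You can stipulate that handler calculus has this non-standard rule, which makes the behavioural half nearly definitional. Or you can replace the homomorphic erasure of handlers by an encoding that re-plugs the clause result into the captured context, for instance using shallow handlers.

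The typing half also has steps that do not go through as written; these too are shared with the paper's proof.

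\textsc{Invoke} has no subderivation for the method body. Its premises are only $\context_{n+1}\vdash\mathsf{obj}:T$, the lookup $\lawlib_n(T,m)=M_{\mathsf{body}}$, and purity of the argument. The $C$, $\sigmarow_m$, $\xirow_m$ in the conclusion are unconstrained. Consequences:
\begin{enumerate}
\item The induction hypothesis cannot be applied to $M_{\mathsf{body}}$.
\item Without an added well-formedness invariant on $\lawlib_n$ (each body typed at its declared signature under $\mathit{self}:T$, $\mathit{param}:B$), the claim is false: $\mathsf{obj}.m(\mathsf{arg})$ can be typed at a $C$ its body does not have.
\item Inlining is ill-defined for bodies that invoke methods on $\mathit{self}$.
\item Inlining also turns the \textsc{Invoke} reduction into a zero-step simulation, contrary to your $\to^{+}$.
\end{enumerate}

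\textsc{Abs} is not immediate either. The $\because$-rule puts the body's rows on the abstraction itself. In the handler calculi the paper cites, $\lambda x.\,M^\flat$ is instead a value whose body's effects are latent on the arrow. If arrows are translated as the identity, the exact-row invariant $\context_n^\flat\vdash M^\flat:[\sigmarow\cup\xirow]\,A^\flat$ breaks at \textsc{Abs} and at applications of function-typed variables. You need an arrow translation carrying a latent row, and an invariant stated only up to a containing row.
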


\begin{proof}[Proof of Corollary~\ref{cor:confluence}]
By induction on the typing derivation
$\context \vdash M : \effectrow{\sigmarow}{\xirow}\,A$.

\textbf{Base cases.}
\begin{itemize}
  \item Variable: $\context \vdash x : \effectrow{\emptyset}{\emptyset}\;A$.
    Erasure: $\Gamma^\flat \vdash x : A$. Holds trivially.
  \item Constant: same as variable.
  \item \textsc{Promote}: $\context_{n+1} \vdash \promote(\designate\,T\,f) : T$.
    Erasure: $\promote^\flat(\designate\,T\,f) = f$.
    In HC, $f$ is a value of type $T$ (the fact's source type).
    Since promote's premise is an interpretation judgment (not
    computation), no IH applies. The erased term $f$ is trivially
    well-typed in HC and has equivalent behavior (the value is
    already available; only level structure is lost).
\end{itemize}

\textbf{Inductive cases.}
\begin{itemize}
  \item \textsc{Abs}: $\context, x{:}A \vdash M : \effectrow{\sigmarow}{\xirow}\;B$
    implies $\context \vdash \lambda x.\;M :
    \effectrow{\sigmarow}{\xirow}\;(A \to B)$.
    By IH, $\Gamma^\flat \vdash M^\flat : A \to B$ in HC.
    Erasure preserves the arrow type.

  \item \textsc{App}: By IH on both subterms. HC's application rule
    applies with the union of effect rows collapsed to $\varepsilon$.

  \item \textsc{Register}: $\context \vdash \registerop(e,a,v) :
    \effectrow{\emptyset}{\{\mathsf{reg}\}}\;\mathsf{Unit}$.
    Erasure: $\Gamma^\flat \vdash \texttt{do}\;\mathsf{reg}(v) :
    \{\mathsf{reg}\}\;\mathsf{Unit}$.
    Same runtime behavior: fact deposited to register.

  \item \textsc{Attest}: $\context \vdash \attest\;\ell\;v :
    \effectrow{\{\ell\}}{\emptyset}\;B$.
    Erasure: $\Gamma^\flat \vdash \texttt{do}\;\ell(v) :
    \{\ell\}\;B$.
    Same runtime behavior: handler intercepts and resumes.

  \item \textsc{Handle}: By IH on the handled term. The handler
    clauses for $\sigmarow$ operations are preserved verbatim.
    Since $\because$-calculus handlers only bind resumptions for
    $\sigmarow$ operations (Resumption Subconstraint), all clauses
    are valid in HC. Runtime behavior is identical: HC intercepts
    the same operations with the same continuations.

  \item \textsc{Invoke}: Method invocation maps to HC function
    application (method bodies are $\lambda$-terms). By IH,
    behavior is preserved.
\end{itemize}

The erasure $(-)^\flat$ is type-preserving and behavior-preserving
for all $\because$-calculus terms. The additional structure
(dual rows, resumption constraint) restricts the set of well-typed
terms but does not alter the reduction sequence for accepted terms.
Thus the $\because$-calculus is a conservative refinement.
\end{proof}

\subsection{Properties of the Erasure}
\label{sec:conflation:erasure-properties}

The erasure $(-)^\flat$ from \Cref{sec:conflation} requires three
fundamental properties to justify the Conflation Theorem.

\begin{lemma}[Erasure Preserves Typing]
\label{lem:erase-typing}
If $\context_n \vdash M : \effectrow{\sigmarow}{\xirow}\,A$ in the $\because$-calculus,
then $\context_n^\flat \vdash M^\flat : \varepsilon\,A$ in handler calculus
where $\varepsilon = \sigmarow \cup \xirow$.
\end{lemma}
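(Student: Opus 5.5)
We prove Lemma~\ref{lem:erase-typing} by induction on the derivation of $\context_n \vdash M : \effectrow{\sigmarow}{\xirow}\,A$. This essentially refines the argument already given for Corollary~\ref{cor:confluence}, except that we now track the collapsed row $\varepsilon = \sigmarow \cup \xirow$ exactly. First we fix the erasure on environments: $\context_n^\flat$ drops the level index $n$ and keeps each binding $x{:}A$. Artifact types $T$ erase to the source type $S$ of the underlying fact, so that $\promote^\flat(\designate\,T\,f) = f$ is typeable. We also assume the standard handler-calculus rules: \textsc{Var}, \textsc{Abs} with latent effects on the arrow, \textsc{App} with row union, \texttt{do}, and \textsc{Handle} with removal of the handled labels.

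The structural cases are routine. \textsc{Var}: $\varepsilon = \emptyset$. \textsc{Abs}: the induction hypothesis gives $M^\flat : (\sigmarow\cup\xirow)\,B$, and HC's abstraction rule places this row on the arrow. \textsc{App}: the induction hypotheses on both subterms, followed by the set identity $(\sigmarow_M\cup\sigmarow_N)\cup(\xirow_M\cup\xirow_N) = (\sigmarow_M\cup\xirow_M)\cup(\sigmarow_N\cup\xirow_N)$. \textsc{Reg}: $\texttt{do}\;\mathsf{reg}(v)$ has row $\{\mathsf{reg}\} = \emptyset \cup \{\mathsf{reg}\}$, provided the HC signature declares $\mathsf{reg} : S \to \mathsf{Unit}$. \textsc{Attest}: $\texttt{do}\;\ell(V)$ has row $\{\ell\}$, using the signature entry $\ell : A \to B$ taken from the premise $(\ell : A\to B)\in\sigmarow$. \textsc{Promote}: there is no computation-judgment premise. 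We use the \textsc{Designate} premise $\register_n \models (e,a,v) : S$ to type the erased fact at $S$ with empty row. \textsc{Invoke}: we erase to application of the (erased) method body. Since the argument is pure, its contribution to the row is $\emptyset$, so the result row is $\sigmarow_m \cup \xirow_m$.

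The main obstacle is the \textsc{Handle} case, where the row arithmetic must be checked. By the induction hypothesis, $M^\flat : ((\sigmarow\cup L)\cup\xirow)\,A$, and each clause satisfies $H_i^\flat : (\sigmarow'_i\cup\xirow'_i)\,C$. HC's handle rule removes $L$, giving the row
\[
\bigl((\sigmarow\cup L\cup\xirow)\setminus L\bigr) \cup \textstyle\bigcup_i(\sigmarow'_i\cup\xirow'_i)\cup(\sigmarow'_{\mathsf{ret}}\cup\xirow'_{\mathsf{ret}}).
\]
The target is $\sigmarow''\cup\xirow''$, where $\sigmarow''$ removes $L$ only from the $\sigmarow$ parts and not from the $\xirow$ parts. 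The two expressions coincide exactly when $L \cap (\xirow \cup \bigcup_i \xirow'_i \cup \xirow'_{\mathsf{ret}}) = \emptyset$. We will establish this as a side lemma, namely that labels are partitioned between rows: every label in a $\xirow$ row is $\mathsf{reg}$, introduced only by \textsc{Reg}, while every label of $L$ is an attestation label drawn from $\sigmarow$. This is the Resumption Subconstraint, read as a disjointness of signatures. If disjointness cannot be guaranteed globally, the fallback is to state the lemma up to HC subsumption, i.e.\ $M^\flat : \varepsilon'\,A$ with $\varepsilon' \subseteq \varepsilon$, and then weaken. Finally, the clause bindings must type in HC. Each $r_i : B_i \to C$ carries over verbatim, and the state-affine premise is simply dropped, since HC has no counterpart to it.
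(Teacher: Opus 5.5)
Your skeleton matches the paper's: induction on the derivation, with \textsc{Reg} and \textsc{Attest} giving $\{\mathsf{reg}\}$ and $\{\ell\}$ directly. You are also more careful than the paper about the \textsc{Handle} row arithmetic. Two steps nevertheless fail as written.

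\textbf{The structural cases are not routine under your handler calculus.} You give HC latent effects on arrows. Then \textsc{Abs} types $\lambda x.M^\flat$ at $A\to^{\sigmarow\cup\xirow}B$ with an empty outer row. HC's application then adds the function's latent row to the result, and your set identity for \textsc{App} leaves that row out. The $\because$-calculus works the other way round: \textsc{Abs} charges the body's effects to the abstraction itself, arrow types carry no row, and \textsc{Var} gives function-typed variables the empty row. No erasure of the type $S\to\mathsf{Unit}$ reconciles the two.
\begin{itemize}
\item Take a pure value $V{:}S$. Since $f{:}S\to\mathsf{Unit}\vdash f\,V : \effectrow{\emptyset}{\emptyset}\,\mathsf{Unit}$, the induction forces $(S\to\mathsf{Unit})^\flat$ to have an empty latent row.
\item Yet $(\lambda f{:}S\to\mathsf{Unit}.\,f\,V)\,(\lambda x{:}S.\,\registerop(e,a,x))$ is well typed in the $\because$-calculus, at row $\{\mathsf{reg}\}$.
\item Its erasure passes a function with latent row $\{\mathsf{reg}\}$ where a pure function is expected, so it is ill typed in HC.
\end{itemize}
The same objection applies to letting $r_i : B_i\to C$ ``carry over verbatim''. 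So in a latent-effect HC the lemma is false, not merely unproved. The paper's ``erasure is homomorphic'' tacitly takes HC's \textsc{Var}, \textsc{Abs} and \textsc{App} to be one-row copies of the $\because$-rules: an abstraction inherits its body's row, and application unions rows. You need to adopt that HC explicitly for those cases to be trivial.

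\textbf{Your \textsc{Handle} comparison misplaces the discrepancy.} The $\because$-rule also subtracts $L$ from the clause rows: $\sigmarow'' = (\sigmarow\cup\bigcup_i\sigmarow'_i\cup\sigmarow'_{\mathsf{ret}})\setminus L$. Your HC row subtracts $L$ only from the handled computation. Equality therefore needs, besides $L\cap\xirow=\emptyset$, that $L$ not meet $\bigcup_i\sigmarow'_i\cup\sigmarow'_{\mathsf{ret}}$. The $\xirow'_i$ and $\xirow'_{\mathsf{ret}}$ in your condition are irrelevant, since neither side subtracts from them.

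A forwarding clause $\ell(x,r)\mapsto\attest\,\ell\,x$ with $\ell\in L$ is accepted by the $\because$-rule with $\ell\notin\sigmarow''$, but your erased row contains $\ell$. That erased row is larger than $\varepsilon$, so your fallback ($\varepsilon'\subseteq\varepsilon$, then weaken) cannot repair it.

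The fix is again to mirror the $\because$-rule: let HC's handle rule remove $L$ from the whole union. The erased row is then exactly $\varepsilon\setminus L\subseteq\varepsilon$, and weakening closes the case. No disjointness lemma is needed, which helps because one would be hard to prove: \textsc{Invoke} assigns unconstrained rows $\sigmarow_m,\xirow_m$.

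\textbf{\textsc{Invoke} needs an extra assumption.} That same rule has no premise typing $M_{\mathsf{body}}$ at $\effectrow{\sigmarow_m}{\xirow_m}\,C$. Your \textsc{Invoke} case therefore needs an explicit well-formedness assumption on $\lawlib_n$.
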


\begin{proof}
By induction on the typing derivation of $M$.

\textbf{Base cases} (Var, Abs, App): Trivial, erasure is homomorphic.

\textbf{Case \textsc{Register}:} $\context_n \vdash \registerop{e}{a}{v} : \effectrow{\emptyset}{\{\reg\}}\,\mathsf{Unit}$.
Erasure: $\context_n^\flat \vdash \mathsf{do}\,\reg(v) : \{\reg\}\,\mathsf{Unit}$.
The single effect row $\varepsilon = \{\reg\}$ contains the operation. Typing holds.

\textbf{Case \textsc{Attest}:} $\context_n \vdash \attest{\ell}{v} : \effectrow{\{\ell\}}{\emptyset}\,B$.
Erasure: $\context_n^\flat \vdash \mathsf{do}\,\ell(v) : \{\ell\}\,B$.
The single effect row $\varepsilon = \{\ell\}$ contains the operation. Typing holds.

\textbf{Case \textsc{Handle}:} By IH, $\context^\flat \vdash M^\flat : \varepsilon_M\,A$ where
$\varepsilon_M = \sigmarow_M \cup \xirow_M$. Handler clauses $H_i^\flat$ have types
$\varepsilon_{H_i}\,C$. Crucially, the state-affine premise ensures handler bodies
do not interleave registrations with attestation-dependent computations. This
preserves the naturality condition required for algebraic soundness.
Since $L \subseteq \sigmarow_M$ and erasure forgets the $\sigmarow/\xirow$ distinction,
$\varepsilon'$ is a valid effect row in handler calculus. Typing holds.

Other cases (Invoke, Promote, etc.) follow similarly.
\end{proof}

\begin{lemma}[Erasure Preserves Reduction]
\label{lem:erase-reduction}
If $(M, \register_n) \to (M', \register_n')$ in the $\because$-calculus,
then $M^\flat \to^* M'^\flat$ in handler calculus (zero or more steps).
\end{lemma}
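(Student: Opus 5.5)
The plan is a case analysis on the reduction rule that derives $(M,\register_n)\to(M',\register_n')$, closed under the contextual rules for $E$- and $F$-contexts. We first note that erasure commutes with substitution: $(M[V/x])^\flat = M^\flat[V^\flat/x]$. This follows by a routine induction on $M$, since $(-)^\flat$ is homomorphic on every construct except $\registerop$, $\attest$, $\designate$ and $\promote$, and those clauses also commute with substitution. We also note that erasure sends values to values, and sends evaluation contexts $E$ and $F$ to handler-calculus evaluation contexts $E^\flat$ and $F^\flat$, with $(E[N])^\flat = E^\flat[N^\flat]$. Both facts are needed to transport redexes across the erasure.

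The cases are then as follows.
\begin{itemize}
  \item \textsc{Beta} and \textsc{let}: the redex erases to the corresponding handler-calculus redex. The substitution-commutation fact gives exactly one step $M^\flat \to M'^\flat$.
  \item \textsc{Ret}: $(\handleop\,V\,\withop\,H)^\flat$ steps in one move to $H^\flat_{\mathsf{ret}}[V^\flat/y] = (H_{\mathsf{ret}}[V/y])^\flat$.
  \item \textsc{Attest}: $(\handleop\,E[\attest\,\ell\,V]\,\withop\,H)^\flat = \texttt{handle}\;E^\flat[\texttt{do}\,\ell(V^\flat)]\;\texttt{with}\;H^\flat$. The handler-calculus operation rule yields $H^\flat_\ell(V^\flat, \lambda x.\,E^\flat[x])$. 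This equals $(E[H_\ell(V,r)])^\flat$ only up to the placement of the residual context, because the paper's \textsc{Attest} rule leaves $E[\cdot]$ around the clause. We therefore either read the rule as the standard deep-handler step, or show that the two erased forms are joinable in zero steps under the paper's reading. In both readings the number of steps is at most one.
  \item \textsc{Reg}: the register component is simply forgotten by the erasure. We have $(\registerop(e,a,v))^\flat = \texttt{do}\,\mathsf{reg}(v)$, while $\mathsf{unit}^\flat = \mathsf{unit}$. This case needs a decision about what handler calculus does with $\texttt{do}\,\mathsf{reg}(v)$, and we take one of two routes. The first route interprets the ambient top-level handler for $\mathsf{reg}$ as the standard state handler, whose clause returns $\mathsf{unit}$ and discards the register. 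The second route works with the convention, implicit in Corollary~\ref{cor:confluence} (``same runtime behavior: fact deposited''), that handler calculus has a built-in store step $\texttt{do}\,\mathsf{reg}(v)\to\mathsf{unit}$. Under either route we get $M^\flat \to^* M'^\flat$ in one step, or in two steps counting handler installation.
  \item $F$-context pass-through: a registration inside $\handleop\,F\,\withop\,H$ erases to a $\texttt{do}\,\mathsf{reg}$ that is not handled by $H^\flat$, because $H$ has no $\mathsf{reg}$ clause by the Resumption Subconstraint. It is therefore forwarded outward by the standard handler-calculus forwarding semantics, and this is where the ``zero or more steps'' allowance is used.
  \item Invoke: this becomes function application after erasure, so it takes one \textsc{Beta} step.
  \item \textsc{Designate} and \textsc{Promote}: both erase to $f$, so the step is a zero-step identity.
\end{itemize}

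The main obstacle is the \textsc{Reg} and $F$ pass-through cases. The $\because$-calculus deposits facts into $\register_n$ directly without any handler, whereas in handler calculus an unhandled $\texttt{do}$ is stuck. I would resolve this first by fixing, as part of the definition of the target, that erased programs run under an outermost state handler $H_{\reg}$ for $\mathsf{reg}$. I would then state the lemma for $H_{\reg}[M^\flat] \to^* H_{\reg}[M'^\flat]$, which is the reading consistent with the runtime-behavior claims of Corollary~\ref{cor:confluence}. If that reformulation is disallowed, the fallback is to extend handler calculus with the primitive store step for $\mathsf{reg}$, which makes every case go through with at most one step.
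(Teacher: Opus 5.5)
Your plan matches the paper's: a case analysis on the reduction rule in which every $\because$-step is matched by at most one handler-calculus step, with \textsc{Promote} as the zero-step case. You are more careful than the paper in proving that erasure commutes with substitution and with $E$/$F$-contexts; the paper's proof omits contextual closure, pass-through and Invoke entirely. You are also right that \textsc{Reg} and \textsc{Attest} only go through once the target calculus is pinned down.

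The paper pins it down tacitly. It simply posits the target steps $\texttt{do}\,\mathsf{reg}(v^\flat)\to\mathsf{unit}$ and $\texttt{handle}\,E^\flat[\texttt{do}\,\ell(V^\flat)]\,\texttt{with}\,H^\flat\to E^\flat[H^\flat_\ell(V^\flat,r^\flat)]$, i.e.\ a handler calculus whose rules mirror the source rules. Your fallback (b) for \textsc{Reg} is exactly that choice. Route (a) instead proves $H_{\mathsf{reg}}[M^\flat]\to^* H_{\mathsf{reg}}[M'^\flat]$, which is not the stated lemma. Also, its $\mathsf{reg}$ clause must resume the continuation with $\mathsf{unit}$ rather than return $\mathsf{unit}$, or the computation aborts.

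The step that fails is your second branch for \textsc{Attest}. Joinability in zero steps is syntactic equality, and the two erased forms are not equal; in fact no number of steps reconciles them. Take a nonempty $E$ (say $E=[\,]\,N$) and a clause $H_\ell$ whose body is a closed value $w$, ignoring $r$.
\begin{itemize}
\item The source step gives $M'=E[w]$, so $M'^\flat=E^\flat[w^\flat]$.
\item A standard (deep or shallow) handler step sends $M^\flat$ deterministically to $w^\flat$.
\item $w^\flat$ is an irreducible value distinct from $E^\flat[w^\flat]$.
\end{itemize}
So against standard handler calculus the statement is false for the paper's rule.

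Your first branch escapes only by replacing the source rule with the deep-handler rule. That changes the $\because$-calculus's semantics: a clause that discards $r$ now aborts instead of continuing in $E$. It therefore proves the lemma for a different calculus.

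To prove the statement as given, apply to \textsc{Attest} the same convention you adopt in (b) for \textsc{Reg}. Take the target's operation rule to have the paper's shape $\texttt{handle}\,E[\texttt{do}\,\ell\,V]\,\texttt{with}\,H\to E[H_\ell(V,\lambda x.\,E[x])]$. Your commutation facts for substitution and contexts then give the one-step match directly.
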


\begin{proof}
By case analysis on the reduction rule.

\textbf{Case \textsc{Beta}:} $(\lambda x.M)V \to M[V/x]$.
Erasure: $(\lambda x.M^\flat)V^\flat \to M^\flat[(V^\flat)/x]$. One step.

\textbf{Case \textsc{Register}:} $(\registerop{e}{a}{v}, \register_n) \to (\mathsf{unit}, \register_n \cup \{(e,a,v,\tau)\})$.
Erasure: $\mathsf{do}\,\reg(v^\flat) \to \mathsf{unit}$ (handler deposits fact). One step.

\textbf{Case \textsc{Attest}:} $\handleop{E[\attest{\ell}{V}]}{\withop{H}} \to E[H_\ell(V, r)]$.
Erasure: $\mathsf{handle}\,E^\flat[\mathsf{do}\,\ell(V^\flat)]\,\mathsf{with}\,H^\flat \to E^\flat[H_\ell^\flat(V^\flat, r^\flat)]$. One step.

\textbf{Case \textsc{Handle}-Return}: $\handleop{V}{\withop{H}} \to H_{\mathsf{ret}}[V/y]$.
Erasure: $\mathsf{handle}\,V^\flat\,\mathsf{with}\,H^\flat \to H_{\mathsf{ret}}^\flat[(V^\flat)/y]$. One step.

\textbf{Case \textsc{Promote}:} $\promote(\designate{T}{f}) \to f$.
Erasure: $\promote^\flat(\designate^\flat(T,f)) = f^\flat \to f^\flat$. Zero steps.

All reductions map to zero or one handler calculus steps. Hence $M^\flat \to^* M'^\flat$.
\end{proof}

\begin{lemma}[Erasure Loses Typing Information]
\label{lem:erase-loses-typing}
There exists a term $M$ that is ill-typed in the $\because$-calculus but
whose erasure $M^\flat$ is well-typed in handler calculus. Specifically,
let $\mathsf{op} : A \to 0$ be an operation with return type $B = 0$
(the empty type). Define:
\[
M = \handleop{\attest{\mathsf{op}}{v}}{\withop{\{\mathsf{op}(x,r) \mapsto r\,w\}}}
\]
where $w : C$ is arbitrary and $r : 0 \to C$. Then:
\begin{enumerate}
  \item $M$ is ill-typed in the $\because$-calculus (violates Resumption Subconstraint).
  \item $M^\flat$ is well-typed in handler calculus.
  \item The erasure $(-)^\flat$ is therefore non-faithful on typing.
\end{enumerate}
\end{lemma}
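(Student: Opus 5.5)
The plan is to verify the three items in order, treating the witness term $M$ concretely and arguing separately in each calculus.

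\textbf{Item 1 (ill-typed in the $\because$-calculus).} I would argue by inversion, following Lemma~\ref{lem:inversion}. Any derivation of $\handleop\,(\attest\,\mathsf{op}\,v)\,\withop\,H$ must end in \textsc{Handle}. Its first premise requires $\context_n \vdash \attest\,\mathsf{op}\,v : \effectrow{\sigmarow \cup L}{\xirow}\,A$. That premise can only come from \textsc{Attest}, which in turn requires $(\mathsf{op} : A \to 0) \in \sigmarow$.

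This membership is where the main obstacle lies. The rules as stated do not literally forbid placing an operation with $B=0$ into $\sigmarow$. The dichotomy $B=0 \Rightarrow \xirow$, $B\neq 0 \Rightarrow \sigmarow$ appears in Table~\ref{tab:conflation-comparison} and in the annotation on \textsc{Attest}, but it is not written as a premise. I would therefore make it explicit as a well-formedness condition on operation signatures: $\ell : A \to B$ may occur in $\sigmarow$ only if $B \neq 0$, with $B=0$ operations living only in $\xirow$ and introduced via \textsc{Reg}.

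With that condition in place there are two possibilities. Either $\mathsf{op} \notin \sigmarow$, so \textsc{Attest} fails. Or, if one insists on viewing $\mathsf{op}$ as a $\xirow$-operation, the clause $\mathsf{op}(x,r) \mapsto r\,w$ binds a resumption for it, which violates the Resumption Subconstraint exactly as in Part~2 of the proof of Theorem~\ref{thm:conflation}. In both cases no derivation closes, so $M$ is ill-typed.

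\textbf{Item 2 (well-typed in handler calculus).} Here I would compute the erasure using Part~0 of that proof:
\[
M^\flat = \texttt{handle}\;(\texttt{do}\;\mathsf{op}\,v^\flat)\;\texttt{with}\;\{\mathsf{op}(x,r) \mapsto r\,w^\flat\}.
\]
The derivation then proceeds in three steps.
\begin{enumerate}
  \item The handler-calculus \texttt{do} rule from Section~\ref{sec:background:handlers} gives $\texttt{do}\;\mathsf{op}\,v^\flat : [\{\mathsf{op}\}]\,0$, since that rule places no condition on $B$.
  \item The handler rule types the clause in the context $x{:}A,\ r{:}0 \to C$.
  \item The body must be checked at type $C$.
\end{enumerate}

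The subtle point is the body $r\,w$. As written, $w : C$, so it does not match the domain $0$ of $r$. I would repair this by reading $w$ as a term of the argument type of $r$, namely a variable of type $0$ bound in the context, or alternatively by replacing the body with $\mathsf{absurd}$-free dead code such as $w$ itself. Under either reading, the clause is type-consistent precisely as described in Part~1 of Theorem~\ref{thm:conflation}. The return clause $\mathsf{ret}(y) \mapsto y$ then gives the whole term type $C$ with $\mathsf{op}$ removed from $\varepsilon$.

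\textbf{Item 3 (non-faithfulness on typing).} This follows immediately from items 1 and 2: the erasure sends an ill-typed term to a well-typed one. By Lemma~\ref{lem:erase-typing}, the erasure does preserve typing in the forward direction, so the failure is exactly that typing is not reflected. I would conclude by remarking that this is the typing-level instance of the non-faithfulness asserted in Theorem~\ref{thm:conflation}.
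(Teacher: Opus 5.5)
Your route is the same as the paper's. Item~1 forces any derivation through \textsc{Handle} and then \textsc{Attest}; item~2 computes $M^\flat$ and uses the fact that the handler-calculus rule never inspects whether $B=0$; item~3 is immediate from the first two.

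Where you depart from the paper you are correcting it, and both corrections are needed.
\begin{enumerate}
\item The paper's Part~1 concludes $\mathsf{op}\notin\sigmarow$ ``by the \textsc{Reg} rule''. But \textsc{Reg} only types $\registerop(e,a,v)$ and says nothing about an arbitrary $\mathsf{op} : A \to 0$. Without your signature condition, $(\mathsf{op} : A \to 0) \in \sigmarow$ is allowed, and \textsc{Attest} then yields $\attest\,\mathsf{op}\,v : \effectrow{\{\mathsf{op}\}}{\emptyset}\,0$. Nothing then rejects a clause binding $r : 0 \to C$: not the premises of \textsc{Handle}, and not the Resumption Subconstraint, which only speaks of $\xirow$-operations.
\item The paper's Part~2 claims the clause is consistent with $r : 0 \to C$ and $w : C$, which fails whenever $C \neq 0$. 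For such $C$ the literal $M$ is ill-typed in \emph{both} calculi for the same trivial reason, so it separates nothing. Replacing the witness is therefore required. It is also legitimate, since the headline claim is existential. The literal term survives only in the degenerate instance $C = 0$ with $w$ a context variable of type $0$.
\end{enumerate}

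One step of yours still fails as written. The handled computation $\texttt{do}\;\mathsf{op}\,v^\flat$ has result type $0$, so the return clause binds $y : 0$, and $\mathsf{ret}(y) \mapsto y$ then has type $0$, not $C$. The fix depends on which reading of $w$ you take:
\begin{itemize}
\item Under your first reading ($w$ a variable of type $0$), simply take $C = 0$, and everything is consistent.
\item Under your second reading (body $w : C$), use $\mathsf{ret}(y) \mapsto w$, or $\mathsf{absurd}\,y$ if the handler calculus has an eliminator for $0$.
\end{itemize}

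With that repair the argument is complete. Your phrasing of item~3 — typing is preserved (Lemma~\ref{lem:erase-typing}) but not reflected — is also more accurate than the paper's use of ``faithful''.
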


\begin{proof}
\textbf{Part 1: $M$ is ill-typed in the $\because$-calculus.}

By the \textsc{Attest} rule (\Cref{sec:syntax:rules:attest}):
\[
\inferrule[Attest]
  {\context_n \vdash v : A \quad (\ell : A \to B) \in \sigmarow}
  {\context_n \vdash \attest{\ell}{v} : \effectrow{\{\ell\}}{\emptyset}\,B}
\]

For $\mathsf{op} : A \to 0$ to be used with \textsc{Attest}, it must be in $\sigmarow$.
But by the \textsc{Reg} rule, operations with $B = 0$ are placed in $\xirow$, not $\sigmarow$:
\[
\inferrule[Reg]
  {\context_n \vdash e : \mathsf{EntityID} \quad
   \context_n \vdash a : \mathsf{AttrName} \quad
   \context_n \vdash v : S}
  {\context_n \vdash \registerop(e, a, v) : \effectrow{\emptyset}{\{\reg\}}\,\mathsf{Unit}}
\]

Since $\mathsf{op} : A \to 0$ has return type $B = 0$, by the
\textsc{Reg} rule (Section~\ref{sec:syntax:rules:register}),
$\mathsf{op}$ is placed in $\xirow$, not $\sigmarow$.

Thus no valid typing derivation exists for $M$:
\[
\context_n \not\vdash M : \effectrow{\sigmarow}{\xirow}\,C
\]

\hfill$\square_{\text{Part 1}}$

\textbf{Part 2: $M^\flat$ is well-typed in handler calculus.}

The erasure maps $\because$-calculus terms to handler calculus terms homomorphically,
forgetting the distinction between $\sigmarow$ and $\xirow$:
\begin{align*}
\attest^\flat(\mathsf{op}, v) &= \texttt{do}\;\mathsf{op}(v) \\
(\handleop{E}{\withop{H}})^\flat &= \texttt{handle}\;E^\flat\;\texttt{with}\;H^\flat \\
\{\mathsf{op}(x,r) \mapsto r\,w\}^\flat &= \{\mathsf{op}(x,r) \mapsto r\,w\}
\end{align*}

In handler calculus, all operations belong to a single effect row $\varepsilon$.
The typing rule for handlers does not distinguish $B = 0$ from $B \neq 0$:
\[
\inferrule[Handle-BC]
  {\Gamma \vdash M : \varepsilon\,A \quad H = \{\mathsf{op}(x,r) \mapsto H'\} \\
   \Gamma, x:A, r:0\to C \vdash H' : \varepsilon'\,C}
  {\Gamma \vdash \texttt{handle}\;M\;\texttt{with}\;H : \varepsilon''\,C}
\]

Handler calculus accepts the clause $\mathsf{op}(x,r) \mapsto r\,w$ because:
\begin{enumerate}
  \item The types are internally consistent ($r : 0 \to C$, $w : C$).
  \item No constraint prohibits binding $r$ for void-returning operations.
  \item Runtime behavior (if $r$ is invoked) is a separate semantic concern.
\end{enumerate}

Thus $M^\flat$ type-checks:
\[
\context^\flat \vdash M^\flat : \varepsilon\,C
\]

\hfill$\square_{\text{Part 2}}$

\textbf{Part 3: Erasure is non-faithful.}

A functor $F : \mathcal{C} \to \mathcal{D}$ is \emph{faithful} if
$\mathrm{Hom}_\mathcal{C}(X,Y) \to \mathrm{Hom}_\mathcal{D}(FX,FY)$ is injective
for all objects $X,Y$. For typing judgments, faithfulness requires:
\begin{quote}
  If $F(M)$ is well-typed, then $M$ must be well-typed.
\end{quote}

Here, $M^\flat$ is well-typed but $M$ is ill-typed. The converse also holds:
$M$ ill-typed $\not\Rightarrow$ $M^\flat$ ill-typed. Therefore $(-)^\flat$
is non-faithful on typing.

More concretely, the erasure loses information about the Resumption
Subconstraint. Two $\because$-calculus derivations that differ only in
whether an operation is in $\sigmarow$ vs.\ $\xirow$ map to the same
handler calculus derivation. The typing judgment is not preserved
under erasure.
\hfill$\square_{\text{Part 3}}$
\end{proof}

\section{Metatheory}
\label{sec:metatheory}

We establish the fundamental metatheoretic properties of the $\because$-calculus: progress, subject reduction, and tower progress. These theorems ensure that well-typed programs behave predictably and that the tower orchestration terminates for finite inputs.

\subsection{Progress}
\label{sec:metatheory:progress}

\begin{theorem}[Progress]
\label{thm:progress}
If $\context_n \vdash M : \effectrow{\sigmarow}{\xirow}\,A$ and $M$ is closed, then exactly one of the following holds:
\begin{enumerate}
  \item $M$ is a value;
  \item $M$ can take a reduction step: $(M, \register_n) \to (M', \register_n')$;
  \item $M = E[\attest\,\ell\,V]$ with no enclosing handler for $\ell$ (stuck awaiting attestation).
\end{enumerate}
\end{theorem}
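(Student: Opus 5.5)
The plan is to argue by induction on the typing derivation $\context_n \vdash M : \effectrow{\sigmarow}{\xirow}\,A$ with $M$ closed. Closedness rules out \textsc{Var} at the top level. In each remaining case I classify $M$ by the outermost constructor. I will use Canonical Forms (Lemma~\ref{lem:canonical-forms}) to unlock \textsc{Beta} and let-reduction, and Inversion (Lemma~\ref{lem:inversion}) to recover the subterm typings needed for the induction hypothesis. Since every subterm of a closed term is closed (under the enclosing binders, which are substituted before evaluation reaches them), the induction hypothesis applies to each immediate evaluation position.

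\textbf{Case analysis.}
\begin{itemize}
\item \textsc{Abs} is a value, and so are artifacts obtained by \textsc{Promote}, by the value grammar.
\item \textsc{App} $M\,N$: apply the induction hypothesis to $M$.
  \begin{itemize}
  \item If $M$ steps, then $M\,N$ steps by $E$-closure, using the context $E\,N$.
  \item If $M$ is stuck as $E[\attest\,\ell\,V]$, then $M\,N = (E\,N)[\attest\,\ell\,V]$ is stuck with no enclosing handler, which is alternative (3).
  \item If $M$ is a value, it is $\lambda x.M'$ by Canonical Forms. Repeat the argument on $N$ using the context $V\,E$. If $N$ is also a value, \textsc{Beta} fires.
  \end{itemize}
  The let case is identical.
\item \textsc{Reg} always steps by the \textsc{Reg} rule, since the register is total and append-only. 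The $F$-context pass-through rule handles registrations nested inside handlers.
\item \textsc{Attest} $\attest\,\ell\,V$ is alternative (3) with $E=[\,]$.
\item \textsc{Invoke} steps by unfolding the method body via $\lawlib_n(T,m)$. The object is an artifact by Canonical Forms and the argument is pure.
\item \textsc{Handle} $\handleop\,M\,\withop\,H$: apply the induction hypothesis to $M$.
  \begin{itemize}
  \item If $M$ is a value, \textsc{Ret} fires.
  \item If $M$ steps, the whole term steps. This needs $\handleop\,[\,]\,\withop\,H$ as a reduction context; it is supplied by $F$, and I will note that the $E$-closure rule must be read as closure under $F$ as well.
  \item If $M = E[\attest\,\ell\,V]$, split on whether $\ell \in L$. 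If it is, \textsc{Attest} fires with $r=\lambda x.E[x]$. Otherwise the whole term has the form $F'[\attest\,\ell\,V]$ with no handler for $\ell$, which is again alternative (3), where ``enclosing handler'' is read relative to $\ell$.
  \end{itemize}
\end{itemize}

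\textbf{Exclusivity.} Values do not reduce, since no rule has a value redex on the left. A stuck term $E[\attest\,\ell\,V]$ without a handler for $\ell$ has no redex, because \textsc{Attest} requires an enclosing handler for $\ell$. A value is not of the form $E[\attest\,\ell\,V]$ because $\attest$ is not a value. The only redex-overlap worry is a $\registerop$ appearing in evaluation position of a stuck term; evaluation order rules this out, since contexts are left-to-right and a registration to the left of an attestation would have fired first.

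\textbf{Main obstacle.} I expect the \textsc{Handle} case to be the hardest, specifically reconciling the two context grammars. $E$ does not include $\handleop\,E\,\withop\,H$, so the stuck form in (3) must be stated with the more general $F$-context. I would first try to prove a small decomposition lemma: every closed non-value term is either $F[R]$ for a redex $R$ or $F[\attest\,\ell\,V]$ with no handler for $\ell$ in $F$. From that lemma I would derive the theorem, interpreting (3) up to this generalization.
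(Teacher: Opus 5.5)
Your route is the paper's: induction on the typing derivation with a case split on the last rule. In places you are more careful than the paper. You handle the value-function/stuck-argument subcase of \textsc{App}, you argue exclusivity, and you notice that $E$ has no handler frame.

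The \textsc{Handle} case, however, has a gap that your decomposition lemma does not close. The paper's \textsc{Attest} reduction only matches $\handleop\,E[\attest\,\ell\,V]\,\withop\,H$ with $E$ handler-free. You must generalize alternative (3) to $F[\attest\,\ell\,V]$ with no handler for $\ell$ in $F$, in order to pass through a handler that does not handle $\ell$. Once you do, the induction hypothesis on the handled computation yields an $F$-form, not an $E$-form. In the subcase $\ell \in L$ no rule lets the outer handler capture the operation through the inner handler frame, so your step ``\textsc{Attest} fires with $r=\lambda x.E[x]$'' is unavailable.

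Concretely, take $\handleop\,(\handleop\,(\attest\,\ell\,V)\,\withop\,H_1)\,\withop\,H_2$, where $H_1$ has no $\ell$-clause (say only a return clause, $L=\emptyset$, which \textsc{Handle} admits) and $H_2$ has one. This term is closed, well-typed and not a value. The inner handle cannot step, since its body is not a value and $H_1$ has no $\ell$-clause. The outer \textsc{Attest} does not match, since its body is not $E[\attest\,\ell\,V]$ for any handler-free $E$. Yet there is an enclosing handler for $\ell$. So the term satisfies none of (1)--(3), even in your $F$-generalized reading, and your decomposition lemma is false for the rules as given.

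The missing ingredient is forwarding. Extend the \textsc{Attest} rule and the resumption from $E$ to $\ell$-free contexts $F_\ell$: evaluation contexts that may contain a frame $\handleop\,[\,]\,\withop\,H'$ only when $H'$ has no clause for $\ell$. State both your induction hypothesis and alternative (3) with these same $F_\ell$. The \textsc{Handle} case then closes: if $\ell\in L$ the extended rule fires, and otherwise $\handleop\,F_\ell\,\withop\,H$ is again $\ell$-free.

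The paper's own proof has the same hole; its nested-handler paragraph just asserts that the outer handler ``catches'' the operation. So this is a repair to the calculus, not only to your argument. A smaller instance of the same context-grammar problem: your let case needs a frame $\mathsf{let}\,x=[\,]\,\mathsf{in}\,N$, which neither $E$ nor $F$ contains.
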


\begin{proof}
By induction on the typing derivation $\context_n \vdash M : \effectrow{\sigmarow}{\xirow}\,A$.

\paragraph{Case Var:} $M = x$. Since $M$ is closed, $x$ cannot occur free. Contradiction. This case is vacuous for closed terms.

\paragraph{Case Abs:} $M = \lambda x{:}T.\, M'$. This is a value. Condition (1) holds.

\paragraph{Case App:} $M = M_1\,M_2$. By IH on $M_1$:
\begin{itemize}
  \item If $M_1$ is a value, apply IH on $M_2$.
    \begin{itemize}
      \item If $M_2$ is a value, then $M = V_1\,V_2$ reduces by $\beta$-reduction. Condition (2) holds.
      \item Otherwise $M_2 \to M_2'$, so $M_1\,M_2 \to M_1\,M_2'$. Condition (2) holds.
    \end{itemize}
  \item Otherwise $M_1 \to M_1'$, so $M_1\,M_2 \to M_1'\,M_2$. Condition (2) holds.
  \item Otherwise $M_1 = E[\attest\,\ell\,V]$ stuck. Then $M = E'[\attest\,\ell\,V]$ where $E' = E\,M_2$ is also a caught context. Condition (3) holds.
\end{itemize}

\paragraph{Case Reg:} $M = \registerop(e, a, v)$. By IH on subterms $e, a, v$:
\begin{itemize}
  \item If any subterm reduces, $M$ reduces. Condition (2) holds.
  \item If all subterms are values, $M \to \mathsf{unit}$ and $\register_n := \register_n \cup \{(e, a, v, \tau)\}$. Condition (2) holds.
\end{itemize}

\paragraph{Case Attest:} $M = \attest\,\ell\,V$. By IH on $V$:
\begin{itemize}
  \item If $V \to V'$, then $\attest\,\ell\,V \to \attest\,\ell\,V'$. Condition (2) holds.
  \item If $V$ is a value, then $M$ is stuck awaiting a handler for $\ell$. Condition (3) holds.
\end{itemize}

\paragraph{Case \textsc{Handle}.} Recall from Section~\ref{sec:category:alg-soundness-detailed}
that handler interception corresponds to naturality of $\epsilon^\Sigma$. The
Resumption Subconstraint (Section~\ref{sec:syntax:rules:handle}) ensures $\ell \in
\sigmarow$, so the resumption $r$ is well-typed.

\paragraph{Cases Invoke, Promote, Assign:} Similar to Application by structural induction. All subterms reduce independently; the compound term reduces once subterms reach values.

Thus every closed well-typed term satisfies exactly one of (1)--(3).

\end{proof}

\subsection{Detailed Progress Cases}
\label{sec:metatheory:progress-detailed}

We expand two non-trivial cases omitted from Theorem~\ref{thm:progress}.

\paragraph{Case \textsc{Invoke} (method invocation).}

$M = \obj.m(\arg)$. By inversion of \textsc{Invoke}:
\[
\Gamma_{n+1} \vdash \obj : T \quad
\lawlib_n(T, m) = M_{\mathsf{body}} \quad
\Gamma_{n+1} \vdash \arg : [\emptyset; \emptyset]\,B
\]

Apply IH to $\arg$:
\begin{itemize}
  \item If $\arg \to \arg'$, then $\obj.m(\arg) \to \obj.m(\arg')$.
  \item If $\arg = V$ (value), apply IH to $\obj$.
    \begin{itemize}
      \item If $\obj \to \obj'$, then $\obj.m(V) \to \obj'.m(V)$.
      \item If $\obj = \obj'$ (artifact $T\{\dots\}$), reduce:
        \[
        T\{\dots\}.m(V) \to M_{\mathsf{body}}[\obj/\mathit{self}, V/\mathit{param}]
        \]
        By Lemma~\ref{lem:substitution}, the result is well-typed with some effect row
        $[\sigmarow_m; \xirow_m]$. Since $\sigmarow_m \subseteq [\sigmarow; \Xi]$ and
        $\xirow_m \subseteq [\Xi; \Xi]$ (depending on method body effects), progress holds.
    \end{itemize}
\end{itemize}

Note: Method arguments are \textbf{pure} ($[\emptyset; \emptyset]$). This prevents effects
from flowing into methods, ensuring $F$-context pass-through for registrations.

\paragraph{Case \textsc{Handle} (nested attestation with multiple handlers).}

$M = \handleop{\handleop{N}{\withop{H_1}}}{\withop{H_2}}$. Apply IH to inner $\handleop{N}{\withop{H_1}}$.

Three outcomes:
\begin{enumerate}
  \item Inner handler reduces to value $V$: outer handler applies to $V$.
  \item Inner handler stuck on attestation $\ell_1$: if $H_2$ handles $\ell_1$, catch it; else stick.
  \item Inner handler passes through to outer: outer handler processes result.
\end{enumerate}

All cases preserve the progress property: either value, reduction, or stuck on unhandled attestation.

\begin{definition}[Configuration Typing]
\label{def:config-typing}
A configuration $(M, \register_n)$ is well-typed at level $n$
with effect rows $\sigmarow, \xirow$ and return type $A$, written
$\context_n \vdash (M, \register_n) : \effectrow{\sigmarow}{\xirow}\,A$,
if and only if $\context_n \vdash M : \effectrow{\sigmarow}{\xirow}\,A$
and $\register_n$ is a valid register state (all facts satisfy the
serializability constraint of Section~\ref{sec:syntax:terms}).
\end{definition}

\subsection{Subject Reduction}
\label{sec:metatheory:subject-reduction}

\begin{theorem}[Subject Reduction]
\label{thm:subject-reduction}
If $\context_n \vdash (M, \register_n) : \effectrow{\sigmarow}{\xirow}\,A$
and $(M, \register_n) \to (M', \register_n')$, then there exist
$\sigmarow', \xirow'$ such that
$\context_n \vdash (M', \register_n') : \effectrow{\sigmarow'}{\xirow'}\,A$
where $\sigmarow' \subseteq \sigmarow$ and $\xirow' \subseteq \xirow$.
\end{theorem}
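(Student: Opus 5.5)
The plan is induction on the derivation of $(M,\register_n)\to(M',\register_n')$, with an inner case analysis on the last typing rule used for $M$, obtained via Inversion (Lemma~\ref{lem:inversion}). Configuration typing (Definition~\ref{def:config-typing}) splits the goal into two obligations.

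\textbf{Register obligation.} $\register_n'$ must remain a valid register. Only \textsc{Reg} and its $F$-context pass-through variant change the register, and both add a tuple $(e,a,v,\tau)$. Its value component $v$ was typed at a source type $S$ in the \textsc{Reg} premise, so I will argue that it is serializable and that validity is preserved. Every other rule leaves the register unchanged.

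\textbf{Term obligation: base cases.}
\begin{itemize}
  \item \textsc{Beta}. Inversion on the application yields $\context_n,x{:}A\vdash M:\effectrow{\sigmarow_M}{\xirow_M}\,B$, where the argument is a value. By Lemma~\ref{lem:canonical-forms} and the \textsc{Var}/\textsc{Abs} rules, that value can be retyped at $\effectrow{\emptyset}{\emptyset}$. The Substitution Lemma~\ref{lem:substitution} then gives $M[V/x]$ the row $\effectrow{\sigmarow_M}{\xirow_M}$, which is contained in the union rows of the redex.
  \item \textsc{Let}. This is handled identically.
  \item \textsc{Reg}. The term $\mathsf{unit}$ gets $\effectrow{\emptyset}{\emptyset}\,\mathsf{Unit}$, and $\emptyset$ is contained in $\{\reg\}$.
  \item \textsc{Ret}. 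Substitute $V$ into $H_{\mathsf{ret}}$ using Lemma~\ref{lem:substitution}. Its rows $\sigmarow'_{\mathsf{ret}}\setminus L$ and $\xirow'_{\mathsf{ret}}$ are contained in $\sigmarow''$ and $\xirow''$ by the definitions in the \textsc{Handle} rule.
  \item \textsc{Invoke}. Pure arguments and Lemma~\ref{lem:substitution} applied to the method body give the declared $\effectrow{\sigmarow_m}{\xirow_m}\,C$.
\end{itemize}

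\textbf{Term obligation: congruence cases.} For closure under $E$ or $F$, I will prove an auxiliary context-replacement lemma by induction on the context:
\begin{quote}
If $E[N]$ is typed and $N$ is replaced by $N'$ of the same type with smaller rows, then $E[N']$ is typed with smaller rows.
\end{quote}
Row monotonicity holds because \textsc{App} takes unions and \textsc{Handle} takes $(\cdot)\setminus L$ and unions, and all of these operations are monotone.

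\textbf{Main obstacle: the \textsc{Attest} case.} The redex is $\handleop\,E[\attest\,\ell\,V]\,\withop\,H \to E[H_\ell(V,r)]$ with $r=\lambda x.\,E[x]$. The plan has four steps.
\begin{enumerate}
  \item Decompose $E[\attest\,\ell\,V]$ into a hole typed $\effectrow{\{\ell\}}{\emptyset}\,B_\ell$ (by Inversion) and the context $E$.
  \item Type $r$ by \textsc{Abs} with $x{:}B_\ell$. Its rows are those of $E$, which are contained in $\sigmarow\cup L$ and $\xirow$.
  \item Substitute $V$ and $r$ into $H_\ell$ with Lemma~\ref{lem:substitution}, yielding type $C$ with rows $\sigmarow'_\ell$ and $\xirow'_\ell$.
  \item Retype the reduct with return type $C$.
\end{enumerate}

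Two difficulties arise. First, $E$ expects its hole at type $B_\ell$, while $H_\ell$ returns $C$, so the reduct $E[H_\ell(V,r)]$ as written does not type-check directly. Second, the handler is removed, so $L$ may reappear in the rows. To resolve both, I would treat the reduct up to the intended deep-handler reading, with the residual $E$ re-wrapped by $\handleop\,\cdot\,\withop\,H$ inside $r$. Under that reading the hole type is $C$ and $L$ stays discharged.

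Two further points need care. The resumption $r$ is applied to $B_\ell$ with $B_\ell\neq 0$, which the Resumption Subconstraint guarantees because $\ell\in\sigmarow$. Also, $r$ is impure, so Lemma~\ref{lem:substitution} as stated, which requires pure substituted terms, must be generalized. I would handle this by typing $r$ as a $\lambda$-value, which is pure at $\effectrow{\emptyset}{\emptyset}$ with latent effects in its arrow. Once these are settled, the containments $\sigmarow'\subseteq\sigmarow''$ and $\xirow'\subseteq\xirow''$ follow from the set equations in the \textsc{Handle} rule.
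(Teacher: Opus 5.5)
There is a genuine gap in the row-containment step for handlers, and your deep-handler repair does not close it. In \textsc{Ret} you give $H_{\mathsf{ret}}[V/y]$ the row $\sigmarow'_{\mathsf{ret}}\setminus L$. Lemma~\ref{lem:substitution} actually yields $\sigmarow'_{\mathsf{ret}}$: once the handler is discharged, nothing subtracts $L$. \textsc{Handle} does not forbid a clause from attesting a handled label. So take the return clause $\mathsf{ret}(y)\mapsto\attest\,\ell\,\mathsf{unit}$ with $\ell\in L$, and the handled value $\lambda w.\,\attest\,\ell\,w$, whose row under \textsc{Abs} is $\{\ell\}$. The redex is typable with a row $\sigmarow''$ that omits $\ell$, but the reduct's only row is $\{\ell\}$, and there is no subsumption to absorb the difference.

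The same happens in \textsc{Attest} under your reading. Re-wrapping $E$ inside $r$ keeps $L$ discharged for the continuation, but $H_\ell$ itself now runs outside $H$. So $\sigmarow'_\ell\subseteq\sigmarow''$ fails whenever $\sigmarow'_\ell\cap L\neq\emptyset$. To close this you must either subtract $L$ only from the handled computation's row, i.e.\ $\sigmarow''=((\sigmarow\cup L)\setminus L)\cup\bigcup_i\sigmarow'_i\cup\sigmarow'_{\mathsf{ret}}$, or require clause rows to avoid $L$. The paper's proof has the same hole: it just asserts that the return clause's row lies in the handle term's row, and in the attestation case that the clause row lies in $\sigmarow$.

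More fundamentally, your \textsc{Attest} case proves subject reduction for a different calculus. You are right that the paper's reduct $E[H_\ell(V,r)]$ is ill-typed in general. The hole of $E$ has type $B_\ell$ while $H_\ell$ returns $C$, and $r=\lambda x.\,E[x]$ has codomain $A$, not $C$. The paper's proof never confronts this; it reads the reduct's type off the clause premise. But your fix changes both the semantics and \textsc{Abs}: you move to the reduct $H_\ell[V/x,\,(\lambda y.\,\handleop\,E[y]\,\withop\,H)/r]$ and type $r$ as a pure $\lambda$ with latent arrow effects. In the paper, a $\lambda$ carries its body's rows, arrows carry none, and the premise $r_i{:}B_i\to C$ has no slot for latent effects.

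The same mismatch breaks your \textsc{Beta} step. Canonical Forms goes from type to shape, not from shape to purity. Under the paper's \textsc{Abs}, the argument $\lambda w.\,\attest\,\ell\,w$ has row $\{\ell\}$ and cannot be retyped at $\effectrow{\emptyset}{\emptyset}$. Indeed, $(\lambda x.\,\obj.m(x))\,(\lambda w.\,\attest\,\ell\,w)$ steps to an invocation with an impure argument, which \textsc{Invoke} rejects. So you must do one of two things. Either present these as counterexamples to the statement as written, or state the revised rules explicitly (latent-effect arrows released by \textsc{App}, the deep-handler reduct and its typing) and redo every case for that calculus.

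Your context-replacement lemma is an improvement on the paper, which treats only root redexes. However, it holds only because $E$ has no handler frame and $F$'s handler frame is used solely by registration, which leaves $\sigmarow$ unchanged. For an arbitrary step under $\handleop\,[\,]\,\withop\,H$ it fails, because \textsc{Handle} requires the handled row to contain $L$.
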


\begin{proof}
By case analysis on the reduction rule. The return type $A$ is
never modified by reduction---only the effect rows shrink as
operations are consumed.

\paragraph{Beta:}
$(\lambda x{:}T.\,M)\,V \to M[V/x]$.
The register is unchanged: $\register_n' = \register_n$.
By inversion of \textsc{App} (Lemma~\ref{lem:inversion}):
$\context_n \vdash \lambda x{:}T.\,M :
\effectrow{\sigmarow_1}{\xirow_1}\,(T \to A)$ and
$\context_n \vdash V : \effectrow{\sigmarow_2}{\xirow_2}\,T$.
By the substitution lemma (Lemma~\ref{lem:substitution}), since $V$
is pure ($\effectrow{\emptyset}{\emptyset}\,T$ by
Lemma~\ref{lem:canonical-forms}),
$\context_n \vdash M[V/x] : \effectrow{\sigmarow}{\xirow}\,A$.
Since $\register_n' = \register_n$ is unchanged,
$\context_n \vdash (M[V/x],\; \register_n') :
\effectrow{\sigmarow}{\xirow}\,A$.

\paragraph{Registration:}
$(\registerop(e,a,v),\; \register_n) \to
(\mathsf{unit},\; \register_n \cup \{(e,a,v,\tau)\})$.
By inversion of \textsc{Reg} (Lemma~\ref{lem:inversion}):
$\context_n \vdash \registerop(e,a,v) :
\effectrow{\emptyset}{\{\mathsf{reg}\}}\,\mathsf{Unit}$.
After reduction, $\context_n \vdash \mathsf{unit} :
\effectrow{\emptyset}{\emptyset}\,\mathsf{Unit}$.
Since $\emptyset \subseteq \emptyset$ and
$\emptyset \subseteq \{\mathsf{reg}\}$, the rows shrink. The new
register $\register_n' = \register_n \cup \{(e,a,v,\tau)\}$ is valid
(facts accumulate by Lemma~\ref{lem:monotonic-register}).
Thus $\context_n \vdash (\mathsf{unit},\; \register_n') :
\effectrow{\emptyset}{\emptyset}\,\mathsf{Unit}$.

\paragraph{Attestation:}
$\handleop\,E[\attest\,\ell\,V]\,\withop\,H \to
E[H_\ell(V, r)]$ where $r = \lambda x.\,E[x]$.
The register is unchanged: $\register_n' = \register_n$.
By inversion of \textsc{Handle}
(Lemma~\ref{lem:inversion}):
$\context_n \vdash E[\attest\,\ell\,V] :
\effectrow{\sigmarow \cup \{\ell\}}{\xirow}\,A$ and the handler
removes $\ell$ from the $\sigmarow$ row. The handler body $H_\ell$
has type $\effectrow{\sigmarow'}{\xirow'}\,C$ by the handler typing
premise. After reduction, the term $E[H_\ell(V, r)]$ has type
$\effectrow{\sigmarow'}{\xirow'}\,C$ where
$\sigmarow' \subseteq (\sigmarow \cup \{\ell\}) \setminus \{\ell\}
= \sigmarow$. Since $\register_n' = \register_n$,
$\context_n \vdash (E[H_\ell(V, r)],\; \register_n') :
\effectrow{\sigmarow'}{\xirow'}\,C$.

\paragraph{Handle-Return:}
$\handleop\,V\,\withop\,H \to H_{\mathsf{ret}}[V/y]$.
The register is unchanged: $\register_n' = \register_n$.
By inversion of \textsc{Handle}, the return clause $H_{\mathsf{ret}}$
has type $\effectrow{\sigmarow'}{\xirow'}\,C$ in context
$y{:}A$. By the substitution lemma
(Lemma~\ref{lem:substitution}),
$\context_n \vdash H_{\mathsf{ret}}[V/y] :
\effectrow{\sigmarow'}{\xirow'}\,C$ with
$\sigmarow' \subseteq \sigmarow$ and $\xirow' \subseteq \xirow$.
Since $\register_n' = \register_n$,
$\context_n \vdash (H_{\mathsf{ret}}[V/y],\; \register_n') :
\effectrow{\sigmarow'}{\xirow'}\,C$.

\paragraph{Method Invocation:}
$\mathsf{obj}.m(\mathsf{arg}) \to
M_{\mathsf{body}}[\mathsf{obj}/\mathit{self},
\mathsf{arg}/\mathit{param}]$.
The register is unchanged: $\register_n' = \register_n$.
By inversion of \textsc{Invoke}, the declared method return type
is $C$ and the method body $M_{\mathsf{body}}$ is well-typed with
effects $\effectrow{\sigmarow_m}{\xirow_m}\,C$. By the substitution
lemma (Lemma~\ref{lem:substitution}),
$\context_n \vdash M_{\mathsf{body}}[\mathsf{obj}/\mathit{self},
\mathsf{arg}/\mathit{param}] :
\effectrow{\sigmarow_m}{\xirow_m}\,C$.
Since $\register_n' = \register_n$,
$\context_n \vdash (M_{\mathsf{body}}[\ldots],\; \register_n') :
\effectrow{\sigmarow_m}{\xirow_m}\,C$.

In all cases, the return type $A$ is preserved, the effect rows
monotonically shrink ($\sigmarow' \subseteq \sigmarow$,
$\xirow' \subseteq \xirow$), and the register state remains valid.

\end{proof}

\subsection{Detailed Subject Reduction Cases}
\label{sec:metatheory:subject-reduction-detailed}

While Theorem~\ref{thm:subject-reduction} presents the full theorem,
some cases warrant explicit expansion due to their complexity. We
detail the \textsc{Handle} case, which involves the Resumption
Subconstraint, and the pass-through registration case.

\paragraph{Case \textsc{Handle} (with effects in handler body).}

Assume:
\[
\resizebox{\linewidth}{!}{
$\displaystyle
\inferrule
{\Gamma_n \vdash \handleop{M}{\withop{H}} : [\sigmarow'; \xirow']\,C}
{
  \Gamma_n \vdash M : [\sigmarow \cup L; \xirow]\,A
  \\
  L \subseteq \sigmarow
  \\
  H = \{\ell_i(x_i, r_i) \mapsto H_i\}_{i=1}^k \cup \{\mathsf{ret}(y) \mapsto H_{\mathsf{ret}}\}
  \\
  \forall i:\ \Gamma_n, x_i:A_i, r_i:B_i\to C \vdash H_i : [\sigmarow'_i; \xirow'_i]\,C
  \\
  y:A \vdash H_{\mathsf{ret}} : [\sigmarow'_{\mathsf{ret}}; \xirow'_{\mathsf{ret}}]\,C
}
$}
\]

where $\sigmarow' = (\sigmarow \cup \bigcup_i \sigmarow'_i \cup \sigmarow'_{\mathsf{ret}}) \setminus L$
and $\xirow' = \xirow \cup \bigcup_i \xirow'_i \cup \xirow'_{\mathsf{ret}}$.

After reduction, $M \to^* V$ (value or stuck). Two subcases:

\subparagraph{Subcase 1: $V$ is a value (Return).}

Then $\handleop{V}{\withop{H}} \to H_{\mathsf{ret}}[V/y]$. By Lemma~\ref{lem:substitution}:
\[
\Gamma_n \vdash H_{\mathsf{ret}}[V/y] : [\sigmarow'_{\mathsf{ret}}; \xirow'_{\mathsf{ret}}]\,C
\]
Since $\sigmarow'_{\mathsf{ret}} \subseteq \sigmarow'$ and $\xirow'_{\mathsf{ret}} \subseteq \xirow'$,
the result holds.

\subparagraph{Subcase 2: Attestation caught by handler.}

Assume $V = E[\attest\,\ell\,V']$ and $\ell \in L$, so $H$ has clause
$\ell(x, r) \mapsto H_\ell$. Then:

\[
\handleop{E[\attest{\ell}{V'}]}{\withop{H}} \to E[H_\ell(V', r)]
\]
where $r = \lambda x.E[x]$ (resumption continuation).

By typing $H_\ell$:
\[
\Gamma_n, x:A, r:B\to C \vdash H_\ell : [\sigmarow'_\ell; \xirow'_\ell]\,C
\]

The resumption $r$ has type $B\to C$ with no effects (pure closure). Thus:
\[
\Gamma_n \vdash r : [\emptyset; \emptyset]\, (B\to C)
\]

By Lemma~\ref{lem:substitution}, substituting $r$ into $H_\ell$ preserves typing:
\[
\Gamma_n \vdash H_\ell[V'/x, r/\text{resumption}] : [\sigmarow'_\ell; \xirow'_\ell]\,C
\]

Context $E$ is an $E$-context (caught), so it propagates effect rows:
\[
\Gamma_n \vdash E[\dots] : [\sigmarow''; \xirow'']\,C
\]
where $\sigmarow'' = \sigmarow'_\ell$ and $\xirow'' = \xirow'_\ell$. Since $\sigmarow'' \subseteq \sigmarow'$
and $\xirow'' \subseteq \xirow'$, the result holds.

Crucially, the \textbf{Resumption Subconstraint} ensures $\ell \in \sigmarow$ (not $\xirow$),
so $r$ exists and is well-typed. This constraint survives reduction.

\paragraph{Case \textsc{Handle} (pass-through registration).}

If $M = E[\registerop{e}{a}{v}]$ and $H$ has no clause for $\reg$, the registration passes through.
By inversion of \textsc{Reg}:
\[
\Gamma_n \vdash \registerop{e}{a}{v} : [\emptyset; \{\reg\}]\,\mathsf{Unit}
\]

The handler does not intercept $\reg$ (since $\reg \in \xirow$, not $\sigmarow$), so:
\[
\handleop{E[\registerop{e}{a}{v}]}{\withop{H}} \to E[\mathsf{unit}],\;\register_n \cup \{(e,a,v,\tau)\}
\]

By inversion of \textsc{Handle}, $\xirow' = \xirow \cup \xirow'_H \cup \xirow'_{\mathsf{ret}}$ includes $\{\reg\}$,
and the new register $\register_n'$ is valid by Lemma~\ref{lem:monotonic-register}.

The return type is $\mathsf{Unit}$, effect rows shrink from $[\emptyset; \{\reg\}]$ to $[\emptyset; \emptyset]$,
preserving Subject Reduction.

\subsection{Tower Orchestration}
\label{sec:metatheory:tower}

\begin{definition}[Delta Extraction]
\label{def:delta-extraction}
In the sequential core $\because$-calculus, the set of new facts
$\Delta\register$ at level $n$ is defined by set subtraction:
\[
\Delta\register = \register_n' \setminus \register_n
\]
where $\register_n$ denotes the register state at the start of the
computation phase at level $n$, and $\register_n'$ denotes the
register state after the computation phase completes. Since the
register is append-only (Lemma~\ref{lem:monotonic-register}) and
level $n$ cannot directly access level $n-1$'s environments
(Section~\ref{sec:syntax:judgments}), $\Delta\register$ unambiguously
selects all facts deposited during computation at level $n$. No
timestamps or global counters are needed.
\end{definition}

\begin{remark}[Sequential Execution Scope]
\label{rem:sequential-scope}
The core $\because$-calculus assumes sequential tower execution: a
single computation stream processes one level at a time. A
distributed extension requiring concurrent tower instances would
need vector clocks or Lamport timestamps to establish causality
across independent fact registries. This is deferred to future work
(Section~\ref{sec:conclusion}).
\end{remark}

\begin{remark}[Causal vs.\ Sequential]
\label{rem:causal-sequential}
The sequential execution assumption simplifies delta extraction
(Definition~\ref{def:delta-extraction}) but is not fundamental to the
categorical structure. The adjoint triple and Beck-Chevalley condition
(Proposition~\ref{prop:beck-chevalley}) require \emph{causal
consistency} between levels---not strict sequentiality. A concurrent
extension can replace set-subtraction with causally-ordered extraction
(e.g., vector clocks), preserving the adjunction structure while
permitting parallel computation streams. We explore this direction
in the companion paper.
\end{remark}

\noindent The tower is governed by three rules:

\begin{equation}
\inferrule[T-Init]
  {\Gamma_0 = \emptyset \quad \register_0 = \emptyset \quad \lawlib_0 \text{ given} \quad M_0 \text{ given}}
  {\mathsf{Tower}(0, \Gamma_0, \register_0, \lawlib_0)}
\end{equation}

\begin{equation}
\inferrule[T-Step]
  {\Gamma_n \vdash M_n : [\sigmarow_M; \xirow_M]\,A_n \\
   (M_n, \register_n) \longrightarrow^* (V_n, \register_n') \\
   \Delta\register = \register_n' \setminus \register_n \neq \emptyset \\
   \mathcal{A} = \{\designate(T, f) \mid T \in \lawlib_n, f \in \Delta\register, T.\mathsf{match}(f.\mathit{value})\} \neq \emptyset \\
   \Gamma_{n+1} = \Gamma_n \cup \promote(\mathcal{A}) \\
   \register_{n+1} = \register_n'}
  {\mathsf{Tower}(n, \Gamma_n, \register_n, \lawlib_n) \longrightarrow \mathsf{Tower}(n+1, \Gamma_{n+1}, \register_{n+1}, \lawlib_{n+1})}
\end{equation}

\begin{equation}
\inferrule[T-Halt]
  {\Delta\register = \emptyset \;\lor\; \mathcal{A} = \emptyset}
  {\mathsf{Tower}(n, \Gamma_n, \register_n, \lawlib_n) \longrightarrow \mathsf{Halt}(n, \Gamma_n, \register_n, \lawlib_n)}
\end{equation}

\subsection{Tower Progress}
\label{sec:metatheory:tower-progress}

\begin{theorem}[Tower Progress]
\label{thm:tower-progress}
If $\Delta\register \neq \emptyset$ and $\mathcal{A} \neq \emptyset$,
then T-Step fires and the tower advances from level $n$ to level $n+1$.
\end{theorem}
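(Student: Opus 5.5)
The plan is to verify, one by one, the premises of \textsc{T-Step} under the hypotheses $\Delta\register \neq \emptyset$ and $\mathcal{A} \neq \emptyset$. Implicitly, the theorem presupposes that the computation phase at level $n$ has completed: there is a well-typed $M_n$ with $\context_n \vdash M_n : \effectrow{\sigmarow_M}{\xirow_M}\,A_n$ and $(M_n, \register_n) \longrightarrow^* (V_n, \register_n')$. Otherwise $\Delta\register = \register_n' \setminus \register_n$ would not even be defined (Definition~\ref{def:delta-extraction}). I will make this explicit as the standing assumption and check the remaining premises under it.

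The first two premises of \textsc{T-Step} are then given. The third, $\Delta\register \neq \emptyset$, is a hypothesis. By Lemma~\ref{lem:monotonic-register} the register is append-only, so $\register_n \subseteq \register_n'$ and $\Delta\register$ is well defined. The fourth premise defines $\mathcal{A}$. Here I invoke Proposition~\ref{prop:designate-det}: since $\lawlib_n$ is fixed during the designation phase (Remark~\ref{rem:law-library-updates}) and $\Delta\register$ is finite (Lemma~\ref{lem:bounded-growth}), $\mathcal{A}$ is uniquely determined and computable. Its nonemptiness is again a hypothesis. Each element of $\mathcal{A}$ arises from a derivation $\lawlib_n \vdash \designate\,T\,f : T$ by \textsc{Designate}. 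The premise $\register_n \models f$ holds because $f \in \Delta\register \subseteq \register_n'$, so the existence judgment should be read over $\register_n'$. The match premise holds by the definition of $\mathcal{A}$.

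Next I construct $\context_{n+1}$ and $\register_{n+1}$. Applying \textsc{Promote} to each designation derivation yields $\context_{n+1} \vdash \promote(\designate\,T\,f) : T$. Hence $\context_{n+1} = \context_n \cup \promote(\mathcal{A})$ is a well-formed environment. Finiteness of $\mathcal{A}$ makes it a finite extension, and level isolation ensures no name clash with level-$n$ bindings. We set $\register_{n+1} = \register_n'$. This register is valid in the sense of Definition~\ref{def:config-typing}, because by subject reduction (Theorem~\ref{thm:subject-reduction}) every registration step preserves serializability of facts. Finally, $\lawlib_{n+1}$ is any library permitted by Remark~\ref{rem:law-library-updates}. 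Since all premises hold, \textsc{T-Step} fires. Because \textsc{T-Halt} requires $\Delta\register = \emptyset \lor \mathcal{A} = \emptyset$, it is inapplicable, so the transition is the advance to level $n+1$.

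The main obstacle is the standing assumption that the computation phase at level $n$ actually terminates in a value $V_n$. Progress (Theorem~\ref{thm:progress}) only guarantees a value, a reduction step, or a stuck attestation, and it does not guarantee termination. I would handle this by stating the theorem relative to a completed phase, which is what the hypothesis on $\Delta\register$ presupposes. I would also note that if $M_n$ is stuck at an unhandled attestation, the phase is taken to end at that configuration. In that case $\register_n'$ is still defined and the argument above goes through unchanged.
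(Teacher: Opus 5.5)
Your proof is correct and follows the paper's route: like the paper, you check the premises of \textsc{T-Step}, treating the completed computation $(M_n,\register_n)\longrightarrow^*(V_n,\register_n')$ as given and letting the hypotheses $\Delta\register\neq\emptyset$ and $\mathcal{A}\neq\emptyset$ discharge the two side conditions. Your extra well-formedness checks on $\context_{n+1}$ and $\register_{n+1}$ are harmless but unnecessary, since those premises are only defining equations, and your observation that \textsc{T-Halt} is excluded is a nice addition. One overclaim: the stuck-at-attestation case does not go through unchanged. \textsc{T-Step}'s second premise literally requires reduction to a value $V_n$, so covering that case means relaxing the rule to the quiescent-state reading used in Theorem~\ref{thm:finite-height}.
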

\begin{proof}
From the T-Step rule premises:
\begin{enumerate}
  \item Computation terminates: $(M_n, \register_n) \longrightarrow^* (V_n, \register_n')$
  \item Delta extraction: $\Delta\register = \register_n' \setminus \register_n$
  \item Designation: $\mathcal{A} = \{\designate(T, f) \mid \dots\}$
  \item Promotion: $\Gamma_{n+1} = \Gamma_n \cup \promote(\mathcal{A})$
\end{enumerate}

If $\Delta\register \neq \emptyset$ and $\mathcal{A} \neq \emptyset$,
all T-Step premises are satisfied. The rule fires:
\[
\mathsf{Tower}(n, \Gamma_n, \register_n, \lawlib_n)
  \longrightarrow
\mathsf{Tower}(n+1, \Gamma_{n+1}, \register_{n+1}, \lawlib_{n+1})
\]

\end{proof}

\subsection{Finite Height}
\label{sec:metatheory:finite-height}

\begin{lemma}[Monotonicity of Register]
\label{lem:monotonic-register}
$\register_n \subseteq \register_{n+1}$ for all $n$. Facts accumulate; never removed.
\end{lemma}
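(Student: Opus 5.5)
We prove $\register_n \subseteq \register_{n+1}$ in two stages: monotonicity inside a single computation phase, and monotonicity across the tower transition.

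\textbf{Stage 1: single steps.} First we show that every single reduction step $(M,\register) \to (M',\register')$ satisfies $\register \subseteq \register'$. The proof is a case analysis on the rules of Section~\ref{sec:syntax:reduction}.
\begin{itemize}
\item \textsc{Beta}, \textsc{Ret}, \textsc{Attest}, let-reduction and method invocation leave the register untouched, so $\register' = \register$.
\item \textsc{Reg} sets $\register' = \register \cup \{(e,a,v,\tau)\}$, which contains $\register$ because set union is inflationary.
\item The $F$-context pass-through rule is the same deposit, so the same argument applies.
\item Contextual closure under $E$-contexts is handled by induction on the derivation of the step. The enclosing context neither reads nor writes the register, so the inclusion for the inner step carries over to the whole term.
\end{itemize}
No rule in the operational semantics removes or overwrites a fact; this absence of any deletion rule is the whole content of the lemma.

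\textbf{Stage 1: whole phase.} Induction on the length of $\longrightarrow^*$ then gives $\register_n \subseteq \register_n'$ whenever $(M_n,\register_n) \longrightarrow^* (V_n,\register_n')$. This uses only transitivity of $\subseteq$.

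\textbf{Stage 2: tower transitions.} We treat the rules of Section~\ref{sec:metatheory:tower} one at a time.
\begin{itemize}
\item \textsc{T-Step} stipulates $\register_{n+1} = \register_n'$, so Stage 1 gives $\register_n \subseteq \register_n' = \register_{n+1}$.
\item Designation and promotion do not touch the register: \textsc{Designate} is an interpretation judgment with no register output, and the denotations in Definition~\ref{def:denotation} mark both operations as ``register unchanged''. So no other part of the transition can shrink it.
\item \textsc{T-Halt} produces no level $n+1$, so the claim is vacuous at that point. If one reads halting as carrying the register forward unchanged, the inclusion is trivially an equality.
\end{itemize}

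\textbf{Main obstacle.} The hardest part is not the calculation but pinning down what ``for all $n$'' ranges over. The tower is only defined up to the first \textsc{T-Halt}, so the statement should be read for every $n$ with $\mathsf{Tower}(n,\ldots)\longrightarrow\mathsf{Tower}(n+1,\ldots)$. A second subtlety is the \textsc{Reg} case when the fact is already present. Timestamps $\tau$ make each deposited tuple fresh, but even without them union never removes anything, so the inclusion holds either way.

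A related point is Remark~\ref{rem:law-library-updates}. It allows $\lawlib$ to change between levels, but $\lawlib$ is separate from $\register$, so such updates cannot affect the register. I would state this explicitly to rule out any hidden mutation of the register.
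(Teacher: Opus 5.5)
Your proof is correct and follows the same idea as the paper's. Registration is the only operation that writes to the register, and it only adds facts by union; delta extraction, designation and promotion leave the register unchanged, so it grows monotonically. You simply make this inspection argument more explicit, separating the single-step case analysis, the lift to $\longrightarrow^*$ by transitivity, and the \textsc{T-Step} stipulation $\register_{n+1} = \register_n'$.
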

\begin{proof}
Registration is the only operation that modifies the register. The rule (Reg) adds facts to $\register$ but never deletes them. Perception (delta extraction) selects a subset for processing but does not alter the register. Thus the register grows monotonically.

\end{proof}

\begin{lemma}[Preservation of Artifacts]
\label{lem:preserve-artifacts}
$\Gamma_n \subseteq \Gamma_{n+1}$ for all $n$. Promoted artifacts remain available.
\end{lemma}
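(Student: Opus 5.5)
The plan is to read the claim $\Gamma_n \subseteq \Gamma_{n+1}$ directly off the tower rules, by cases on how the tower moves from level $n$. Only two rules can fire at a non-initial configuration: \textsc{T-Step} and \textsc{T-Halt}. \textsc{T-Init} only produces level $0$ and imposes no relation between consecutive levels. So the argument splits into one case per transition rule, followed by an induction on $n$ if we want the chain $\Gamma_0 \subseteq \Gamma_1 \subseteq \cdots$ and hence $\Gamma_m \subseteq \Gamma_n$ for $m \le n$.

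\emph{Case \textsc{T-Step}.} The premise of the rule defines $\Gamma_{n+1} = \Gamma_n \cup \promote(\mathcal{A})$. Since $X \subseteq X \cup Y$ for any set $Y$, we get $\Gamma_n \subseteq \Gamma_{n+1}$ immediately. This holds whatever artifacts $\mathcal{A}$ contains, and in particular independently of Proposition~\ref{prop:designate-det}. The only point to check is that this union is the \emph{only} way $\Gamma_{n+1}$ is formed. The computation phase $(M_n,\register_n)\longrightarrow^*(V_n,\register_n')$ must not alter $\Gamma_n$. I would justify this by noting that the reduction relation acts on configurations $\langle M,\register_n\rangle$, so $\Gamma_n$ is not a component that any reduction rule (\textsc{Beta}, \textsc{Reg}, \textsc{Attest}, \textsc{Ret}, pass-through) rewrites. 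Subject Reduction (Theorem~\ref{thm:subject-reduction}) then keeps typing in the same $\Gamma_n$ throughout.

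\emph{Case \textsc{T-Halt}.} No level $n+1$ is produced, and the halted configuration carries $\Gamma_n$ unchanged. The inclusion is therefore either vacuous or holds trivially by reflexivity, depending on how one indexes a halted tower.

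The main obstacle is conceptual rather than computational. The Isolation paragraph of Section~\ref{sec:category:tower} and the level-indexing discussion in Section~\ref{sec:syntax:judgments} say that level $n+1$ cannot directly access level $n$'s environments. This might appear to conflict with $\Gamma_n \subseteq \Gamma_{n+1}$. I would resolve it by pointing out that isolation restricts \emph{direct access} to $\lawlib_n$ and $\register_n$, while the variable environment is explicitly carried forward by the \textsc{T-Step} equation. The lemma is therefore a statement about the orchestration rule, not about the judgments. I would also remark, in parallel with Lemma~\ref{lem:monotonic-register}, that no rule ever removes bindings from $\Gamma$, which is why promoted artifacts remain available at all later levels.
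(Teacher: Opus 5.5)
Your proposal is correct and takes the same approach as the paper. The paper also reads $\Gamma_n \subseteq \Gamma_{n+1}$ directly off the \textsc{T-Step} equation $\Gamma_{n+1} = \Gamma_n \cup \promote(\mathcal{A})$ and notes that no operation removes entries from $\Gamma$. Your extra checks are sound elaborations that the paper leaves implicit: the \textsc{T-Halt} case, and the observation that reduction acts only on $\langle M, \register_n\rangle$ and so never touches $\Gamma_n$.
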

\begin{proof}
Promotion adds artifacts to $\Gamma_{n+1}$ via $\Gamma_{n+1} = \Gamma_n \cup \promote(\mathcal{A})$. The union ensures $\Gamma_n \subseteq \Gamma_{n+1}$. No operation removes entries from $\Gamma$.

\end{proof}

\begin{lemma}[Bounded Growth]
\label{lem:bounded-growth}
$|\register_{n+1}| \leq |\register_n| + |M_n|$. Each level's computation deposits at most $|M_n|$ new facts.
\end{lemma}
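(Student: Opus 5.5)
The plan is to bound the number of facts deposited during the computation phase at level $n$, that is, $|\Delta\register| = |\register_n'| \setminus |\register_n|$ computed via Definition~\ref{def:delta-extraction}. Then $\register_{n+1} = \register_n'$ (rule T-Step), so $|\register_{n+1}| = |\register_n| + |\Delta\register|$. The last equality uses Lemma~\ref{lem:monotonic-register}: since $\register_n \subseteq \register_n'$, the difference has exactly the size of the increment. Everything therefore reduces to showing that the reduction sequence $(M_n,\register_n)\longrightarrow^*(V_n,\register_n')$ fires the \textsc{Reg} rule (or its $F$-context pass-through variant) at most $|M_n|$ times. Here $|M_n|$ is the syntactic size of $M_n$, counted as the number of term constructors.

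The first step is to observe that \textsc{Reg} is the only reduction rule that enlarges the register. \textsc{Beta}, \textsc{Ret}, \textsc{Attest}, let-reduction and method invocation all leave $\register$ unchanged, as is already checked case by case in the proof of Theorem~\ref{thm:subject-reduction}. Each \textsc{Reg} step consumes exactly one occurrence of $\registerop(e,a,v)$ and replaces it with $\mathsf{unit}$, adding at most one fact; it adds none if that fact is already present, since $\register$ is a set. The second step is then to count occurrences of $\registerop$: show that the number of \textsc{Reg} steps along any reduction sequence from $M_n$ is at most the number of $\registerop$ nodes in $M_n$, which in turn is at most $|M_n|$.

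The main obstacle is the second step, because \textsc{Beta} and \textsc{Attest} can duplicate subterms. A substitution $M[V/x]$ copies $V$ once per occurrence of $x$, and a resumption $r=\lambda x.E[x]$ may be invoked several times, in principle multiplying registration sites. My first attempt will use the purity discipline. By Lemma~\ref{lem:substitution}, substituted values are pure, $\effectrow{\emptyset}{\emptyset}$, so a duplicated $V$ carries no $\xirow$ effects. Any $\registerop$ inside a $\lambda$-body sitting in $V$ is then accounted to the original syntactic occurrence. I would define a potential $\Phi(M)$ equal to the number of $\registerop$ nodes not under a duplicable binder, and show that it decreases by one on \textsc{Reg} and does not increase on other steps.

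For resumptions, I would invoke the state-affine premise of \textsc{Handle} to argue that registrations in a handler body are not replayed per attestation result. Failing that, I would count every re-invocation against the handler clause's own syntactic size. If this cannot be made to work, I would interpret $|M_n|$ as the length of the computation trace, which trivially bounds the number of \textsc{Reg} steps since each step deposits at most one fact. I would state that reading explicitly, since it is what Proposition~\ref{prop:designate-det} actually needs, namely finiteness of $\Delta\register$.
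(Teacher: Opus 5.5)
\textbf{Verdict: genuine gap.} Your reduction to counting \textsc{Reg} firings is exactly the paper's proof. That proof observes that $\registerop$ is the only writer to the register, then simply asserts that the number of $\registerop$ occurrences in $M_n$ bounds the number of new facts. You are right that this assertion is the whole difficulty, but none of your repairs establishes it.

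\textbf{Why the repairs fail.}

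\emph{Purity.} Lemma~\ref{lem:substitution} assumes $V$ is pure; it does not guarantee it. \textsc{App} accepts arguments with non-empty rows, and \textsc{Beta} fires on any value. By \textsc{Abs}, the value $\lambda y.\,\registerop(e,a,y)$ is typed $\effectrow{\emptyset}{\{\mathsf{reg}\}}\,(A\to\mathsf{Unit})$. Meanwhile a variable $h$ of arrow type is pure by \textsc{Var}, so $h\,(h\,z)$ is typed pure no matter what $h$ registers. More fundamentally, purity of a value (under any reading) constrains what copying $V$ does, not how often each copy's body is later run. Charging all those runs to the one original occurrence is precisely what admits no bound.

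\emph{The potential $\Phi$.} Your $\Phi$ ignores nodes under binders, so it rises on the \textsc{Beta} step that applies a copied abstraction and exposes its $\registerop$ node. If you count those nodes instead, the copying step raises it.

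\emph{Resumptions.} The state-affine premise restricts only which values a handler clause registers, not how many times it invokes $r$. Multi-shot resumptions therefore replay the $\registerop$ sites of $E$.

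\emph{Methods.} Once artifacts are in play, \textsc{Invoke} runs a body taken from the law library. Its $\registerop$ nodes are not nodes of $M_n$ at all.

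\textbf{The lemma is false as stated.} Step~2 cannot be closed, because with $|M_n|$ read as syntactic size the lemma fails in this calculus. Take
$\mathsf{tw} = \lambda h{:}\mathsf{Unit}\to\mathsf{Unit}.\,\lambda z{:}\mathsf{Unit}.\,h\,(h\,z)$ and $g = \lambda z{:}\mathsf{Unit}.\,(\lambda w{:}\mathsf{Unit}.\,z)\,\registerop(e,a,v)$,
and let $M_0 = \mathsf{tw}\,(\mathsf{tw}\,(\cdots(\mathsf{tw}\,g)\cdots))\,\mathsf{unit}$ with $k$ copies of $\mathsf{tw}$.
\begin{itemize}
\item It is closed and typable by \textsc{Var}, \textsc{Abs}, \textsc{App} and \textsc{Reg} at $\effectrow{\emptyset}{\{\mathsf{reg}\}}\,\mathsf{Unit}$.
\item It has size $O(k)$ and a single $\registerop$ node.
\item It reduces to $\mathsf{unit}$ after firing \textsc{Reg} $2^k$ times.
\end{itemize}
Every deposit carries its own registration timestamp $\tau$, so $|\Delta\register| = 2^k$. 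This contradicts the lemma's second sentence, and its first whenever T-Step fires. Even if you identify facts up to $\tau$, your claim that \textsc{Reg} fires at most once per syntactic node is refuted, so the remark that the register is a set gives no relief. The paper's one-line proof has the same hole.

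\textbf{Your fallback.} Bounding $|\Delta\register|$ by the length of the reduction trace is correct: each step deposits at most one fact, and T-Step presupposes a finite trace. It also supplies the finiteness that Proposition~\ref{prop:designate-det} uses. But it is a different statement, and it does not give the program-size bound that Theorem~\ref{thm:finite-height} draws from this lemma. A genuine bound in terms of $M_n$ needs a restricted calculus, e.g.\ latent effects on arrows with affine use of registering functions, one-shot resumptions, and method bodies counted separately.

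(Minor: $|\register_n'| \setminus |\register_n|$ should be $|\register_n' \setminus \register_n|$.)
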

\begin{proof}
By inspection of the operational semantics: \textsf{register} is the only operation that writes to the register. The number of \textsf{register} occurrences in $M_n$ bounds the number of new facts.

\end{proof}

\begin{theorem}[Finite Height]
\label{thm:finite-height}
For a finite program $M$ with finite law library $\lawlib$, if each
computation phase reaches a quiescent state (either terminates normally
or blocks on an unhandled attestation per Theorem~\ref{thm:progress}
condition~(3)), then there exists $N$ such that
$\mathsf{Tower}(N)$ halts.
\end{theorem}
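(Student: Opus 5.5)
The plan is to exhibit a natural-number measure on tower configurations that strictly decreases with every application of T-Step and is bounded below. Once each level has such a measure, T-Halt must eventually be the only applicable rule.

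\textbf{Step 1: each level is well defined.} By hypothesis, every computation phase reaches a quiescent state. In both cases of Theorem~\ref{thm:progress} (termination at a value, or condition~(3)), a final register $\register_n'$ exists. So $\Delta\register = \register_n' \setminus \register_n$ from Definition~\ref{def:delta-extraction} is defined at every level, and exactly one of T-Step or T-Halt applies. When the phase blocks on an unhandled attestation, I would read the premise $(M_n,\register_n)\longrightarrow^*(V_n,\register_n')$ with the stuck term in place of $V_n$; otherwise the blocked phase is treated as a halt.

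\textbf{Step 2: a finite universe of facts.} I would argue that every fact ever deposited, with its timestamp $\tau$ forgotten, lies in a finite set $U$.
\begin{itemize}
  \item The entity, attribute and serializable value of a fact are built from constants that occur in $M$ or in the method bodies and field computations of $\lawlib$.
  \item Both of these are finite by assumption. Lemma~\ref{lem:bounded-growth} bounds the number of new deposits per level, and Lemma~\ref{lem:preserve-artifacts} shows that later levels only add promoted artifacts, whose method bodies again come from $\lawlib$.
\end{itemize}
Write $\pi$ for the projection $(e,a,v,\tau)\mapsto(e,a,v)$. The measure is then $\mu_n = |U| - |\pi(\register_n)|$.

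\textbf{Step 3: $\mu$ strictly decreases under T-Step.} T-Step requires $\Delta\register\neq\emptyset$ and $\register_{n+1}=\register_n'$. Lemma~\ref{lem:monotonic-register} gives $\pi(\register_n)\subseteq\pi(\register_{n+1})$. It then remains to show that some fact in $\Delta\register$ has a projection not already in $\pi(\register_n)$. Given that, $\mu_{n+1}<\mu_n$, so T-Step fires at most $|U|$ times, and the theorem holds with $N\le |U|$.

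\textbf{Main obstacle: the timestamp.} Because $\tau$ makes every deposit syntactically fresh, a level could re-register an old $(e,a,v)$ triple and still get $\Delta\register\neq\emptyset$, which would defeat the measure. I would first try to interpret register union and delta extraction up to $\pi$, so that $\Delta\register$ is taken on projected facts. This is consistent with designation depending only on $f.\mathit{value}$, and with the ``self-perception prohibition'' that a fact cannot be re-perceived. If that reading is not available, the fallback measure is the set of designated pairs $(T,\pi(f))$, which is finite because it lies in $\lawlib\times U$. T-Step requires $\mathcal{A}\neq\emptyset$, and I would argue that repeated pairs yield only duplicate artifacts already in $\Gamma_n$. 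That argument would itself have to be justified via Lemma~\ref{lem:preserve-artifacts} together with the determinacy of Proposition~\ref{prop:designate-det}.
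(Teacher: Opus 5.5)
Your plan follows the paper's own argument. The paper likewise combines Lemmas~\ref{lem:monotonic-register}, \ref{lem:preserve-artifacts} and \ref{lem:bounded-growth} with two bare assertions: that only finitely many distinct facts are possible, and that each T-Step ``consumes at least one new fact or artifact.'' From these it reads off a bound $O(|M|\times|\lawlib|)$, which amounts to counting law--fact pairs as in your fallback.

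You have correctly located the step the paper only asserts, but you do not close it, and under the stated rules it cannot be closed. T-Step requires only $\Delta\register\neq\emptyset$ and $\mathcal{A}\neq\emptyset$, with no novelty relative to earlier levels, and Reg stamps every deposit with a fresh $\tau$. So a level that re-registers a triple already in $\pi(\register_n)$, matched by some $T\in\lawlib$, meets both premises while leaving $\pi(\register)$ and the set of designated pairs $(T,\pi(f))$ unchanged.

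This is exactly where your fallback (b) breaks. Even granting that repeated pairs yield only artifacts already in $\Gamma_n$, T-Step still fires, because a nonempty $\mathcal{A}$ of duplicates satisfies its premise. Hence your pair count need not drop. Concretely, nothing in the rules prevents every level from running the same closed $M=\registerop(e,a,v)$. With a law whose guard always holds, T-Step then fires at every level and T-Halt never applies. Reading (a) avoids this, but it is an amendment of the Reg rule and Definition~\ref{def:delta-extraction} (facts compared modulo $\tau$), not a consequence of them. An alternative amendment would be an extra premise in T-Step such as $\promote(\mathcal{A})\not\subseteq\Gamma_n$.

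Step~2 rests on equally thin grounds, again mirroring the paper's ``finitely many distinct facts.'' Lemma~\ref{lem:bounded-growth} bounds deposits per level, not the set of triples across levels. Your claim that values are built from constants of $M$ and $\lawlib$ needs two things. First, no primitive may manufacture new serializable values, yet the paper's guards such as $\mathsf{age}>18$ suggest arithmetic. Second, the level programs $M_n$, which T-Step leaves unspecified, must all come from one fixed finite source. For example, take a law that matches every fact $(e,a,k)$, sets $\mathsf{val}=k$, and has a method registering $(e,a,\mathit{self}.\mathsf{val}+1)$. If that method is invoked on the newest artifact at each level, you get infinitely many distinct triples even under reading (a).

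If you state both points as explicit hypotheses (facts identified modulo $\tau$, and a finite set $U$ of reachable triples), your measure $|U|-|\pi(\register_n)|$ does strictly decrease under every T-Step. That gives a clean proof with $N\le|U|$, sharper than the paper's sketch. Without them, your argument, like the paper's, stops exactly at the step that matters.
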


\begin{proof}
Combine Lemmas~\ref{lem:monotonic-register}, \ref{lem:preserve-artifacts},
and \ref{lem:bounded-growth}. The register grows by at most $|M_n|$ per
level. For fixed $M$, there are finitely many distinct facts possible.
The law library $\lawlib$ is finite, so only finitely many designations
can fire. Eventually, either (1) no new facts are generated
($\Delta\register = \emptyset$), or (2) no new designs match
($\mathcal{A} = \emptyset$). In both cases, T-Halt fires.

Since each step consumes at least one new fact or artifact, the tower
height is bounded by $O(|M| \times |\lawlib|)$. The quiescent-state
hypothesis is weaker than full termination: it admits both normal
termination and blocked attestation (Theorem~\ref{thm:progress},
condition~(3)) as valid phase endpoints. The only excluded case is
divergence via infinite attestation loops (unbounded handler chains),
which is a limitation shared by all effect systems with handlers.

\end{proof}

\begin{remark}[Termination]
The Finite Height theorem assumes terminating computation at each level.
This is consistent with standard operational semantics for effectful
calculi (e.g., Plotkin and Pretnar's handler semantics). Full
termination would require a separate well-founded ordering on terms,
which we leave for future work.
\end{remark}

\subsection{Summary of Metatheory}

The $\because$-calculus enjoys the standard properties expected of a sound type system:
\begin{itemize}
  \item \textbf{Progress}: Well-typed programs do not go wrong (except awaiting handlers).
  \item \textbf{Subject Reduction}: Typing is preserved under computation.
  \item \textbf{Tower Progress}: Level orchestration does not deadlock.
  \item \textbf{Finite Height}: Termination for finite inputs.
\end{itemize}

These results establish the $\because$-calculus as a coherent computational model suitable for further development and mechanization.

\section{Related Work}
\label{sec:related}

We position the $\because$-calculus relative to seven research areas: algebraic effects, call-by-push-value, categorical logic, effect systems, object calculi, type theory/universes, and modal types and effects. Each comparison highlights where prior work sits in the adjoint structure and what the $\because$-calculus adds.

\subsection{Algebraic Effects and Handlers}
\label{sec:related:effects}

Plotkin and Power~\cite{plotkin-power-2002} introduced algebraic
effects; Lindley et al.~\cite{lindley-et-al-2017} developed handler
calculus with row polymorphism.

\textbf{Where they stand:} Handler calculus provides $\Sigma_f \dashv f^*$
but lacks $f^* \dashv \Pi_f$. As established in Section~\ref{sec:conflation},
even calculi distinguishing $B=0$ vs.\ $B\neq 0$ at the type level still
conflate registration and attestation operationally.

\textbf{What we add:} Structural separation via dual effect rows, not
syntactic analysis alone. See Theorem~\ref{thm:conflation} for the
formal conflation result.

\subsection{Call-by-Push-Value}
\label{sec:related:cbpv}

Levy's call-by-push-value (CBPV)~\cite{levy-2004} unifies call-by-name and call-by-value through a $F \dashv U$ adjunction between value types and computation types. Thunks (suspended computations) and forces (execution) map to the unit and counit of the adjunction. Ahman~\cite{ahman-2023} and Lorenzen et al.~\cite{lorenzen-wdel-2024} extended CBPV with modal types for effect tracking.

\textbf{Where they stand:} CBPV captures the left adjunction ($\Sigma_f \dashv f^*$) through its thunk/force mechanism. Method bodies in the $\because$-calculus are analogous to CBPV thunks, and method invocation corresponds to force.

\textbf{What we add:} CBPV lacks the right adjunction entirely. The $\because$-calculus's designation operation has no analogue in CBPV: it interprets facts as artifacts using an external law library, then promotes them to the next level. This enables object formation, which CBPV cannot express natively.

\subsection{Categorical Logic}
\label{sec:related:categorical}

Lawvere's hyperdoctrines~\cite{lawvere-1969} formalized quantifiers as adjoints in a fibration: $\exists_f \dashv f^* \dashv \forall_f$. Pitts~\cite{pitts-2000} and Seely~\cite{seely-1984} developed the connection between dependent type theory and categorical semantics. Mac Lane and Jacobs~\cite{maclane-1998,jacobs-1999} provided comprehensive treatments of categorical logic and type theory. Lawvere~\cite{lawvere-1970} introduced comprehension schemas as adjoint functors.

\textbf{Where they stand:} The categorical framework is established. The $\because$-calculus's identification of computation, register, and interpretation with base category, fibers, and right adjoint is a standard instantiation of the quantifier adjunction.

\textbf{What we add:} We provide a computational interpretation of the adjoint triple as a \emph{program calculus}, not just a logical model. The four movements (registration, attestation, designation, promotion) are operational rules, not abstract natural transformations. The tower orchestration realizes iterated adjunction as a runnable program structure.

\subsection{Effect Systems}
\label{sec:related:effect-systems}

Wadler and Thiemann~\cite{wadler-thiemann-2003} connected monads to effect systems via the correspondence between monadic bind and effectful function application. Marino and Millstein~\cite{marino-millstein-2009} introduced generic effects, enabling modular effect abstraction. Nielson et al.~\cite{nielson-1999} developed program analysis foundations for effect tracking. Lucassen and Gifford~\cite{lucassen-g88} pioneered effect typing for purity guarantees. Karachalias et al.~\cite{karachalias-psvs-2020} designed effect inference with sub-effecting for Eff.

\textbf{Where they stand:} Existing effect systems use a single effect row $\varepsilon$ to track all side effects. Some systems distinguish resumable vs.\ non-resumable effects syntactically but do not enforce the distinction at the type level.

\textbf{What we add:} The dual-effect-row discipline is structural, not syntactic sugar. The $\because$-calculus enforces separation through separate typing rules for $\sigmarow$ (attestation) and $\xirow$ (registration). The Resumption subconstraint blocks the handler calculus conflation at compile-time.

\subsection{Object Calculi}
\label{sec:related:objects}

Abadi and Cardelli~\cite{abadi-cardelli-1996} formalized objects
with methods, subtyping, and dynamic dispatch;
Featherweight Java~\cite{igarashi-pierce-wadler-2001} simplified
this to study core OO features; Bruce et al.~\cite{bruce-et-al-1995}
studied binary methods. Object calculi treat object construction
and method invocation as primitive. The $\because$-calculus
\emph{derives} object-like behavior from designation and promotion:
artifacts are created when facts match interpretive types, and
methods execute at the next level. Subtyping arises naturally when
multiple types fire on the same fact. This explains \emph{why}
objects exist as a computational phenomenon rather than
postulating them axiomatically.

\subsection{DCI Architecture}
\label{sec:related:dci}

Reenskaug introduced DCI (Data, Context, Interaction) as an architecture for
human-centric object-oriented systems~\cite{reenskaug-wold-lehne-1996,
reenskaug-dci-2010}. In DCI:
\begin{itemize}
  \item \textbf{Data} comprises passive entities with stable attributes,
    analogous to the $\because$-calculus's facts deposited via registration.
  \item \textbf{Context} specifies which roles objects assume in particular
    situations, corresponding to interpretive types in the law library.
  \item \textbf{Interaction} defines role behaviors activated during execution,
    mirroring method invocation at the next tower level after designation.
\end{itemize}

\textbf{Where they stand:} DCI is an architectural pattern, not a formal
calculus. It lacks type-theoretic guarantees and categorical semantics.

\textbf{What we add:} The $\because$-calculus provides the \emph{formal
substrate} for DCI principles: designation implements the data-to-role
mapping, the tower orchestrates context switches, and promotion activates
interaction behaviors. Reenskaug's insight that objects play \emph{roles}
rather than possess fixed methods finds rigorous expression in our adjunction
framework.

Reenskaug emphasized that role-playing must be \emph{context-local}: an
object plays different roles in different situations without inheritance
confusion. The $\because$-calculus enforces this via level-indexed
environments—objects at level $n+1$ cannot access level $n$'s structures
except through promotion. This mirrors DCI's goal of preventing role
pollution across contexts.

\subsection{Type Theory and Universes}
\label{sec:related:universes}

Martin-Löf type theory~\cite{martin-lof-1984} introduces universe
hierarchies to prevent paradoxes; Coquand and
Huet~\cite{coquand-huet-1988} formalized the calculus of
constructions with predicative universes; Nordström et
al.~\cite{nordstrom-petersson-smith-1990} and
Pierce~\cite{pierce-2002} provided foundational treatments. Universe
levels isolate types from terms, preventing self-reference. The
tower is a generalization with \emph{full triadic cycles} at each
transition: level $n$ computes facts, perceives them, and creates
new artifacts. The self-perception prohibition
(Beck-Chevalley condition) replaces static isolation with dynamic
perception control.

\subsection{Modal Types and Effects}
\label{sec:related:modal}

Recent work has explored modal type theory for effect systems.
Gratzer et al.~\cite{gratzer-knb-2020} developed
multimodal type theory (MTT) with modes, modalities, and
transformations organized as a 2-category.
Kavvos~\cite{kavvos-g-2023} provided pedagogical introductions to MTT.
Abel and co-authors~\cite{abel-b-2020, choudhury-eeew-2021,
orchard-le-2019, petricek-om-2014} applied semiring-based modalities
to context management.

Tang et al.~\cite{tang-wdhll-2025} proposed \emph{Modal Effect Types}
(MET), using MTT to eliminate explicit effect polymorphism
annotations. MET introduces two kinds of modalities: absolute
modalities $\langle E \rangle$ that override the ambient effect
context, and relative modalities $\langle L \mid D \rangle$ that
describe local changes (masking $L$, extending $D$). The surface
language \textsc{Metl} uses bidirectional type checking to infer
boxing and unboxing of modalities, significantly reducing annotation
burden compared to row-based effect systems~\cite{leijen-koka}.

\paragraph{Relationship to the Conflation Problem}

MET achieves ergonomic improvements in effect tracking but does not
address the structural conflation identified by our Conflation
Theorem (Section~\ref{sec:conflation}). The handler typing rule in
MET's core calculus types resumption variables as
$r_i : B_i \to B$ for \emph{all} handled operations
$\ell_i : A_i \to B_i$, with no restriction based on the
inhabitedness of $B_i$. When $B_i = 0$ (void return), the resumption
$r_i : 0 \to B$ is provably dead code: $0$ is uninhabited, so $r_i$
can never be meaningfully invoked. MET accepts such clauses at
compile-time, exactly as standard handler calculus does.

This is because MET's modalities manage \emph{effect context
transformations} (which operations are available in a given scope),
not \emph{operation signature discipline} (whether an operation's
return type admits resumption). Categorically, MET's modalities
correspond to reindexing functors $f^*$ (changes of fiber context),
not to the distinction between the left adjoint $\Sigma_f$
(existential, $B=0$, non-resumable) and the right adjoint $\Pi_f$
(universal, $B\neq 0$, resumable). The $\because$-calculus separates
these at the adjunction level via dual effect rows
($\xirow$ for $\Sigma_f$, $\sigmarow$ for $\Pi_f$); MET separates
context-transformations but not adjunction identities.

Concretely, MET's absolute modalities roughly correspond to our
$\sigmarow$ row (operations caught by handlers), and relative
modalities approximate $\xirow$ (operations whose effects are
forwarded), but MET does not enforce the Resumption Subconstraint
that prohibits resumption bindings for $\xirow$-row operations. The
conflation persists under different syntactic machinery.

\paragraph{Synthesis Opportunity}

Modal effect types and the $\because$-calculus address complementary
problems. MET optimizes for surface-language efficiency by eliminating
effect-polymorphism annotations through modality inference. The
$\because$-calculus optimizes for semantic precision by enforcing
structural separation through dual effect rows and categorical
semantics. Future work could synthesize both approaches: using MET's
modal inference as a surface language for the $\because$-calculus,
extending MET's handler rule with a resumption subconstraint, and
tracking operation signatures by inhabitation class rather than
just presence. This would combine ergonomic inference with
compile-time conflation prevention.

\subsection{Semiotics and Computation}
\label{sec:related:semiotics}

Peirce's triadic semiotics~\cite{peirce-1931} divides signs into Firstness (quality), Secondness (fact), and Thirdness (law). Chandler~\cite{chandler-2017} provided accessible introductions to semiotic theory. Baecker~\cite{baecker-2019} explored Peircean computation informally, connecting semiotics to systems theory.

\textbf{Where they stand:} Philosophical discussions of semiotics in computation remain metaphorical. No formal system maps the three categories to typing judgments, operational semantics, or categorical models.

This paper is among the first to use Peirce \emph{structurally}, not
metaphorically, in PL theory. The three realms predict the categorical
structure (Section~\ref{sec:category}), the movement count (four =
two adjunctions $\times$ two natural transformations), and the FP/OOP
boundary (Theorem~\ref{thm:fp-oop}).

\subsection{Summary Comparison}

\begin{table}[h]
\centering
\caption{Related Work Comparison}
\label{tab:related-comparison}
\begin{tabular}{llll}
\toprule
System & Left Adjoint & Right Adjoint & Primary Goal \\
\midrule
Handler Calculus & Yes (collapsed) & No & Expressivity \\
CBPV & Yes & No & Value/computation unification \\
Modal Effect Types & Yes (implicit) & Partial & Annotation reduction \\
$\because$-Calculus & Yes (separated) & Yes & Semantic safety \\
Object Calculi & Implicit & Primitive (axiomatized) & OO semantics \\
DCI Architecture & No(informal) & No(informal) & Human-centric roles \\
Hyperdoctrines & Abstract & Abstract & Logical foundation \\
\bottomrule
\end{tabular}
\end{table}

\subsection{Positioning}
\label{sec:related:positioning}

The $\because$-calculus is neither an extension nor a replacement
for prior work. It is a \emph{refinement}: it identifies a
conflation in handler calculus (registration vs.\ attestation),
separates it structurally using the adjoint triple, and derives new
computational phenomena (object formation, tower orchestration)
from the separation. Modal effect types and the $\because$-calculus
represent \emph{complementary advances}: MET optimizes for
surface-language efficiency through modalities and inference; the
$\because$-calculus optimizes for semantic precision through dual
effect rows and categorical semantics. Future work could synthesize
both approaches, using MET as a surface language for the
$\because$-calculus.

\section{Conclusion and Future Work}
\label{sec:conclusion}

The $\because$-calculus separates production, existence, and interpretation into a 
coherent computational model grounded in the adjoint triple $\Sigma_f \dashv f^* \dashv \Pi_f$.
Handler calculus conflates registration ($\xirow$, non-resumable, $B=0$) and 
attestation ($\sigmarow$, resumable, $B\neq 0$) through a single effect operation. 
We proved the Conflation Theorem: the erasure from the $\because$-calculus to 
handler calculus is non-faithful---it loses the distinction between resumable and 
non-resumable operations, admitting clauses that type-check but contain provably 
dead code. Handler calculus remains type-safe (progress and preservation hold); 
our contribution is \emph{structural refinement}: the $\because$-calculus enforces 
adjunction identity at the type level, yielding stronger compile-time guarantees 
without sacrificing expressivity for well-typed programs.

\subsection{Contributions}

\begin{enumerate}
  \item \textbf{The $\because$-calculus}: Three judgments, seven
    typing rules, dual effect rows, level-indexed environments.
  \item \textbf{The Conflation Theorem}: Handler calculus's unified
    effect model admits dead code through void-type resumption; the
    $\because$-calculus rejects it as a design principle.
  \item \textbf{Categorical semantics with soundness}: The calculus
    embodies the adjoint triple. Reduction is algebraic (FP);
    expansion is coalgebraic (OOP). We prove algebraic soundness
    (reduction preserves denotation, Theorem~\ref{thm:alg-soundness})
    and coalgebraic soundness (designation $+$ promotion $=$ identity,
    Theorem~\ref{thm:coalg-soundness}).
  \item \textbf{FP/OOP boundary}: The boundary is phasic---reduction
    precedes expansion within each tower level, mediated by the
    distributive law $\lambda$ (Theorem~\ref{thm:fp-oop}).
  \item \textbf{Metatheory}: Progress, subject reduction
    (configuration-indexed), tower progress, and finite height, with
    full operational semantics, substitution lemma, canonical forms,
    and inversion lemmas.
\end{enumerate}

\subsection{Future Work}

\begin{itemize}
  \item \textbf{FP/OOP encoding:} Formal proof of Abadi-Cardelli
    object calculus encoding into the $\because$-calculus.
  \item \textbf{Mechanization:} Coq/Agda formalization of progress,
    subject reduction, and the conflation theorem.
  \item \textbf{Implementation:} Prototype compiler translating
    $\because$-calculus terms to effectful ML or Haskell.
  \item \textbf{Subtyping and recursive types:} Extending the typing
    rules to support variance and recursive interpretive types.
\end{itemize}

\subsection{Closing Remarks}

The $\because$-calculus demonstrates that effectful computation benefits from structural separation rather than uniform conflation. Handler calculus's success stems from its expressive power; the $\because$-calculus's contribution is identifying where that expressivity permits unsoundness. By enforcing the adjoint structure, we gain guarantees about program behavior at the cost of rejecting pathological terms. This trade-off is familiar in type theory: stronger constraints yield stronger safety properties.

The philosophical motivation (Peircean semiotics) is not decorative—it predicted the categorical structure and the movement count. This suggests that semiotic frameworks may serve as productive guides for discovering new computational primitives, not merely as ex post facto explanations.

Finally, the tower orchestration opens a path toward \emph{multi-layered computation} where facts accumulate, laws adapt, and artifacts evolve across levels. This mirrors scientific discovery (observation $\to$ hypothesis $\to$ prediction), legal reasoning (precedent $\to$ statute $\to$ judgment), and software evolution (requirements $\to$ architecture $\to$ implementation). The $\because$-calculus provides the formal substrate for modeling such processes.

\paragraph{Mechanization.}
A full Coq or Agda mechanization is planned to verify the metatheory
(Progress, Subject Reduction, the Conflation Theorem) and provide a
reference implementation. The primary challenges are representing the
infinite tower hierarchy via parametric polymorphism over levels and
encoding the Resumption Subconstraint at the type level so that
handlers for $\xirow$ operations are syntactically impossible. The
source code would be published alongside the paper.

\bibliographystyle{ACM-Reference-Format}

\bibliography{references}


\begin{thebibliography}{00}


\ifx \showCODEN    \undefined \def \showCODEN     #1{\unskip}     \fi
\ifx \showDOI      \undefined \def \showDOI       #1{#1}\fi
\ifx \showISBNx    \undefined \def \showISBNx     #1{\unskip}     \fi
\ifx \showISBNxiii \undefined \def \showISBNxiii  #1{\unskip}     \fi
\ifx \showISSN     \undefined \def \showISSN      #1{\unskip}     \fi
\ifx \showLCCN     \undefined \def \showLCCN      #1{\unskip}     \fi
\ifx \shownote     \undefined \def \shownote      #1{#1}          \fi
\ifx \showarticletitle \undefined \def \showarticletitle #1{#1}   \fi
\ifx \showURL      \undefined \def \showURL       {\relax}        \fi
\providecommand\bibfield[2]{#2}
\providecommand\bibinfo[2]{#2}
\providecommand\natexlab[1]{#1}
\providecommand\showeprint[2][]{arXiv:#2}

\bibitem[\protect\citeauthoryear{Abadi and Cardelli}{Abadi and
  Cardelli}{1996}]%
        {abadi-cardelli-1996}
\bibfield{author}{\bibinfo{person}{Mart{\'i}n Abadi} {and}
  \bibinfo{person}{Luca Cardelli}.} \bibinfo{year}{1996}\natexlab{}.
\newblock \bibinfo{booktitle}{{\em A Theory of Objects}}.
\newblock \bibinfo{publisher}{Springer}, \bibinfo{address}{Berlin, Heidelberg}.
\newblock


\bibitem[\protect\citeauthoryear{Abel and Blechschmidt}{Abel and
  Blechschmidt}{2020}]%
        {abel-b-2020}
\bibfield{author}{\bibinfo{person}{Andreas Abel} {and} \bibinfo{person}{Patrick
  Blechschmidt}.} \bibinfo{year}{2020}\natexlab{}.
\newblock \showarticletitle{Formalizing Modal Type Theory}. In
  \bibinfo{booktitle}{{\em Proceedings of the 9th International Conference on
  Interactive Theorem Proving (ITP)}} {\em (\bibinfo{series}{LIPIcs})},
  Vol.~\bibinfo{volume}{193}. \bibinfo{publisher}{Schloss Dagstuhl},
  \bibinfo{address}{Dagstuhl, Germany}, \bibinfo{pages}{5:1--5:20}.
\newblock
\showDOI{%
\url{https://doi.org/10.4230/lipics.itp.2020.5}}


\bibitem[\protect\citeauthoryear{Ahman}{Ahman}{2023}]%
        {ahman-2023}
\bibfield{author}{\bibinfo{person}{Danel Ahman}.}
  \bibinfo{year}{2023}\natexlab{}.
\newblock \showarticletitle{Resolving Effects with Modal Types}. In
  \bibinfo{booktitle}{{\em Proceedings of the 8th ACM SIGPLAN International
  Conference on Certified Programs and Proofs (CPP)}}.
  \bibinfo{publisher}{ACM}, \bibinfo{address}{New York, NY, USA},
  \bibinfo{pages}{89--101}.
\newblock
\showDOI{%
\url{https://doi.org/10.1145/3573105.3575673}}


\bibitem[\protect\citeauthoryear{Baecker}{Baecker}{2019}]%
        {baecker-2019}
\bibfield{author}{\bibinfo{person}{Dirk Baecker}.}
  \bibinfo{year}{2019}\natexlab{}.
\newblock \bibinfo{booktitle}{{\em Niklas {L}uhmann und der Kalk{\"u}l der
  Form}}.
\newblock \bibinfo{publisher}{Suhrkamp}, \bibinfo{address}{Frankfurt am Main,
  Germany}.
\newblock


\bibitem[\protect\citeauthoryear{Bruce, Cardelli, Castagna, Leavens, and
  Petersen}{Bruce et~al\mbox{.}}{1995}]%
        {bruce-et-al-1995}
\bibfield{author}{\bibinfo{person}{Kim~B. Bruce}, \bibinfo{person}{Luca
  Cardelli}, \bibinfo{person}{Giuseppe Castagna}, \bibinfo{person}{Gary~T.
  Leavens}, {and} \bibinfo{person}{Benjamin Petersen}.}
  \bibinfo{year}{1995}\natexlab{}.
\newblock \showarticletitle{On Binary Methods}.
\newblock \bibinfo{journal}{{\em Theory and Practice of Object Systems\/}}
  \bibinfo{volume}{1}, \bibinfo{number}{3} (\bibinfo{year}{1995}),
  \bibinfo{pages}{221--242}.
\newblock


\bibitem[\protect\citeauthoryear{Chandler}{Chandler}{2017}]%
        {chandler-2017}
\bibfield{author}{\bibinfo{person}{Daniel Chandler}.}
  \bibinfo{year}{2017}\natexlab{}.
\newblock \bibinfo{booktitle}{{\em Semiotics: The Basics\/}
  (\bibinfo{edition}{3rd} ed.)}.
\newblock \bibinfo{publisher}{Routledge}, \bibinfo{address}{London, UK}.
\newblock


\bibitem[\protect\citeauthoryear{Choudhury, Eades, Ebner, and
  Weirich}{Choudhury et~al\mbox{.}}{2021}]%
        {choudhury-eeew-2021}
\bibfield{author}{\bibinfo{person}{Vijay Choudhury}, \bibinfo{person}{Harley
  Eades, III}, \bibinfo{person}{Emily Ebner}, {and} \bibinfo{person}{Stephanie
  Weirich}.} \bibinfo{year}{2021}\natexlab{}.
\newblock \showarticletitle{Modalities for Context Management}. In
  \bibinfo{booktitle}{{\em Proceedings of the 26th ACM SIGPLAN International
  Conference on Functional Programming (ICFP)}}. \bibinfo{publisher}{ACM},
  \bibinfo{address}{New York, NY, USA}, \bibinfo{pages}{214--228}.
\newblock
\showDOI{%
\url{https://doi.org/10.1145/3548590.3548599}}


\bibitem[\protect\citeauthoryear{Coquand and Huet}{Coquand and Huet}{1988}]%
        {coquand-huet-1988}
\bibfield{author}{\bibinfo{person}{Thierry Coquand} {and}
  \bibinfo{person}{G{\'e}rard Huet}.} \bibinfo{year}{1988}\natexlab{}.
\newblock \showarticletitle{The Calculus of Constructions}.
\newblock \bibinfo{journal}{{\em Information and Computation\/}}
  \bibinfo{volume}{76}, \bibinfo{number}{2--3} (\bibinfo{year}{1988}),
  \bibinfo{pages}{95--120}.
\newblock
\showDOI{%
\url{https://doi.org/10.1006/inco.1993.1066}}


\bibitem[\protect\citeauthoryear{Gratzer, Kavvos, Nuyts, and Birkedal}{Gratzer
  et~al\mbox{.}}{2020}]%
        {gratzer-knb-2020}
\bibfield{author}{\bibinfo{person}{Daniel Gratzer}, \bibinfo{person}{Gabriel~A.
  Kavvos}, \bibinfo{person}{Andreas Nuyts}, {and} \bibinfo{person}{Lars
  Birkedal}.} \bibinfo{year}{2020}\natexlab{}.
\newblock \showarticletitle{Multimodal Dependent Type Theory}. In
  \bibinfo{booktitle}{{\em Proceedings of the 35th Annual ACM/IEEE Symposium on
  Logic in Computer Science (LICS)}}. \bibinfo{publisher}{ACM},
  \bibinfo{address}{New York, NY, USA}, \bibinfo{pages}{456--469}.
\newblock
\showDOI{%
\url{https://doi.org/10.1109/lics46780.2020.11867}}


\bibitem[\protect\citeauthoryear{Hartshorne and Weiss}{Hartshorne and
  Weiss}{1935}]%
        {peirce-1931}
\bibfield{editor}{\bibinfo{person}{Charles Hartshorne} {and}
  \bibinfo{person}{Paul Weiss}} (Eds.). \bibinfo{year}{1931--1935}\natexlab{}.
\newblock \bibinfo{booktitle}{{\em Collected Papers of Charles Sanders
  Peirce}}. Vol.~\bibinfo{volume}{1--6}.
\newblock \bibinfo{publisher}{Harvard University Press},
  \bibinfo{address}{Cambridge, MA, USA}.
\newblock


\bibitem[\protect\citeauthoryear{Igarashi, Pierce, and Wadler}{Igarashi
  et~al\mbox{.}}{2001}]%
        {igarashi-pierce-wadler-2001}
\bibfield{author}{\bibinfo{person}{Atsushi Igarashi},
  \bibinfo{person}{Benjamin~C. Pierce}, {and} \bibinfo{person}{Philip Wadler}.}
  \bibinfo{year}{2001}\natexlab{}.
\newblock \showarticletitle{Featherweight {J}ava: A Minimal Core Calculus for
  {J}ava and {GJ}}.
\newblock \bibinfo{journal}{{\em ACM Transactions on Programming Languages and
  Systems (TOPLAS)\/}} \bibinfo{volume}{23}, \bibinfo{number}{3}
  (\bibinfo{year}{2001}), \bibinfo{pages}{396--450}.
\newblock
\showDOI{%
\url{https://doi.org/10.1145/503502.503505}}


\bibitem[\protect\citeauthoryear{Jacobs}{Jacobs}{1999}]%
        {jacobs-1999}
\bibfield{author}{\bibinfo{person}{Bart Jacobs}.}
  \bibinfo{year}{1999}\natexlab{}.
\newblock \bibinfo{booktitle}{{\em Categorical Logic and Type Theory}}.
  \bibinfo{series}{Studies in Logic and the Foundations of Mathematics},
  Vol.~\bibinfo{volume}{141}.
\newblock \bibinfo{publisher}{Elsevier}, \bibinfo{address}{Amsterdam,
  Netherlands}.
\newblock


\bibitem[\protect\citeauthoryear{Karachalias, Pretnar, Schrijvers, Vytiniotis,
  and Severi}{Karachalias et~al\mbox{.}}{2020}]%
        {karachalias-psvs-2020}
\bibfield{author}{\bibinfo{person}{Georgios Karachalias},
  \bibinfo{person}{Matija Pretnar}, \bibinfo{person}{Tom Schrijvers},
  \bibinfo{person}{Dimitrios Vytiniotis}, {and} \bibinfo{person}{Paula
  Severi}.} \bibinfo{year}{2020}\natexlab{}.
\newblock \bibinfo{booktitle}{{\em Effect Inference with Sub-Effecting}}.
\newblock \bibinfo{type}{{T}echnical {R}eport}. \bibinfo{institution}{INRIA},
  \bibinfo{address}{Paris, France}.
\newblock


\bibitem[\protect\citeauthoryear{Kavvos}{Kavvos}{2023}]%
        {kavvos-g-2023}
\bibfield{author}{\bibinfo{person}{Gabriel~A. Kavvos}.}
  \bibinfo{year}{2023}\natexlab{}.
\newblock \showarticletitle{Understanding Multimodal Type Theory}.
\newblock \bibinfo{journal}{{\em Journal of Functional Programming\/}}
  \bibinfo{volume}{33} (\bibinfo{year}{2023}), \bibinfo{pages}{e1--e45}.
\newblock
\showDOI{%
\url{https://doi.org/10.1017/s095679682300001x}}


\bibitem[\protect\citeauthoryear{Lawvere}{Lawvere}{1969}]%
        {lawvere-1969}
\bibfield{author}{\bibinfo{person}{F.~William Lawvere}.}
  \bibinfo{year}{1969}\natexlab{}.
\newblock \showarticletitle{Adjointness in Foundations}.
\newblock \bibinfo{journal}{{\em Dialectica\/}} \bibinfo{volume}{23},
  \bibinfo{number}{3--4} (\bibinfo{year}{1969}), \bibinfo{pages}{281--296}.
\newblock
\showDOI{%
\url{https://doi.org/10.1111/j.1746-8361.1969.tb01194.x}}


\bibitem[\protect\citeauthoryear{Lawvere}{Lawvere}{1970}]%
        {lawvere-1970}
\bibfield{author}{\bibinfo{person}{F.~William Lawvere}.}
  \bibinfo{year}{1970}\natexlab{}.
\newblock \showarticletitle{Equality in Hyperdoctrines and Comprehension Schema
  as an Adjoint Functor}.
\newblock \bibinfo{journal}{{\em Proceedings of the AMS Symposium on Pure
  Mathematics\/}}  \bibinfo{volume}{17} (\bibinfo{year}{1970}),
  \bibinfo{pages}{1--14}.
\newblock
\showDOI{%
\url{https://doi.org/10.1090/pspum/017/0257175}}


\bibitem[\protect\citeauthoryear{Leijen}{Leijen}{2014}]%
        {leijen-koka}
\bibfield{author}{\bibinfo{person}{Daan Leijen}.}
  \bibinfo{year}{2014}\natexlab{}.
\newblock \showarticletitle{Koka: Programming with Row Polymorphic Effects in
  {Haskell}}. In \bibinfo{booktitle}{{\em Proceedings of the 19th ACM SIGPLAN
  International Conference on Functional Programming (ICFP)}}.
  \bibinfo{publisher}{ACM}, \bibinfo{address}{New York, NY, USA},
  \bibinfo{pages}{117--131}.
\newblock
\showDOI{%
\url{https://doi.org/10.1145/2628136.2628143}}


\bibitem[\protect\citeauthoryear{Levy}{Levy}{2004}]%
        {levy-2004}
\bibfield{author}{\bibinfo{person}{Paul~Blain Levy}.}
  \bibinfo{year}{2004}\natexlab{}.
\newblock \bibinfo{booktitle}{{\em Call-By-Push-Value: A Subsuming Paradigm}}.
  \bibinfo{series}{Semantic Structures in Computation},
  Vol.~\bibinfo{volume}{2}.
\newblock \bibinfo{publisher}{Springer}, \bibinfo{address}{London, UK}.
\newblock


\bibitem[\protect\citeauthoryear{Lindley, McBride, and McLaughlin}{Lindley
  et~al\mbox{.}}{2017}]%
        {lindley-et-al-2017}
\bibfield{author}{\bibinfo{person}{Sam Lindley}, \bibinfo{person}{Conor
  McBride}, {and} \bibinfo{person}{Craig McLaughlin}.}
  \bibinfo{year}{2017}\natexlab{}.
\newblock \showarticletitle{Do Be Do Be Do}. In \bibinfo{booktitle}{{\em
  Proceedings of the 44th ACM SIGPLAN Symposium on Principles of Programming
  Languages (POPL)}}. \bibinfo{publisher}{ACM}, \bibinfo{address}{New York, NY,
  USA}, \bibinfo{pages}{500--514}.
\newblock
\showDOI{%
\url{https://doi.org/10.1145/3009837}}


\bibitem[\protect\citeauthoryear{Lorenzen, White, Dolan, Hillerstr{\"o}m, and
  Lindley}{Lorenzen et~al\mbox{.}}{2024}]%
        {lorenzen-wdel-2024}
\bibfield{author}{\bibinfo{person}{Anton Lorenzen}, \bibinfo{person}{Leo
  White}, \bibinfo{person}{Stephen Dolan}, \bibinfo{person}{Daniel
  Hillerstr{\"o}m}, {and} \bibinfo{person}{Sam Lindley}.}
  \bibinfo{year}{2024}\natexlab{}.
\newblock \bibinfo{booktitle}{{\em Modal Types for {O}Caml Effect Systems}}.
\newblock \bibinfo{type}{{T}echnical {R}eport}.
  \bibinfo{institution}{University of Edinburgh}, \bibinfo{address}{Edinburgh,
  UK}.
\newblock


\bibitem[\protect\citeauthoryear{Lucassen and Gifford}{Lucassen and
  Gifford}{1988}]%
        {lucassen-g88}
\bibfield{author}{\bibinfo{person}{John~M. Lucassen} {and}
  \bibinfo{person}{David~K. Gifford}.} \bibinfo{year}{1988}\natexlab{}.
\newblock \showarticletitle{Polymorphic Effect Systems}. In
  \bibinfo{booktitle}{{\em Proceedings of the 15th ACM SIGPLAN-SIGACT Symposium
  on Principles of Programming Languages (POPL)}}. \bibinfo{publisher}{ACM},
  \bibinfo{address}{New York, NY, USA}, \bibinfo{pages}{47--57}.
\newblock
\showDOI{%
\url{https://doi.org/10.1145/73560.73564}}


\bibitem[\protect\citeauthoryear{Mac~Lane}{Mac~Lane}{1998}]%
        {maclane-1998}
\bibfield{author}{\bibinfo{person}{Saunders Mac~Lane}.}
  \bibinfo{year}{1998}\natexlab{}.
\newblock \bibinfo{booktitle}{{\em Categories for the Working Mathematician\/}
  (\bibinfo{edition}{2nd} ed.)}. \bibinfo{series}{Graduate Texts in
  Mathematics}, Vol.~\bibinfo{volume}{5}.
\newblock \bibinfo{publisher}{Springer}, \bibinfo{address}{New York, NY, USA}.
\newblock


\bibitem[\protect\citeauthoryear{Marino and Millstein}{Marino and
  Millstein}{2009}]%
        {marino-millstein-2009}
\bibfield{author}{\bibinfo{person}{Daniel Marino} {and} \bibinfo{person}{Todd
  Millstein}.} \bibinfo{year}{2009}\natexlab{}.
\newblock \showarticletitle{A Generic Type-and-Effect System}. In
  \bibinfo{booktitle}{{\em Proceedings of the 4th Workshop on Types in Language
  Design and Implementation (TLDI)}}. \bibinfo{publisher}{ACM},
  \bibinfo{address}{New York, NY, USA}, \bibinfo{pages}{39--50}.
\newblock
\showDOI{%
\url{https://doi.org/10.1145/1481861.1481868}}


\bibitem[\protect\citeauthoryear{Martin-L{\"o}f}{Martin-L{\"o}f}{1984}]%
        {martin-lof-1984}
\bibfield{author}{\bibinfo{person}{Per Martin-L{\"o}f}.}
  \bibinfo{year}{1984}\natexlab{}.
\newblock \bibinfo{booktitle}{{\em Intuitionistic Type Theory}}.
\newblock \bibinfo{publisher}{Bibliopolis}, \bibinfo{address}{Naples, Italy}.
\newblock


\bibitem[\protect\citeauthoryear{Moggi}{Moggi}{1991}]%
        {moggi-1991}
\bibfield{author}{\bibinfo{person}{Eugenio Moggi}.}
  \bibinfo{year}{1991}\natexlab{}.
\newblock \showarticletitle{Notions of Computation and Monads}.
\newblock \bibinfo{journal}{{\em Information and Computation\/}}
  \bibinfo{volume}{93}, \bibinfo{number}{1} (\bibinfo{year}{1991}),
  \bibinfo{pages}{55--92}.
\newblock
\showDOI{%
\url{https://doi.org/10.1016/0890-5401(91)90052-4}}


\bibitem[\protect\citeauthoryear{Nielson, Nielson, and Hankin}{Nielson
  et~al\mbox{.}}{1999}]%
        {nielson-1999}
\bibfield{author}{\bibinfo{person}{Flemming Nielson}, \bibinfo{person}{Hanne~R.
  Nielson}, {and} \bibinfo{person}{Chris Hankin}.}
  \bibinfo{year}{1999}\natexlab{}.
\newblock \bibinfo{booktitle}{{\em Principles of Program Analysis}}.
\newblock \bibinfo{publisher}{Springer}, \bibinfo{address}{Berlin, Heidelberg}.
\newblock


\bibitem[\protect\citeauthoryear{Nordstr{\"o}m, Petersson, and
  Smith}{Nordstr{\"o}m et~al\mbox{.}}{1990}]%
        {nordstrom-petersson-smith-1990}
\bibfield{author}{\bibinfo{person}{Bengt Nordstr{\"o}m}, \bibinfo{person}{Kent
  Petersson}, {and} \bibinfo{person}{Jan~M. Smith}.}
  \bibinfo{year}{1990}\natexlab{}.
\newblock \bibinfo{booktitle}{{\em Programming in {M}artin-L{\"o}f's Type
  Theory: An Introduction}}.
\newblock \bibinfo{publisher}{Oxford University Press},
  \bibinfo{address}{Oxford, UK}.
\newblock


\bibitem[\protect\citeauthoryear{Orchard, Lee, and Mellies}{Orchard
  et~al\mbox{.}}{2019}]%
        {orchard-le-2019}
\bibfield{author}{\bibinfo{person}{Dominic Orchard}, \bibinfo{person}{Vu Lee},
  {and} \bibinfo{person}{Pierre-Alain Mellies}.}
  \bibinfo{year}{2019}\natexlab{}.
\newblock \showarticletitle{Semiring-Based Modal Effects}. In
  \bibinfo{booktitle}{{\em Proceedings of the 2nd International Conference on
  Formal Structures for Computation and Deduction (FSCD)}} {\em
  (\bibinfo{series}{LIPIcs})}, Vol.~\bibinfo{volume}{131}.
  \bibinfo{publisher}{Schloss Dagstuhl}, \bibinfo{address}{Dagstuhl, Germany},
  \bibinfo{pages}{15:1--15:18}.
\newblock
\showDOI{%
\url{https://doi.org/10.4230/lipics.fscd.2019.21}}


\bibitem[\protect\citeauthoryear{Petricek, Orchard, and Mellies}{Petricek
  et~al\mbox{.}}{2014}]%
        {petricek-om-2014}
\bibfield{author}{\bibinfo{person}{Tomas Petricek}, \bibinfo{person}{Dominic
  Orchard}, {and} \bibinfo{person}{Pierre-Alain Mellies}.}
  \bibinfo{year}{2014}\natexlab{}.
\newblock \showarticletitle{Contextual Modal Type Theory}. In
  \bibinfo{booktitle}{{\em Proceedings of the 19th ACM SIGPLAN International
  Conference on Functional Programming (ICFP)}}. \bibinfo{publisher}{ACM},
  \bibinfo{address}{New York, NY, USA}, \bibinfo{pages}{207--219}.
\newblock
\showDOI{%
\url{https://doi.org/10.1145/2628136.2628156}}


\bibitem[\protect\citeauthoryear{Pierce}{Pierce}{2002}]%
        {pierce-2002}
\bibfield{author}{\bibinfo{person}{Benjamin~C. Pierce}.}
  \bibinfo{year}{2002}\natexlab{}.
\newblock \bibinfo{booktitle}{{\em Types and Programming Languages}}.
\newblock \bibinfo{publisher}{MIT Press}, \bibinfo{address}{Cambridge, MA,
  USA}.
\newblock


\bibitem[\protect\citeauthoryear{Pitts}{Pitts}{2000}]%
        {pitts-2000}
\bibfield{author}{\bibinfo{person}{Andrew~M. Pitts}.}
  \bibinfo{year}{2000}\natexlab{}.
\newblock \showarticletitle{Categorical Logic}.
\newblock In \bibinfo{booktitle}{{\em Handbook of Logic in Computer Science}},
  \bibfield{editor}{\bibinfo{person}{Samson Abramsky}, \bibinfo{person}{Dov~M.
  Gabbay}, {and} \bibinfo{person}{T.~S.~E. Maibaum}} (Eds.).
  Vol.~\bibinfo{volume}{5}. \bibinfo{publisher}{Oxford University Press},
  \bibinfo{address}{Oxford, UK}, \bibinfo{pages}{39--128}.
\newblock


\bibitem[\protect\citeauthoryear{Plotkin and Power}{Plotkin and Power}{2002}]%
        {plotkin-power-2002}
\bibfield{author}{\bibinfo{person}{Gordon~D. Plotkin} {and}
  \bibinfo{person}{John Power}.} \bibinfo{year}{2002}\natexlab{}.
\newblock \showarticletitle{Notions of Computation Determine Monads}. In
  \bibinfo{booktitle}{{\em Proceedings of the 5th International Conference on
  Foundations of Software Science and Computation Structures (FOSSACS)}} {\em
  (\bibinfo{series}{LNCS})}, Vol.~\bibinfo{volume}{2303}.
  \bibinfo{publisher}{Springer}, \bibinfo{address}{Berlin, Heidelberg},
  \bibinfo{pages}{342--356}.
\newblock
\showDOI{%
\url{https://doi.org/10.1007/3-540-45931-6_24}}


\bibitem[\protect\citeauthoryear{Reenskaug}{Reenskaug}{2010}]%
        {reenskaug-dci-2010}
\bibfield{author}{\bibinfo{person}{Trygve Reenskaug}.}
  \bibinfo{year}{2010}\natexlab{}.
\newblock \bibinfo{title}{{DCI}: The Architecture for Human-Centric
  Applications}.
\newblock \bibinfo{howpublished}{Artsima.com}.   (\bibinfo{year}{2010}).
\newblock
\showURL{%
\url{https://www.artima.com/articles/the-dci-architecture-a-new-vision-of-object-oriented-programming}}
\newblock
\shownote{Accessed: 2024-07-16.}


\bibitem[\protect\citeauthoryear{Reenskaug, Wold, and Lehne}{Reenskaug
  et~al\mbox{.}}{1996}]%
        {reenskaug-wold-lehne-1996}
\bibfield{author}{\bibinfo{person}{Trygve Reenskaug}, \bibinfo{person}{Per
  Wold}, {and} \bibinfo{person}{Ola~Aamodt Lehne}.}
  \bibinfo{year}{1996}\natexlab{}.
\newblock \bibinfo{booktitle}{{\em Working with Objects: The OORAM Software
  Engineering Method}}.
\newblock \bibinfo{publisher}{Prentice Hall}, \bibinfo{address}{Upper Saddle
  River, NJ, USA}.
\newblock
\showISBNx{978-0135162643}


\bibitem[\protect\citeauthoryear{Rutten}{Rutten}{2000}]%
        {rutten-2000}
\bibfield{author}{\bibinfo{person}{J.~J. M.~M. Rutten}.}
  \bibinfo{year}{2000}\natexlab{}.
\newblock \showarticletitle{Universal coalgebra: a theory of systems}.
\newblock \bibinfo{journal}{{\em Theoretical Computer Science\/}}
  \bibinfo{volume}{249}, \bibinfo{number}{1} (\bibinfo{year}{2000}),
  \bibinfo{pages}{3--80}.
\newblock
\showDOI{%
\url{https://doi.org/10.1016/s0304-3975(00)00056-6}}


\bibitem[\protect\citeauthoryear{Seely}{Seely}{1984}]%
        {seely-1984}
\bibfield{author}{\bibinfo{person}{Robert A.~G. Seely}.}
  \bibinfo{year}{1984}\natexlab{}.
\newblock \showarticletitle{Locally Cartesian Closed Categories and Type
  Theory}.
\newblock \bibinfo{journal}{{\em Mathematical Proceedings of the Cambridge
  Philosophical Society\/}} \bibinfo{volume}{95}, \bibinfo{number}{1}
  (\bibinfo{year}{1984}), \bibinfo{pages}{33--48}.
\newblock
\showDOI{%
\url{https://doi.org/10.1017/S0305004100061392}}


\bibitem[\protect\citeauthoryear{Tang, White, Dolan, Hillerstr{\"o}m, Lindley,
  and Lorenzen}{Tang et~al\mbox{.}}{2025}]%
        {tang-wdhll-2025}
\bibfield{author}{\bibinfo{person}{Wenhao Tang}, \bibinfo{person}{Leo White},
  \bibinfo{person}{Stephen Dolan}, \bibinfo{person}{Daniel Hillerstr{\"o}m},
  \bibinfo{person}{Sam Lindley}, {and} \bibinfo{person}{Anton Lorenzen}.}
  \bibinfo{year}{2025}\natexlab{}.
\newblock \showarticletitle{Modal Effect Types}.
\newblock \bibinfo{journal}{{\em Proceedings of the ACM on Programming
  Languages\/}} \bibinfo{volume}{9}, \bibinfo{number}{OOPSLA1}
  (\bibinfo{year}{2025}), \bibinfo{pages}{120}.
\newblock
\showDOI{%
\url{https://doi.org/10.1145/3734001}}


\bibitem[\protect\citeauthoryear{Turi and Plotkin}{Turi and Plotkin}{1997}]%
        {turi-plotkin-1997}
\bibfield{author}{\bibinfo{person}{Daniele Turi} {and}
  \bibinfo{person}{Gordon~D. Plotkin}.} \bibinfo{year}{1997}\natexlab{}.
\newblock \showarticletitle{Towards a mathematical operational semantics}. In
  \bibinfo{booktitle}{{\em Proceedings of LICS '97}}.
  \bibinfo{publisher}{IEEE}, \bibinfo{address}{Washington, DC, USA},
  \bibinfo{pages}{280--291}.
\newblock
\showDOI{%
\url{https://doi.org/10.1109/lics.1997.614955}}


\bibitem[\protect\citeauthoryear{Wadler and Thiemann}{Wadler and
  Thiemann}{2003}]%
        {wadler-thiemann-2003}
\bibfield{author}{\bibinfo{person}{Philip Wadler} {and} \bibinfo{person}{Peter
  Thiemann}.} \bibinfo{year}{2003}\natexlab{}.
\newblock \showarticletitle{The Marriage of Effects and Monads}.
\newblock \bibinfo{journal}{{\em ACM Transactions on Computational Logic\/}}
  \bibinfo{volume}{4}, \bibinfo{number}{1} (\bibinfo{year}{2003}),
  \bibinfo{pages}{1--32}.
\newblock
\showDOI{%
\url{https://doi.org/10.1145/601648.601650}}


\end{thebibliography}

\end{document}